\documentclass[11pt]{article}
\usepackage[utf8]{inputenc}
\usepackage{lmodern}
\usepackage[T1]{fontenc}
\usepackage[protrusion=true,expansion]{microtype}
\usepackage[fleqn,leqno]{amsmath}
\usepackage{amsfonts}
\usepackage{amssymb}
\usepackage{amsthm}
\usepackage{mathtools}
\usepackage{enumerate}
\usepackage{graphicx}
\usepackage{booktabs}
\usepackage{xcolor}
\usepackage{hyperref}
\hypersetup{
  colorlinks=true,
  linkcolor={blue!50!black},
  citecolor={blue!50!black},
  urlcolor={blue!50!black},
}

\usepackage{geometry}
\usepackage[longnamesfirst]{natbib}
\setcitestyle{citesep={,}}

\theoremstyle{plain}
\newtheorem{theorem}{Theorem}
\newtheorem*{theorem*}{Theorem}
\newtheorem{lemma}{Lemma}

\newtheorem{corollary}{Corollary}

\theoremstyle{definition}
\newtheorem*{definition}{Definition}
\newtheorem{assumption}{Assumption}
\newtheorem{example}{Example}

\theoremstyle{remark}
\newtheorem*{remark}{Remark}

\DeclareMathOperator{\E}{\mathrm{E}}
\DeclareMathOperator{\1}{\mathrm{1}}
\newcommand{\abs}[1]{\left\lvert#1\right\rvert}

\newcommand{\thetaM}{\theta_{\mathrm{M}}}
\newcommand{\thetaC}{\theta_{\mathrm{C}}}
\newcommand{\thetaEC}{\theta_{\mathrm{EC}}}
\newcommand{\thetaSV}{\theta_{\mathrm{SV}}}
\newcommand{\lambar}{\bar{\lambda}}
\newcommand{\lambarminus}{\tilde{\lambda}}
\newcommand{\VM}{\mathcal{V}_{M}}
\DeclareMathOperator{\spn}{sp}
\newcommand{\cF}{c_{F}}
\newcommand{\cFp}{c_{F}'}
\newcommand{\cFLogit}{1/4}
\newcommand{\cFpLogit}{\sqrt{3}/18}

\newcommand{\NSample}{100{,}000{,}000}
\newcommand{\NSampleWords}{100~million}
\newcommand{\PctUnique}{96.69\%}
\newcommand{\PctMultiple}{3.31\%}
\newcommand{\PctThreeEq}{3.30\%}
\newcommand{\PctFiveOrMoreEq}{0.0019\%}
\newcommand{\ThmOneRecall}{1.0\%}
\newcommand{\ThetaMMin}{0}
\newcommand{\ThetaMMax}{5}
\newcommand{\ThetaCMin}{-5}
\newcommand{\ThetaCMax}{0}
\newcommand{\ThetaECMin}{-5}
\newcommand{\ThetaECMax}{0}
\newcommand{\ThetaSVMin}{0}
\newcommand{\ThetaSVMax}{5}
\newcommand{\RhoMin}{0.01}
\newcommand{\RhoMax}{0.5}
\newcommand{\LamMin}{0.001}
\newcommand{\LamMax}{5}
\newcommand{\ThmTwoRecallOneFifth}{7.7\%}
\newcommand{\ThmTwoRecallThree}{81.4\%}
\newcommand{\ThmOneFrontier}{2.2}
\newcommand{\ThmTwoFrontier}{0.43}
\newcommand{\ThmTwoSliceN}{7.8~million}
\newcommand{\ThmTwoSliceHolds}{94.6\%}
\newcommand{\ThmTwoSliceFrontier}{1.19}
\newcommand{\MultShareRelevant}{16\%}
\newcommand{\NMultiSeeds}{3{,}305{,}491}
\newcommand{\MedianDuoSpread}{0.128}
\newcommand{\NinetyDuoSpread}{0.337}
\newcommand{\MedianFOneSpread}{0.810}
\newcommand{\NinetyFOneSpread}{0.987}
\newcommand{\PctCanonicalStability}{99.49\%}
\newcommand{\PctUniqueStable}{99.999\%}
\newcommand{\MedianPSFlowSpread}{0.164}
\newcommand{\MedianFirmValueSpread}{0.238}
\newcommand{\ZeroStablePct}{0.45\%}
\newcommand{\ZeroStableMedRadius}{1.21}
\newcommand{\ZeroStableMedRho}{0.02}
\newcommand{\ZeroStableMedLambda}{0.15}
\newcommand{\ZeroStableMedComp}{3.1}
\newcommand{\MCReps}{500}
\newcommand{\MCMarkets}{200}
\newcommand{\MCHorizon}{25}
\newcommand{\MCThetaCFar}{\ensuremath{-2}}
\newcommand{\MCThetaCNear}{\ensuremath{-3}}
\newcommand{\MCThetaCMultiple}{\ensuremath{-4}}
\newcommand{\NPLRadiusStable}{0.065}
\newcommand{\NPLRadiusUnstable}{1.463}
\newcommand{\NPLRadiusNear}{0.968}
\newcommand{\CTNPLNearConverged}{99.80\%}
\newcommand{\CTNPLNearMedianIter}{119}
\newcommand{\NPLMaxStages}{2000}
\newcommand{\NPLTolExp}{-6}
\newcommand{\MCThetaCBifurcation}{\ensuremath{-3.1}}
\newcommand{\NCfSeeds}{10{,}000}
\newcommand{\CfMultiShareQuarter}{86\%}
\newcommand{\CfMultiShareFull}{53\%}
\newcommand{\CfOffsetN}{9{,}090}
\newcommand{\CfOffsetThmShare}{61\%}
\newcommand{\CfOffsetUniqueShare}{78\%}

\title{Multiplicity and Uniqueness of Equilibria in Continuous-Time Dynamic Discrete Games}
\author{Jason R. Blevins and Youngjae Jeong \\ The Ohio State University}
\date{October 24, 2026}

\begin{document}

\maketitle

\begin{abstract}
We consider multiplicity and uniqueness of equilibria in
continuous-time dynamic discrete games where players move
sequentially according to independent Poisson processes.
With multiple equilibria, the structural model is incomplete: the
reduced form and counterfactual predictions depend on how equilibria
are selected.
We provide closed-form
sufficient conditions for uniqueness of Markov perfect equilibrium
in this class of games.
The conditions are semiparametric and can be verified from model
primitives.
To measure the coverage of the conditions and characterize multiplicity more broadly,
we carry out a large-scale numerical search for equilibria of a dynamic
duopoly model of entry and exit.
Although uniqueness is the norm in our broad search, we find that
multiplicity concentrates in economically relevant regions: where the
market cannot profitably sustain both firms, where firms are very
patient, and at intermediate decision frequencies that balance
strategic interaction and commitment.
When multiplicity is present, we typically find two stable asymmetric
equilibria and one unstable symmetric equilibrium.
Monte Carlo and counterfactual simulations show this structure matters
for empirical work.
A full-solution estimator that assumes uniqueness can converge to the
unstable equilibrium, even when the data were generated by a stable
one, while a two-step estimator recovers the correct equilibrium.
\end{abstract}

\noindent\textbf{Keywords:}
Stochastic games, continuous-time games, dynamic discrete games,
Markov perfect equilibrium, equilibrium uniqueness, equilibrium multiplicity,
equilibrium selection.

\bigskip
\noindent\textbf{JEL Classification:}
C57, C62, C63, C73, L13.

\section{Introduction}

Multiple equilibria complicate both estimation of dynamic
discrete games and interpretation of counterfactuals.
One motivation for specifying such games in continuous
time---where decision times arrive according to independent
Poisson processes, as in the framework of 
\citet*{abbe-2016} (henceforth ABBE)---is that players move
asynchronously with probability one.
This serves the dual purposes of removing simultaneous moves as a
source of multiplicity and, because state transitions occur
sequentially, reducing the computational burden of calculating players'
expectations over successor states \citep{doraszelski-judd-2012}.
However, removing simultaneity does not fully eliminate
multiplicity, and so continuous-time models, like their discrete-time
counterparts, can exhibit multiple Markov perfect equilibria (MPE).
This possibility led ABBE and \cite{blevins-kim-2024} to
propose two-step estimation methods that condition on the
equilibrium played in the data.
Direct theoretical analysis of multiplicity in dynamic games is
complicated by the fact that players' continuation values depend on
rivals' continuation values in a simultaneous system of nonlinear
equations.
In this paper, we derive what are, to our knowledge, the first
closed-form sufficient conditions for uniqueness of MPE in
continuous-time dynamic discrete games.
Without strong restrictions on the structure of the game, the
analytical conditions we can obtain are conservative, so we turn to a
large-scale numerical search to determine where multiplicity occurs in
the parameter space, understand the economic sources of multiplicity,
and quantify its consequences for estimation and counterfactual
analysis.

The model is incomplete due to multiple equilibria, which
complicates structural estimation \citep{tamer-2003}.
In full-solution estimation, the researcher needs to either compute all
equilibria for each candidate parameter vector or impose an equilibrium
selection rule
(see \citealp{depaula-2013},
\citealp{aguirregabiria-collardwexler-ryan-2021}, and
\citealp{otsu-pesendorfer-2023} for surveys).
The literature typically addresses multiplicity by assuming that
data are generated from a single equilibrium and then either
using two-step or iterative estimation methods
\citep{aguirregabiria-mira-2007, bajari-benkard-levin-2007,
  pesendorfer08asymptotic} that build on conditional choice probability
methods for single agent models \citep{hotz93conditional},
or imposing the equilibrium conditions as constraints during
estimation \citep{su-judd-2012, egesdal15dynamic}.
The convergence of iterative methods is known to depend on the
stability of the mapping used by the estimator
\citep{pesendorfer-schmidt-dengler-2010, kasahara-shimotsu-2012, su-2014}.

Multiplicity also confounds counterfactual analysis: even if a unique
equilibrium is played in the data, it is unclear which equilibrium
will be played under counterfactual parameter values.
The predicted policy effects, therefore, depend on an equilibrium
selection assumption \citep{aguirregabiria-2012, jun-pinkse-2020}.
Sufficient conditions for uniqueness can mitigate these concerns.

Our conditions are semiparametric: they do not require a parametric
distribution for the choice-specific errors.
Our first result (Theorem~\ref{thm:contraction}) applies to the ABBE
class of dynamic games with any finite number of players, any finite
state space, and binary actions.
It provides a bound on the discount rate $\rho$ in terms of structural
primitives that ensures the Bellman system operator is a contraction
on a set of value functions that we show contains every equilibrium.

Our second result (Theorem~\ref{thm:symmetric-uniqueness}) establishes
uniqueness for a dynamic duopoly entry/exit model with symmetric firms
and offsetting switching costs (equal in magnitude, with opposite sign).
The equilibrium system reduces to two equations, allowing us to state a
less restrictive contraction condition.
When the competitive effect $\thetaC$ is moderate relative to the
move arrival rate $\lambda$ (with logit errors,
$|\thetaC| < 3.68\lambda$), the equilibrium is unique even for
arbitrarily patient firms (Corollary~\ref{cor:discount-free}).
The reduction and discount-free conditions exploit features of the
continuous-time model with asynchronous moves: the equilibrium system
is linear in rivals' choice probabilities, and firms lose their
ability to commit as moves become more frequent.
Both conditions are stated in terms of primitives, so applied
researchers can verify them to check for uniqueness in estimation
or counterfactual analysis.
In Section~\ref{sec:counterfactual}, we apply
Theorem~\ref{thm:symmetric-uniqueness} to a counterfactual entry
subsidy that eliminates sunk costs.

We investigate the range of parameters for which our conditions hold
and further characterize uniqueness and multiplicity by numerically
searching for equilibria in the duopoly model over \NSampleWords{}
parameter vectors.
We find that uniqueness is common even beyond the regions covered by
our theorems, but multiplicity appears in economically relevant cases
such as when firms are patient, competition is strong, or entry costs
are high relative to scrap values.
The multiplicity rate reaches \MultShareRelevant{} among cases where
duopoly is unprofitable, firms discount at standard rates, and moves
are not too infrequent.
It is more likely when there is a balance between strategic
interaction (sufficiently frequent moves) and commitment (sufficiently
infrequent moves).

Where multiplicity occurs, we investigate its consequences in terms of
economic outcomes, estimation, and counterfactuals.
We typically find two stable asymmetric mirror-image equilibria and
an unstable symmetric equilibrium.
The long-run duopoly share varies by \MedianDuoSpread{} across these
three equilibria in the median case.
In a Monte Carlo experiment, a full-solution maximum likelihood
estimator that assumes uniqueness can converge to an unstable
equilibrium even when the data were generated by a stable one.
In contrast, two-step estimators initialized with first-stage
estimates of choice probabilities select the equilibrium that
generated the data, and their iterated versions do so when the
iteration is contractive.
Finally, equilibrium selection can have a large effect on
counterfactual predictions involving firm identity, as we
demonstrate through an entry subsidy.

Although we focus on dynamic games, equilibrium multiplicity has
been studied more thoroughly in static games.
Beginning with the foundational work of
\cite{bresnahan91empirical} and \cite{berry-1992},
the literature on empirical entry games has focused primarily on
equilibrium multiplicity in static models
(e.g., \citealp{tamer-2003}, \citealp{jia-2008},
\citealp{ciliberto-tamer-2009}, and \citealp{sweeting-2009}).
More recently, \cite{espin-sanchez-parra-wang-2023} study static entry
games with private information and, based on a measure of competitive
strength that orders firms, provide sufficient conditions for a unique
equilibrium.

In dynamic games, a number of computational papers have documented
the existence of multiple equilibria in discrete-time models.
\cite{besanko10learning} propose a model of learning-by-doing with
organizational forgetting and find that multiplicity is concentrated
where the inflow of know-how from learning balances the outflow from
forgetting.
\cite{iskhakov16recursive} use a recursive lexicographic search
algorithm to find all MPE of directional dynamic games---games in
which some component of the state evolves irreversibly---and show that
discrete-time Bertrand models can support hundreds of millions of
equilibria.

Theoretical work has characterized equilibria in related models,
including discrete-time finite discounted stochastic games.
\cite{haller-lagunoff-2000} prove that such games generically
have a finite number of MPE, and
\cite{herings-peeters-2004} refine this to an odd number.
\cite{doraszelski-escobar-2010} show that such equilibria are
generically regular.
\cite{doraszelski-satterthwaite-2010} introduce privately observed,
randomly drawn entry costs and scrap values in the
\cite{ericson-pakes-1995} framework to ensure existence of
an MPE in cutoff entry/exit strategies.

In certain classes of discrete-time dynamic models, sufficient
conditions for uniqueness are known.
In a dynamic duopoly model of price competition with
learning-by-doing, \cite{cabral-riordan-1994} prove that the MPE is
unique, but show that adding avoidable fixed costs and irreversible
exit can result in multiplicity.
\cite{besanko10learning} extend this result to a model
with organizational forgetting, proving uniqueness in the two limiting
cases in which forgetting is either absent or certain.
In an oligopoly entry/exit model in which firms make survival decisions
sequentially (in order of seniority), \cite{abbring-campbell-2010}
establish existence of an essentially unique symmetric MPE within
the class of ``last-in first-out'' strategies.
\cite{abbring-campbell-tilly-yang-2018} instead assume simultaneous
survival decisions and obtain an essentially unique symmetric MPE
without last-in first-out restrictions, under the assumption of
homogeneous firms.
Since these results rely on special structures or equilibrium
refinements, they are not directly applicable to the general
continuous-time models considered here.

The remainder of this paper is organized as follows.
Section~\ref{sec:model} introduces the class of continuous-time games
we consider and our leading example: a duopoly entry and exit model.
Section~\ref{sec:uniqueness} establishes our main theoretical results,
providing sufficient conditions for equilibrium uniqueness.
Section~\ref{sec:numerical} develops our computational approach and
presents numerical results for the structural parameters and economic
quantities that determine regions of multiplicity and uniqueness.%
\footnote{Complete replication code is available at
  \url{https://github.com/jrblevin/ctunique-replication}.}
Section~\ref{sec:estimation} discusses the implications of
equilibrium multiplicity for structural estimation and counterfactual
analysis.
Section~\ref{sec:conclusion} concludes.

\section{Continuous-Time Dynamic Discrete Games}
\label{sec:model}

In this section we briefly review the class of continuous-time dynamic
discrete games introduced by ABBE and present the two-player
entry/exit model that will serve as our leading example throughout
the paper.

\subsection{General Framework}
\label{sec:framework}

There are $N$ players in the model, indexed by $i = 1, \dots, N$.
Each player makes discrete choices over an infinite time horizon at
stochastically determined decision epochs that arrive according to
independent Poisson processes with rates $\lambda_i$.
Therefore, the move times of players are almost surely sequential
rather than simultaneous.
There is a commonly observed state vector $x$, which is a member of
the finite state space $\mathcal{X} \equiv \{x_1, \dots, x_K\}$.
State transitions occur endogenously as a result of the actions
of players.

At a decision epoch for player $i$ in state $x_k$, the player observes
a vector of private choice-specific errors $\varepsilon_{ik}(t)$ and
chooses a discrete action $j$ from the choice set $\mathcal{J}$.
In this paper, we limit consideration to binary-choice models with
$\mathcal{J} = \{ 0, 1 \}$.
Action $j = 0$ represents continuation, with no corresponding state change,
while $j = 1$ causes the state to jump immediately and deterministically
from $x_k$ to $x_{\ell(i,k)}$,
where $\ell(i, \cdot) \colon \{1, \dots, K\} \to \{1, \dots, K\}$
denotes player $i$'s switching state map.
We encode this map as a deterministic switching matrix $S_i$ with
$[S_i]_{k, \ell(i,k)} = 1$ for each row $k$ and zeros elsewhere.

While the model remains in some state $x_k$ between state transitions,
players continuously accrue flow payoffs $u_{ik}$.
At a decision epoch $t$ where player $i$ chooses action $j$ in state $x_k$,
that player receives a lump-sum payoff
\begin{equation*}
\psi_{ijk} + \varepsilon_{ijk}(t),
\end{equation*}
where $\psi_{ijk}$ is a deterministic, instantaneous choice-specific
payoff and $\varepsilon_{ijk}(t)$ is the corresponding choice-specific
component of the error vector.
We normalize the payoff for the continuation action $j=0$ by setting
$\psi_{i0k} = 0$ for all $i$ and $k$ (Assumption~\ref{assp:dn} below).
Therefore, with binary actions we simply drop the $j$ subscript and write
$\psi_{ik} \equiv \psi_{i1k}$ for the switching cost and
$\psi_i = (\psi_{i1}, \dots, \psi_{iK})^\top$ for the vector of
switching costs across states.

The errors are i.i.d.\ across epochs (Assumption~\ref{assp:pi} below)
and do not directly enter other players' continuation values, so it is
irrelevant whether rivals observe past error vectors.
We assume each player observes only its own errors (i.e., 
the errors are private information).
Since the model is stationary, we omit the time argument and
write $\varepsilon_{ijk}$ and $\varepsilon_{ik}$ to denote a
choice-specific error component and vector, respectively,
at a generic decision epoch.

We now formally state the main assumptions of the model, closely following ABBE.

\begin{assumption}[Discrete States]
\label{assp:dx}
The state space is finite: $K \equiv |\mathcal{X}| < \infty$.
\end{assumption}

\begin{assumption}[Discount Rates]
\label{assp:rho}
Players use a common discount rate $\rho \in (0,\infty)$.
\end{assumption}

\begin{assumption}[Move Arrival Rates]
\label{assp:rates}
Move arrival times follow independent Poisson processes with rate parameters
$\lambda_i$ satisfying $0 < \lambda_i < \infty$ for $i = 1, \dots, N$.
\end{assumption}

\begin{assumption}[Bounded Payoffs]
\label{assp:u}
Flow payoffs and instantaneous payoffs are finite:
$|u_{ik}| < \infty$ and $|\psi_{ijk}| < \infty$
for $i = 1, \dots, N$, $j = 0, 1$, and $x_k \in \mathcal{X}$.
\end{assumption}

\begin{assumption}[Costless Continuation]
\label{assp:dn}
For $i = 1, \dots, N$ and for all $x_k \in \mathcal{X}$, $\psi_{i0k} = 0$.
\end{assumption}

\begin{assumption}[Distribution of Private Information]
\label{assp:pi}
The error vectors
$\varepsilon_{ik}(t) \equiv (\varepsilon_{i0k}(t), \varepsilon_{i1k}(t))$
are i.i.d.\ across players $i$, states $x_k$, and decision epochs $t$.
The distribution is absolutely continuous with respect to
Lebesgue measure, with finite first moments and support
equal to $\mathbb{R}^{|\mathcal{J}|}$.
\end{assumption}

Under Assumption~\ref{assp:pi}, the error difference
$\varepsilon_{i0k} - \varepsilon_{i1k}$ is absolutely continuously
distributed with support $\mathbb{R}$, and we let $F$ and $f$ denote
its CDF and density.

\begin{assumption}[Bounded Error-Difference Density]
\label{assp:bounded-density}
The density $f$ is bounded: $\cF \equiv \sup_z f(z) < \infty$.
\end{assumption}

\begin{assumption}[Exchangeable Errors with Lipschitz Density]
\label{assp:exchangeable}
The pair $(\varepsilon_{i0k}, \varepsilon_{i1k})$ is exchangeable
and the density $f$ is Lipschitz continuous with constant $\cFp$.
\end{assumption}

\begin{assumption}[Type I Extreme Value Errors]
\label{assp:ev1}
For each $i$ and $k$, the choice-specific errors $\varepsilon_{i0k}$
and $\varepsilon_{i1k}$ are independent and each follows the type I
extreme value distribution with mean $0$ and scale parameter $1$.
\end{assumption}

\begin{remark}
Assumption~\ref{assp:ev1} implies Assumptions~\ref{assp:bounded-density}
and~\ref{assp:exchangeable} with $\cF = \cFLogit{}$ and
$\cFp = \cFpLogit{}$ (Lemma~\ref{lem:logit-lipschitz} in
Appendix~\ref{sec:proofs}).
Although this assumption is standard in applied work and yields
closed-form choice probabilities and surplus functions, our uniqueness
results in Section~\ref{sec:uniqueness} do not require it.
We state it for use in examples and in the numerical work of
Section~\ref{sec:numerical}.
\end{remark}

This framework is a special case of the ABBE model, with no exogenous
state changes and binary actions.
The first restriction is for simplicity only.
Our main uniqueness condition holds if exogenous state changes are
allowed (see the remarks following Theorem~\ref{thm:contraction}).
Generalizing to more than two choices is a more substantive extension
that we leave for future work.

\subsection{Value Functions and Equilibrium}
\label{sec:value-functions}

A \emph{stationary Markov strategy} for player $i$ is a function
$\delta_i \colon \mathcal{X} \times \mathbb{R}^{|\mathcal{J}|} \to \mathcal{J}$
that assigns an action to each combination of the state $x_k$ and
realized error vector $\varepsilon_{ik}$.
A strategy $\delta_i$ induces conditional choice probabilities (CCPs)
\begin{equation}
\label{eq:ccp-from-policy}
\sigma_{ijk} \equiv \Pr\left[\delta_i(x_k, \varepsilon_{ik}) = j \mid x_k \right],
\end{equation}
where the probability is taken over the distribution of $\varepsilon_{ik}$
in Assumption~\ref{assp:pi}.
Since actions are binary, we track only the switching probabilities
$\sigma_{ik} \equiv \sigma_{i1k}$, with
$1 - \sigma_{ik} = \sigma_{i0k}$ being the probability of continuation.
We denote profiles of CCPs and strategies by
$\sigma = (\sigma_1, \dots, \sigma_N)$ and
$\delta = (\delta_1, \dots, \delta_N)$,
respectively, and write $\sigma_{-i}$ for the collection of CCPs of
player $i$'s rivals.

Let $V_{ik}$ denote the value function for
player $i$ in state $x_k$ given the rivals' switching
probabilities $\sigma_{-i}$ (we suppress the dependence of $V_{ik}$
on $\sigma_{-i}$, which is implicit):
\begin{multline}
\label{eq:bellman}
V_{ik} = \frac{1}{\rho + \eta}
\left[
    u_{ik}
    + \sum_{m \neq i} \lambda_m \lbrace \sigma_{mk} V_{i,\ell(m,k)} + (1-\sigma_{mk}) V_{ik} \rbrace
    \right. \\ \left.
    + \lambda_i \E_{\varepsilon} \max \{\psi_{ik} + \varepsilon_{i1k} + V_{i,\ell(i,k)}, \varepsilon_{i0k} + V_{ik} \}
\right],
\end{multline}
where $\eta \equiv \sum_{m=1}^{N} \lambda_m$ is the total move
arrival rate, so that values are discounted at the effective
rate $\rho + \eta$.
The first term inside the square brackets, $u_{ik}$, is the flow
payoff in state $x_k$.
The middle terms reflect the possibility that rival $m$ receives a
decision opportunity (at rate $\lambda_m$) and either switches (with
probability $\sigma_{mk}$), moving the state to $x_{\ell(m,k)}$, or
continues (with probability $1 - \sigma_{mk}$), leaving the state
unchanged.
These middle terms are linear in rival players' choice probabilities
since at most one player moves per decision epoch.
In a simultaneous-move discrete-time model, however, the continuation
value of player $i$ is an expectation over the joint actions of all
rivals, so rival choice probabilities enter inside the expected maximum
over player $i$'s own actions.
The final term is player $i$'s own expected option value when it
receives a decision opportunity, where $\psi_{ik}$ is the switching
cost and the expectation $\E_{\varepsilon}$ is taken over the
choice-specific errors.

Let $E_{ik}(V_i)$ denote McFadden's social surplus function
\citep{mcfadden-1978-gev, mcfadden81econometric}, which gives the
expected maximum utility for player $i$ in state $x_k$:
\begin{align*}
    E_{ik}(V_i) &\equiv \E_{\varepsilon} \max \{\psi_{ik} + \varepsilon_{i1k} + V_{i,\ell(i,k)}, \varepsilon_{i0k} + V_{ik} \} \\
    &= L(\psi_{ik} + V_{i,\ell(i,k)},\, V_{ik})\\
    &= V_{ik} + e(\psi_{ik} + V_{i,\ell(i,k)} - V_{ik}),
\end{align*}
where
\begin{equation*}
  L(z_1, z_2) \equiv \E_{\varepsilon} \max\{z_1 + \varepsilon_{i1k},\, z_2 + \varepsilon_{i0k}\}
  \quad\text{and}\quad
  e(z) \equiv L(z, 0).
\end{equation*}
Here $L(z_1, z_2)$ is the surplus from a binary choice with
deterministic payoff components $z_1$ and $z_2$, and $e(z)$
is shorthand we use throughout the analysis that follows.
The final equality for $E_{ik}$ holds because adding a constant to
both payoffs shifts the surplus by that amount.
Both functions $L$ and $e$ are finite under Assumption~\ref{assp:pi}
and do not depend on $(i, k)$ since the errors are identically
distributed across players and states.
Collecting these values into a vector $E_i(V_i)$, with
$V_i = (V_{i1}, \dots, V_{iK})^\top$, we can write
$E_i(V_i) = V_i + e(\psi_i + S_i V_i - V_i)$, where $e$ applies
componentwise.

Let $b_i \colon \mathcal{X} \times \mathbb{R}^{|\mathcal{J}|} \times
[0,1]^{(N-1)K} \to \mathcal{J}$ denote the \emph{best-response function}
for player $i$ in state $x_k$ upon observing the error vector
$\varepsilon_{ik}$ given the rivals' CCPs $\sigma_{-i}$.
Player $i$ chooses to switch ($j=1$) if and only if the choice-specific
value of switching is at least that of continuing:
\begin{equation}
\label{eq:best-response}
b_i(x_k, \varepsilon_{ik}, \sigma_{-i})
= \1\bigl\{
    \psi_{ik} + \varepsilon_{i1k} + V_{i,\ell(i,k)}
    \geq \varepsilon_{i0k} + V_{ik}
\bigr\},
\end{equation}
where $V_{ik}$ and $V_{i,\ell(i,k)}$ are the values induced by $\sigma_{-i}$
through~\eqref{eq:bellman}.

\begin{definition}
\label{def:mpe}
A \emph{Markov perfect equilibrium} (MPE) is a profile of stationary
Markov strategies $\delta^* = (\delta_1^*, \dots, \delta_N^*)$ such that, for
each player $i$ and each $(x_k, \varepsilon_{ik})$,
\begin{equation*}
\delta_i^*(x_k, \varepsilon_{ik}) = b_i(x_k, \varepsilon_{ik}, \sigma_{-i}^*),
\end{equation*}
where $\sigma^* = (\sigma_1^*, \dots, \sigma_N^*)$ is the CCP profile
derived from $\delta^*$ via~\eqref{eq:ccp-from-policy}.
\end{definition}

An MPE requires that each player's strategy is a best response
to the rivals' CCPs and that the CCPs are themselves consistent with
the strategies.
Because player $i$'s payoffs depend on the rivals only through the
distribution of the rivals' actions in each state, the strategies
$\delta^*$ enter the best-response problem only through their induced
CCPs $\sigma^*$.
The equilibrium can therefore be equivalently characterized in
distributional-strategy form \citep{milgrom-weber-1985}.
Following \cite{aguirregabiria-mira-2007} and ABBE,
we adopt this characterization and treat MPE as a
fixed point of the equilibrium condition in CCPs $\sigma$.
Applying~\eqref{eq:best-response}, the equilibrium switching
probability is
\begin{equation}
\label{eq:ccp}
\sigma_{ik}
= \Pr\bigl[
    \psi_{ik} + \varepsilon_{i1k} + V_{i,\ell(i,k)}
    \geq \varepsilon_{i0k} + V_{ik}
\bigr]
= F(\psi_{ik} + V_{i,\ell(i,k)} - V_{ik}),
\end{equation}
where $F$ is the CDF of the error difference
$\varepsilon_{i0k} - \varepsilon_{i1k}$.

\begin{example}[Logit]
\label{ex:logit}
Under Assumption~\ref{assp:ev1}, the error difference
$\varepsilon_{i0k} - \varepsilon_{i1k}$ follows the standard
logistic distribution, so $F = \Lambda$, where
$\Lambda(z) \equiv \exp(z)/(1 + \exp(z)) = 1/(1 + \exp(-z))$ is
the logistic CDF, and the switching probabilities~\eqref{eq:ccp}
take the familiar logit form.
The surplus function has the log-sum-exp closed form
$E_i(V_i) = \log[\exp(\psi_i + S_i V_i) + \exp(V_i)]$, with the
exponential and logarithm applied componentwise.
\end{example}

The values $V_{ik}$ and $V_{i,\ell(i,k)}$ in~\eqref{eq:ccp} are induced
by the rivals' CCPs $\sigma_{-i}$. We stack the switching
probabilities~\eqref{eq:ccp} for each player $i$ across states
$x_k$.
A fixed point $\sigma^* = (\sigma_1^*, \dots, \sigma_N^*)$ of this
system characterizes an MPE.
Under Assumptions~\ref{assp:dx}--\ref{assp:pi}, at least one such
fixed point exists, by Theorem~1 of \cite{ctgames}.

The equilibrium conditions can be expressed in terms of either CCPs
or value functions.
Given an equilibrium CCP profile $\sigma$, each player's value
function is uniquely determined as the solution of a linear system
with a nonsingular coefficient matrix, by Theorem~2 of
\cite{ctgames}, which generalizes Proposition~6 of ABBE.
Conversely, given the value functions, the CCPs are given directly
by~\eqref{eq:ccp}.
The value functions are therefore sufficient to determine the
CCPs, and the two fixed-point problems have equivalent solutions:
$\sigma^*$ is an equilibrium CCP profile if and only
if the associated value functions solve the Bellman
system obtained by substituting~\eqref{eq:ccp}
into~\eqref{eq:bellman}.
In the remainder of the paper we therefore characterize equilibria
as solutions of this value-function system.

Substituting the explicit forms of the choice probabilities and social
surplus functions, we can express the equilibrium system of Bellman
equations entirely in terms of value functions.
Player $m$'s CCPs induce the $K \times K$ transition probability matrix
\begin{equation}
\label{eq:Sigma-general}
\Sigma_m(V_m) \equiv I + D_m(V_m)(S_m - I),
\end{equation}
where $D_m(V_m) \equiv \operatorname{diag}(\sigma_{m1}, \dots, \sigma_{mK})$
holds the switching probabilities~\eqref{eq:ccp}.
Row $k$ of $\Sigma_m(V_m)$ places probability $\sigma_{mk}$ on the
switching destination $\ell(m,k)$ and probability $1 - \sigma_{mk}$ on
$k$, so each $\Sigma_m(V_m)$ is a stochastic matrix.
The Bellman equation~\eqref{eq:bellman} for player $i$ can then be
stated simply in vectorized form as
\begin{equation}
\label{eq:bellman-general}
V_i = \frac{1}{\rho + \eta} \Bigl[ u_i + \sum_{m \neq i} \lambda_m \Sigma_m(V_m) V_i + \lambda_i E_i(V_i) \Bigr].
\end{equation}
Defining $V \equiv (V_1^\top, \dots, V_N^\top)^\top \in \mathbb{R}^{NK}$,
the equilibrium system can be written compactly as
\begin{equation}
\label{eq:bellman-system}
V = T\left( V \right),
\end{equation}
where $T \colon \mathbb{R}^{NK} \to \mathbb{R}^{NK}$ is the
\emph{Bellman system operator} defined by
stacking~\eqref{eq:bellman-general} over players $i = 1, \dots, N$.

\subsection{A Dynamic Duopoly Model of Entry and Exit}
\label{sec:duopoly}

Our leading example, to which our second uniqueness condition and our
numerical work apply, is a two-player instance of this framework in
an entry/exit setting.
When active, the firms operate in the same product market, providing
the same good or service.

The state of the market is represented by a state vector
$x \in \mathcal{X} \equiv \{ (0,0), (0,1), (1,0), (1,1) \}$, where the first
component represents the entry status of firm~1---with 0 denoting inactive
and 1 denoting active---and the second component represents
the entry status of firm~2.
It will be convenient to index the states by integers as $x_k$
for $k = 1, 2, 3, 4$ where
$x_1 = (0,0)$, $x_2 = (0,1)$, $x_3 = (1,0)$, and $x_4 = (1,1)$.

For each firm, the switching action ($j = 1$) toggles its own
activity status: an inactive firm enters the market and an active
firm exits the market.
The continuation action ($j = 0$) leaves the firm's status---and
hence the state---unchanged.
The resulting transitions are shown in Table~\ref{tab:transitions}.

\begin{table}[t!]
\centering
\begin{tabular}{cccccccc}
\hline
\multicolumn{2}{c}{Initial State} && \multicolumn{2}{c}{Decision} && \multicolumn{2}{c}{New State} \\
$k$ & $x_k$ && Firm $i$ & Action $j$ &&  $\ell(i,k)$ & $x_{\ell(i,k)}$ \\
\hline
1 & $(0,0)$ && 1 & 1 && 3 & $(1,0)$ \\
2 & $(0,1)$ && 1 & 1 && 4 & $(1,1)$ \\
3 & $(1,0)$ && 1 & 1 && 1 & $(0,0)$ \\
4 & $(1,1)$ && 1 & 1 && 2 & $(0,1)$ \\
\hline
1 & $(0,0)$ && 2 & 1 && 2 & $(0,1)$ \\
2 & $(0,1)$ && 2 & 1 && 1 & $(0,0)$ \\
3 & $(1,0)$ && 2 & 1 && 4 & $(1,1)$ \\
4 & $(1,1)$ && 2 & 1 && 3 & $(1,0)$ \\
\hline
\end{tabular}
\caption{State transitions for switching actions ($j=1$) for each player $i=1,2$. Continuation actions ($j=0$) result in no state change.}
\label{tab:transitions}
\end{table}

In terms of the switching matrices defined in
Section~\ref{sec:framework}, the transitions in
Table~\ref{tab:transitions} correspond to
\begin{equation*}
  S_1 = \begin{bmatrix}
    0 & 0 & 1 & 0 \\
    0 & 0 & 0 & 1 \\
    1 & 0 & 0 & 0 \\
    0 & 1 & 0 & 0 \\
  \end{bmatrix}
  \quad
  \text{and}
  \quad
  S_2 = \begin{bmatrix}
    0 & 1 & 0 & 0 \\
    1 & 0 & 0 & 0 \\
    0 & 0 & 0 & 1 \\
    0 & 0 & 1 & 0 \\
  \end{bmatrix}.
\end{equation*}

The flow payoffs for the two firms can be
represented by vectors as follows:
\begin{equation}
    u_{1} = \begin{pmatrix}
        0 \\
        0 \\
        \thetaM \\
        \thetaM + \thetaC \\
    \end{pmatrix}
    \text{ and }
    u_{2} = \begin{pmatrix}
        0 \\
        \thetaM \\
        0 \\
        \thetaM + \thetaC \\
    \end{pmatrix}
    \text{ for }
    \begin{array}{c}
        x_1 = (0,0) \\
        x_2 = (0,1) \\
        x_3 = (1,0) \\
        x_4 = (1,1) \\
    \end{array}.
\end{equation}

We assume $\thetaM > 0$ and $\thetaC < 0$ in the duopoly model:
$\thetaM$ is the flow profit from operating in the market as a
monopolist, and $\thetaC$ is the change in flow profit from operating
in the market as a duopolist after the entry of the rival firm.

The switching-cost vectors are
\begin{equation}
    \label{eq:switching_costs}
    \psi_{1} = \begin{pmatrix}
        \thetaEC \\
        \thetaEC \\
        \thetaSV \\
        \thetaSV \\
    \end{pmatrix}
    \text{ and }
    \psi_{2} = \begin{pmatrix}
        \thetaEC \\
        \thetaSV \\
        \thetaEC \\
        \thetaSV \\
    \end{pmatrix}
    \text{ for }
    \begin{array}{c}
        x_1 = (0,0) \\
        x_2 = (0,1) \\
        x_3 = (1,0) \\
        x_4 = (1,1) \\
    \end{array}.
\end{equation}
Here $\thetaEC$ denotes the cost of entry (switching from 0 to 1) and
$\thetaSV$ denotes the scrap value received when the firm exits
(switching from 1 to 0).

For the duopoly entry/exit model, $N = 2$ and $K = 4$, so
\eqref{eq:bellman-system} is the full eight-equation equilibrium
system.
For firm~1, equation~\eqref{eq:bellman-general} reads
\begin{equation}
\label{eq:bellman-firm1}
V_1 = \frac{1}{\rho + \eta} \left[ u_1 + \lambda_2 \Sigma_2(V_2) V_1 + \lambda_1 E_1(V_1) \right],
\end{equation}
where the $4 \times 4$ transition probability
matrix determined by firm~2's switching probabilities is
\begin{equation}
\label{eq:Sigma_2}
\Sigma_2(V_2) = \left[\begin{smallmatrix}
1-F(\thetaEC + V_{22} - V_{21}) & F(\thetaEC + V_{22} - V_{21}) & 0 & 0 \\
F(\thetaSV + V_{21} - V_{22}) & 1-F(\thetaSV + V_{21} - V_{22}) & 0 & 0 \\
0 & 0 & 1-F(\thetaEC + V_{24} - V_{23}) & F(\thetaEC + V_{24} - V_{23}) \\
0 & 0 & F(\thetaSV + V_{23} - V_{24}) & 1-F(\thetaSV + V_{23} - V_{24})
\end{smallmatrix}\right]
\end{equation}
and firm~1's social surplus vector
$E_1(V_1) = V_1 + e(\psi_1 + S_1V_1 - V_1)$ is, explicitly,
\begin{equation*}
E_1(V_1) = \begin{bmatrix}
V_{11} + e(\thetaEC + V_{13} - V_{11}) \\
V_{12} + e(\thetaEC + V_{14} - V_{12}) \\
V_{13} + e(\thetaSV + V_{11} - V_{13}) \\
V_{14} + e(\thetaSV + V_{12} - V_{14})
\end{bmatrix}.
\end{equation*}

Similarly for firm~2,
\begin{equation}
\label{eq:bellman-firm2}
V_2 = \frac{1}{\rho + \eta} \left[ u_2 + \lambda_1 \Sigma_1(V_1) V_2 + \lambda_2 E_2(V_2) \right],
\end{equation}
where
\begin{equation}
\label{eq:Sigma_1}
\Sigma_1(V_1) = \left[\begin{smallmatrix}
1-F(\thetaEC + V_{13} - V_{11}) & 0 & F(\thetaEC + V_{13} - V_{11}) & 0 \\
0 & 1-F(\thetaEC + V_{14} - V_{12}) & 0 & F(\thetaEC + V_{14} - V_{12}) \\
F(\thetaSV + V_{11} - V_{13}) & 0 & 1-F(\thetaSV + V_{11} - V_{13}) & 0 \\
0 & F(\thetaSV + V_{12} - V_{14}) & 0 & 1-F(\thetaSV + V_{12} - V_{14})
\end{smallmatrix}\right]
\end{equation}
and
$E_2(V_2) = V_2 + e(\psi_2 + S_2V_2 - V_2)$.

\section{Conditions for Equilibrium Uniqueness}
\label{sec:uniqueness}

Having characterized the equilibrium system, and with existence
of an equilibrium guaranteed, we now establish conditions under
which the equilibrium is unique.
Equilibria are solutions of the fixed-point problem $V = T(V)$,
where $T$ is the Bellman system operator defined
in~\eqref{eq:bellman-system}.
Our approach proceeds in two steps.
First, we derive an a priori bound on equilibrium values: the
span---the difference between the largest and smallest
components---of \emph{any} equilibrium value function is bounded by a
constant $M$ determined by the model primitives, so every
equilibrium lies in the sublevel set $\VM$ of value functions whose
spans are at most $M$.
This set is \emph{invariant} under $T$ in the sense that $T$ maps
it into itself.
Second, we provide sufficient conditions on the model parameters
under which $T$ is a contraction on this set---that is, there
exists $\gamma < 1$ such that
$\|T(V) - T(V')\|_\infty \leq \gamma \|V - V'\|_\infty$
for all $V$, $V'$ in $\VM$.
By the Banach fixed-point theorem, $T$ then has a unique fixed
point in the set, and since every equilibrium lies in $\VM$, the
equilibrium is globally unique.
Establishing contractivity requires bounding the Lipschitz constant
of $T$, which in turn requires controlling the derivatives of the
choice probabilities and the social surplus functions.
The $\Sigma_m(V_m)$ matrices, which depend on the rivals'
values, couple the players' Bellman equations and make the
contraction property more difficult to establish than in
single-agent dynamic programming problems.

Notably, our results are semiparametric with respect to the error
distribution and do not rely on the logit structure of
Assumption~\ref{assp:ev1}.
Theorem~\ref{thm:contraction} requires
Assumption~\ref{assp:bounded-density} (bounded density),
and Theorem~\ref{thm:symmetric-uniqueness} adds
Assumption~\ref{assp:exchangeable} (exchangeability and Lipschitz
density).

Let $\lambar \equiv \max_i \lambda_i$ denote the maximal move arrival
rate, let $\lambda_{-i} \equiv \sum_{m \neq i} \lambda_m$ denote the
total arrival rate of player $i$'s rivals, and let
$\lambarminus \equiv \max_i \lambda_{-i}$.
In the duopoly model, $\lambda_{-i}$ is the rate of the single rival, and
$\lambarminus = \max\{\lambda_1, \lambda_2\} = \lambar$.
For any vector $z \in \mathbb{R}^n$, define the
\emph{span seminorm} \citep[see][pp.~195--196]{puterman05markov}
as\footnote{This is a seminorm:
  $\spn(z) \geq 0$,
  $\spn(z + z') \leq \spn(z) + \spn(z')$, and
  $\spn(\alpha z) = |\alpha|\spn(z)$.
  Its kernel consists of the constant vectors:
  $\spn(z) = 0$ if and only if $z$ is constant.}
\begin{equation*}
\spn(z) \equiv \max_k z_k - \min_k z_k.
\end{equation*}
Using this notation, define
\begin{equation}
\label{eq:U-def}
U \equiv \max_i \spn(u_i),
\quad
\Psi \equiv \max_i \spn(\psi_i),
\quad \text{and} \quad
R(V) \equiv \max_i \spn(V_i),
\end{equation}
the maximum range of the flow payoffs, switching costs, and
value functions over players, respectively.
For the duopoly entry/exit model, $U$ and $\Psi$ reduce to
\begin{equation*}
U = \max\{\abs{\thetaM}, \abs{\thetaM + \thetaC}, \abs{\thetaC}\}
\quad \text{and} \quad
\Psi = \abs{\thetaEC - \thetaSV}.
\end{equation*}

The following lemma establishes the preliminary results that form
the basis of our uniqueness conditions.
Part~(a) bounds the span of $T(V)$ in terms of the span of $V$ for
arbitrary $V \in \mathbb{R}^{NK}$ (not necessarily an equilibrium).
The bound is an affine contraction in the span with modulus
$\eta/(\rho + \eta)$, and the constant $M$ defined in the lemma is its
steady state.
Part~(b) establishes that the sublevel set
$\VM = \{V : R(V) \leq M\}$ is invariant under $T$, which will allow
us to apply the Banach fixed-point theorem on $\VM$.
Part~(c) shows that every equilibrium must lie in $\VM$.

\begin{lemma}[Operator Bound and Invariant Set]
\label{lem:value-bounds}
Suppose Assumptions~\ref{assp:dx}--\ref{assp:pi} hold.
Let $M \equiv (U + \lambar \Psi)/\rho$ and
$\VM \equiv \{V \in \mathbb{R}^{NK} : R(V) \leq M\}$.
\begin{enumerate}[(a)]
\item \label{lem:vb-operator}
  For any $V \in \mathbb{R}^{NK}$,
  \begin{equation}
  \label{eq:operator-bound}
  R(T(V)) \leq \frac{\eta}{\rho + \eta}\, R(V)
    + \frac{U + \lambar \Psi}{\rho + \eta}.
  \end{equation}
\item \label{lem:vb-invariant}
  $T\left( \VM \right) \subseteq \VM$.
\item \label{lem:vb-equilibrium}
  If $V^*$ is an equilibrium, so that $T(V^*) = V^*$,
  then $V^* \in \VM$.
\end{enumerate}
\end{lemma}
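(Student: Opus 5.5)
The core step is part~(a): bound the span of each block $T_i(V)$ of the Bellman operator. Parts~(b) and~(c) then follow by elementary algebra. Write
\[
(\rho+\eta)\,T_i(V) = u_i + \sum_{m\neq i}\lambda_m \Sigma_m(V_m)V_i + \lambda_i\bigl[V_i + e(\psi_i + S_iV_i - V_i)\bigr],
\]
and use subadditivity and homogeneity of $\spn$ term by term. The flow term contributes at most $\spn(u_i)\le U$. Each rival term satisfies $\spn(\Sigma_m(V_m)V_i)\le \spn(V_i)$. This holds because $\Sigma_m(V_m)$ is stochastic, so every component of $\Sigma_m V_i$ is a convex combination of components of $V_i$ and therefore lies in $[\min_k V_{ik},\max_k V_{ik}]$. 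This step is where the linearity in rivals' CCPs is used; the particular values $\sigma_{mk}$, and hence $V_m$, are irrelevant. Together the rival terms contribute at most $\lambda_{-i}\,\spn(V_i)$.

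The own-move term is the one that needs care. Treating $V_i$ and $e(\cdot)$ separately would give a bound involving $\spn(S_iV_i)$ and would not yield the coefficient $\lambda_i$. I would instead use the representation $E_{ik}(V_i)=L(\psi_{ik}+V_{i,\ell(i,k)},V_{ik})$. The function $L(z_1,z_2)=\E\max\{z_1+\varepsilon_1,z_2+\varepsilon_0\}$ is nondecreasing in each argument and satisfies $L(z_1+c,z_2+c)=L(z_1,z_2)+c$, so $E_{ik}$ is monotone in $(V_{i,\ell(i,k)},V_{ik})$ and translation-equivariant. Let $\bar v=\max_k V_{ik}$, $\underline v=\min_k V_{ik}$, $\bar\psi=\max_k\psi_{ik}$, and $\underline\psi=\min_k\psi_{ik}$. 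Then
\[
\underline v + L(\underline\psi,0)\le E_{ik}(V_i)\le \bar v + L(\bar\psi,0),
\]
so $\spn(E_i(V_i))\le \spn(V_i) + e(\bar\psi)-e(\underline\psi)$. Since $e'(z)=F(z)\in[0,1]$ (or, without differentiability, since $0\le L(z+c,0)-L(z,0)\le c$ for $c\ge0$, by monotonicity and translation), $e$ is $1$-Lipschitz. Hence $e(\bar\psi)-e(\underline\psi)\le \spn(\psi_i)\le\Psi$, and the own-move term contributes at most $\lambda_i(\spn(V_i)+\Psi)$.

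Adding the pieces gives
\[
\spn(T_i(V))\le \frac{U+(\lambda_{-i}+\lambda_i)\spn(V_i)+\lambda_i\Psi}{\rho+\eta}\le \frac{\eta}{\rho+\eta}R(V)+\frac{U+\lambar\Psi}{\rho+\eta},
\]
using $\lambda_{-i}+\lambda_i=\eta$ and $\lambda_i\le\lambar$. Taking the maximum over $i$ gives~\eqref{eq:operator-bound}.

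For part~(b), take $R(V)\le M$ with $M=(U+\lambar\Psi)/\rho$. Then $R(T(V))\le \frac{\eta M+\rho M}{\rho+\eta}=M$, since $U+\lambar\Psi=\rho M$.

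For part~(c), set $r=R(V^*)$ at a fixed point. Part~(a) gives $r\le \frac{\eta r+\rho M}{\rho+\eta}$, i.e.\ $\rho r\le\rho M$. Because $\rho>0$ by Assumption~\ref{assp:rho}, this gives $r\le M$, and $r$ is finite since $V^*\in\mathbb{R}^{NK}$.

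The main obstacle is the own-surplus span bound, where the monotonicity and translation properties of $L$ must be used together. Finiteness of $L$ under Assumption~\ref{assp:pi} makes all these manipulations legitimate.
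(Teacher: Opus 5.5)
Your proof is correct and follows essentially the paper's argument: the same term-by-term span bound, with the rival terms controlled because stochastic matrices are nonexpansive in $\spn$, the own term handled through the representation $E_{ik}(V_i)=L(\psi_{ik}+V_{i,\ell(i,k)},V_{ik})$, and the same algebra for parts (b) and (c). The only small difference is in the own-surplus term: you sandwich $E_{ik}$ using monotonicity and translation equivariance of $L$, whereas the paper compares states $k$ and $l$ directly using the $1$-Lipschitz property of $L$ in the $\ell^\infty$ norm, and both routes give the bound $\spn(V_i)+\Psi$.
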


The proof of Lemma~\ref{lem:value-bounds}, along with all other
proofs, can be found in Appendix~\ref{sec:proofs}.

Three features of the bound $M$ are worth noting.
First, it depends on move arrival rates only through
$\lambar = \max_i \lambda_i$.
Therefore, two parameter vectors with the same total rate
$\sum_i \lambda_i$ but differing in how the rates are split among
players can yield different bounds.
Second, as players become more patient ($\rho \to 0$), the modulus
$\eta/(\rho + \eta)$ in~\eqref{eq:operator-bound} approaches one, so
the contraction in the span becomes weaker and its steady state $M$
tends to infinity.
For very patient players, small differences in flow payoffs across
states can result in large differences in the value function, making
it more difficult to guarantee uniqueness.
Third, even when the span of flow payoffs $U$ is small, variation in
switching costs across states inflates the bound through the $\Psi$
term, so small differences in flow payoffs may not yield
a tight bound on the span of the value function.

\subsection{Sufficient Condition for Uniqueness}

Since every equilibrium lies in $\VM$
(Lemma~\ref{lem:value-bounds}(\ref{lem:vb-equilibrium})), it suffices
to establish uniqueness on $\VM$.
Theorem~\ref{thm:contraction} gives a sufficient condition for
$T$ to be a contraction on $\VM$.
The condition requires the discount rate to be sufficiently large
relative to a quantity depending on the span of flow payoffs $U$,
switching costs $\Psi$, and the move arrival rates.

\begin{theorem}[Contraction-Based Uniqueness]
\label{thm:contraction}
Suppose Assumptions~\ref{assp:dx}--\ref{assp:bounded-density} hold.
If
\begin{equation}
\label{eq:contraction-bound}
\rho > \sqrt{2 \cF \lambarminus (U + \lambar \Psi)},
\end{equation}
then there is a unique equilibrium.
\end{theorem}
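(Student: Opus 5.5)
The plan is to show that, under \eqref{eq:contraction-bound}, the Bellman system operator $T$ is a contraction in $\|\cdot\|_\infty$ on the set $\VM$ from Lemma~\ref{lem:value-bounds}. Uniqueness then follows from three facts. First, $\VM$ is closed, because $\spn$ is continuous and so $R$ is continuous. Second, $T(\VM)\subseteq\VM$ by Lemma~\ref{lem:value-bounds}(\ref{lem:vb-invariant}). Third, every equilibrium lies in $\VM$ by Lemma~\ref{lem:value-bounds}(\ref{lem:vb-equilibrium}). The Banach fixed-point theorem then gives exactly one fixed point in $\VM$, hence exactly one equilibrium overall.

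Fix $V,V'\in\VM$ and a player $i$, and let $\Delta\equiv\|V-V'\|_\infty$. I will bound $(\rho+\eta)\|T_i(V)-T_i(V')\|_\infty$ term by term; the flow payoff $u_i$ cancels.

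\emph{Own-move term.} Write $E_i(V_i)=V_i+e(\psi_i+S_iV_i-V_i)$. Since $e'(z)=F(z)\in[0,1]$, the mean value theorem gives, componentwise,
\[
E_{ik}(V_i)-E_{ik}(V_i') = (1-\bar F_k)(V_{ik}-V_{ik}') + \bar F_k\,(V_{i,\ell(i,k)}-V_{i,\ell(i,k)}')
\]
for some $\bar F_k\in[0,1]$. This is a convex combination of coordinate differences, so $\|E_i(V_i)-E_i(V_i')\|_\infty\le\Delta$. The own-move term therefore contributes at most $\lambda_i\Delta$.

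\emph{Rival terms.} For each $m\neq i$, decompose
\[
\Sigma_m(V_m)V_i-\Sigma_m(V_m')V_i' = \Sigma_m(V_m)(V_i-V_i') + \bigl(\Sigma_m(V_m)-\Sigma_m(V_m')\bigr)V_i'.
\]
The first piece has sup norm at most $\Delta$, because $\Sigma_m(V_m)$ is stochastic. For the second piece, row $k$ equals $(\sigma_{mk}-\sigma_{mk}')\,(V'_{i,\ell(m,k)}-V'_{ik})$. Its second factor is bounded by $\spn(V_i')\le M$, since $V'\in\VM$. For its first factor, the argument $\psi_{mk}+V_{m,\ell(m,k)}-V_{mk}$ of $F$ changes by at most $2\Delta$, and $F$ is $\cF$-Lipschitz by Assumption~\ref{assp:bounded-density}, so $|\sigma_{mk}-\sigma'_{mk}|\le 2\cF\Delta$. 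Summing over $m\neq i$ with weights $\lambda_m$ gives a rival contribution of at most $\lambda_{-i}(1+2\cF M)\Delta\le \lambda_{-i}\Delta+2\cF\lambarminus M\Delta$.

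\emph{Combining.} Adding the two contributions and using $\lambda_i+\lambda_{-i}=\eta$,
\[
\|T(V)-T(V')\|_\infty\le\gamma\Delta,
\qquad
\gamma\equiv\frac{\eta+2\cF\lambarminus M}{\rho+\eta}.
\]
Now $\gamma<1$ if and only if $2\cF\lambarminus M<\rho$. Substituting $M=(U+\lambar\Psi)/\rho$, this becomes $\rho^2>2\cF\lambarminus(U+\lambar\Psi)$, which is exactly \eqref{eq:contraction-bound}.

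The main obstacle is the cross term $(\Sigma_m(V_m)-\Sigma_m(V_m'))V_i'$. This coupling through rivals' CCPs is what makes $T$ non-contractive on all of $\mathbb{R}^{NK}$. It is controlled only by restricting to $\VM$, where the relevant differences of $V_i'$ are bounded by the span bound $M$ rather than by a sup norm. Here the only care needed is the factor $2$ from the CCP argument involving two coordinates of $V_m$. Two remarks close the plan. The own-move and first rival pieces together contribute exactly $\eta\Delta$, which is what produces the $\rho+\eta$ denominator and the clean form of the condition. Existence is not needed separately, since Banach's theorem itself supplies the fixed point.
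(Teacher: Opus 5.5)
Your proposal is correct and follows the paper's proof essentially step for step. It uses the same add-and-subtract decomposition of the rival transition terms, the same $2\cF M$ bound on the cross term via the span bound on the invariant set, the same modulus $\gamma = (\eta + 2\cF\lambarminus M)/(\rho+\eta)$, and the same Banach argument combined with Lemma~\ref{lem:value-bounds}. The only difference is cosmetic: you bound the own-surplus term with the mean value theorem for $e$, while the paper uses the 1-Lipschitz property of $L$ in the $\ell^\infty$ norm, and the two give the same bound.
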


\begin{remark}
Similar contraction arguments date back to \cite{shapley-1953},
who showed, in two-player zero-sum stochastic games, that a positive
probability of the game stopping at each stage (equivalent to
discounting when the probability is constant) makes the value
operator a contraction.
Theorem~\ref{thm:contraction} gives an analogous but stronger
requirement---discounting large enough relative to the
strategic interaction---under which the equilibrium is
globally unique in our nonzero-sum, continuous-time setting.
\cite{curtat-1996, horst-2005} introduce similar diagonal-dominance
conditions in the context of discrete-time stochastic games with
continuous states and actions, to ensure existence rather than
uniqueness of equilibrium.
\end{remark}

\begin{remark}
The factor of $2\cF$ appears in~\eqref{eq:contraction-bound}
because each rival's choice probability
$\sigma_{mk} = F(\psi_{mk} + V_{m,\ell(m,k)} - V_{mk})$ depends on
the difference of two components of $V_{m}$.
The factor $\lambarminus$ under the square root is related to
the frequency of rival moves.
The factor $U + \lambar\,\Psi$ is related to the equilibrium
value span $M$.
Flow payoffs are incurred directly, while switching costs are
scaled by the rate $\lambar$ since they are incurred at decision
epochs.
\end{remark}

\begin{remark}
The number of players enters the condition in
\eqref{eq:contraction-bound} only through $\lambarminus$.
This is another implication of asynchronous moves: each rival's
transition term appears additively in the Bellman equation for player
$i$, so the rivals' move rates are summed in the Lipschitz bound.
When players have a common arrival rate $\lambda_i = \lambda$,
we have $\lambarminus = (N-1)\lambda$ and the condition becomes
$\rho > \sqrt{2 \cF (N-1)\lambda (U + \lambda \Psi)}$.
On the other hand, in single-agent models we have $\lambarminus = 0$
and~\eqref{eq:contraction-bound} holds for all $\rho > 0$.
\end{remark}

\begin{remark}
We abstract away from exogenous state changes in Section~\ref{sec:framework}.
However, Lemma~\ref{lem:value-bounds} and Theorem~\ref{thm:contraction} still
hold if they are incorporated.
Suppose, following ABBE, that the state transitions exogenously from $x_k$ to
$x_l$ at finite rates $q_{kl}$.
For any $\bar{q} \geq \max_k \sum_{l \neq k} q_{kl}$, the total rate of events
increases to $\eta = \bar{q} + \sum_m \lambda_m$.
Let $P$ denote the corresponding transition matrix implied by uniformization
at rate $\bar{q}$.
The main change is then the addition of a term
$\bar{q} P V_i$ inside the square brackets in~\eqref{eq:bellman-general}.
Since $P$ does not depend on $V$ and stochastic matrices are nonexpansive
under both $\spn$ and $\|\cdot\|_\infty$, this term contributes $\bar{q}\,R(V)$
to the span bound of Lemma~\ref{lem:value-bounds} and
$\bar{q}\,\|V_i - V'_i\|_\infty$ to the Lipschitz bound of
Theorem~\ref{thm:contraction}, so both proofs proceed as before with the
new value of $\eta$.
\end{remark}

\subsection{Uniqueness under Offsetting Switching Costs}
\label{sec:symmetric}

The sufficient condition in the previous section bounds the
Lipschitz constant of the Bellman system operator $T$ on the
$NK$-dimensional set $\VM$.
We now return to the duopoly entry/exit model of
Section~\ref{sec:duopoly} and show that when switching costs satisfy
an offsetting condition, the eight-equation equilibrium system
can be reduced to two equations, and the resulting two-dimensional
map admits a sharper contraction condition.
The key restriction in this section is that entry costs and scrap
values are equal in magnitude.

\begin{assumption}[Offsetting Switching Costs]
\label{assp:symmetric-costs}
$\thetaEC = -\kappa$ and
$\thetaSV = \kappa$ for some $\kappa \geq 0$.
\end{assumption}

This condition is satisfied, for example, when entry requires
an investment of $\kappa$ that is fully recovered upon exit, so that
entry involves no sunk costs.
It yields the following simplified structure for the choice
probabilities, social surplus, and transition probability matrices
in the game.

\begin{lemma}[Identities with Offsetting Switching Costs]
\label{lem:symmetric-identities}
In the duopoly entry/exit model, under
Assumptions~\ref{assp:dx}--\ref{assp:exchangeable}
and~\ref{assp:symmetric-costs}:
\begin{enumerate}[(a)]
  \item \label{lem:si-probs}
  $F(\thetaEC + d) + F(\thetaSV - d) = 1$ for all $d \in \mathbb{R}$.
  \item \label{lem:si-surplus}
  $e(\thetaSV - d) - e(\thetaEC + d) = \kappa - d$ for all
  $d \in \mathbb{R}$.
  \item \label{lem:si-rows}
  For all $V \in \mathbb{R}^8$, the $2 \times 2$ blocks of
  $\Sigma_2(V_2)$ for state pairs
  $(1,2)$ and $(3,4)$ each have identical rows and therefore
  \begin{equation}
  \label{eq:sigma2-rows}
  [\Sigma_2(V_2)\, V_1]_1 = [\Sigma_2(V_2)\, V_1]_2
  \quad\text{and}\quad
  [\Sigma_2(V_2)\, V_1]_3 = [\Sigma_2(V_2)\, V_1]_4.
  \end{equation}
  Similarly, rows 1 and 3 of $\Sigma_1(V_1)$ are identical, as are
  rows 2 and 4, and therefore
  \begin{equation}
  \label{eq:sigma1-rows}
  [\Sigma_1(V_1)\, V_2]_1 = [\Sigma_1(V_1)\, V_2]_3
  \quad\text{and}\quad
  [\Sigma_1(V_1)\, V_2]_2 = [\Sigma_1(V_1)\, V_2]_4.
  \end{equation}
\end{enumerate}
\end{lemma}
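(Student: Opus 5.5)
The three parts use only two facts: exchangeability of $(\varepsilon_{i0k},\varepsilon_{i1k})$ from Assumption~\ref{assp:exchangeable}, and the substitution $\thetaEC=-\kappa$, $\thetaSV=\kappa$ from Assumption~\ref{assp:symmetric-costs}. The plan is to prove (a) and (b) as scalar identities for $F$ and $e$, and then obtain (c) by reading off the rows of \eqref{eq:Sigma_2} and \eqref{eq:Sigma_1}.

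\textbf{Part (a).} Exchangeability implies that the error difference $\varepsilon_{i0k}-\varepsilon_{i1k}$ has the same distribution as its negative. Hence it is symmetric about zero, so $F(z)+F(-z)=1$ for all $z$. Continuity of $F$ under Assumption~\ref{assp:pi} removes any concern about atoms at $0$. Substituting the offsetting costs gives $\thetaSV-d=\kappa-d=-(-\kappa+d)=-(\thetaEC+d)$, and (a) follows immediately.

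\textbf{Part (b).} The key step is the identity $e(z)-e(-z)=z$. Write
\[
e(z)=\E\max\{z+\varepsilon_1,\varepsilon_0\}
\quad\text{and}\quad
e(-z)=\E\max\{-z+\varepsilon_1,\varepsilon_0\}.
\]
Exchangeability lets us swap $\varepsilon_0$ and $\varepsilon_1$ in the second expression, giving
\[
e(-z)=\E\max\{\varepsilon_0-z,\varepsilon_1\}=\E\max\{\varepsilon_0,\varepsilon_1+z\}-z=e(z)-z.
\]
All expectations are finite by the first-moment condition. Applying this with $z=\thetaSV-d=\kappa-d$, and noting $-z=\thetaEC+d$, yields $e(\thetaSV-d)-e(\thetaEC+d)=\kappa-d$.

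\textbf{Part (c).} In \eqref{eq:Sigma_2}, set $d=V_{22}-V_{21}$. The $(1,2)$ block has first row $(1-F(\thetaEC+d),\,F(\thetaEC+d))$ and second row $(F(\thetaSV-d),\,1-F(\thetaSV-d))$. By (a), $F(\thetaSV-d)=1-F(\thetaEC+d)$, so the two rows coincide. The same argument with $d=V_{24}-V_{23}$ handles the $(3,4)$ block. The zero entries of the two rows also match, because each block sits in the same columns. Identical rows give identical inner products with any $V_1$, which is \eqref{eq:sigma2-rows}. For $\Sigma_1$, rows 1 and 3 are supported on columns $\{1,3\}$ with entries $1-F(\thetaEC+d)$ and $F(\thetaEC+d)$, where $d=V_{13}-V_{11}$. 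Part (a) gives the same equality, and rows 2 and 4 are handled likewise, which yields \eqref{eq:sigma1-rows}.

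The only genuinely nontrivial point is the exchangeability swap in (b). Everything else is bookkeeping, checking that $\thetaSV-d$ and $\thetaEC+d$ are negatives of each other under Assumption~\ref{assp:symmetric-costs}.
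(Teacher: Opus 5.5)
Your proposal is correct and follows the paper's proof essentially step for step. The paper obtains (a) and (b) from the identities $F(z)+F(-z)=1$ and $e(z)-e(-z)=z$ in its supporting Lemma~\ref{lem:exchangeable}, proved by the same symmetry-plus-continuity and exchangeability-swap arguments you give, and then reads off the identical rows of $\Sigma_2$ and $\Sigma_1$ for (c) exactly as you do.
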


Using the identities in Lemma~\ref{lem:symmetric-identities}, the
eight-dimensional equilibrium system can be reduced to a
two-dimensional problem in value differences, which makes it easier to
establish equilibrium uniqueness analytically.

\begin{lemma}[Constant Competitive Effect]
\label{lem:constant-delta}
In the duopoly entry/exit model, suppose
Assumptions~\ref{assp:dx}--\ref{assp:exchangeable}
and~\ref{assp:symmetric-costs} hold.
In any equilibrium,
\begin{equation}
\label{eq:constant-delta}
(V_{14} - V_{12}) - (V_{13} - V_{11})
= (V_{24} - V_{23}) - (V_{22} - V_{21})
= \thetaC / (\rho + \eta).
\end{equation}
\end{lemma}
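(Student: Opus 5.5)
The plan is to take second differences of the vectorized Bellman equations~\eqref{eq:bellman-firm1} and~\eqref{eq:bellman-firm2} directly. In an equilibrium, $V = T(V)$. We multiply through by $\rho+\eta$ and apply to both sides of each firm's equation the linear functional that produces the ``difference-in-differences'' in~\eqref{eq:constant-delta}.

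For firm~1, this functional is $z \mapsto (z_4 - z_2) - (z_3 - z_1)$. Applied to the left side, it gives $(\rho+\eta)\bigl[(V_{14}-V_{12}) - (V_{13}-V_{11})\bigr]$. On the right side we handle the three terms separately.

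\emph{Flow term.} The flow payoff $u_1$ contributes $(\thetaM+\thetaC - 0) - (\thetaM - 0) = \thetaC$.

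\emph{Rival transition term.} The term $\lambda_2 \Sigma_2(V_2)V_1$ contributes
\[
\lambda_2\bigl\{([\Sigma_2 V_1]_4 - [\Sigma_2 V_1]_3) - ([\Sigma_2 V_1]_2 - [\Sigma_2 V_1]_1)\bigr\} = 0,
\]
because each inner difference vanishes by~\eqref{eq:sigma2-rows} in Lemma~\ref{lem:symmetric-identities}(\ref{lem:si-rows}).

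\emph{Surplus term.} We regroup the own-surplus contribution as $(E_{14}-E_{12}) - (E_{13}-E_{11})$, which is not the pairing used in the functional. Using the explicit form of $E_1(V_1)$ and setting $d = V_{13}-V_{11}$, we get
\[
E_{13} - E_{11} = d + e(\thetaSV - d) - e(\thetaEC + d) = d + (\kappa - d) = \kappa
\]
by Lemma~\ref{lem:symmetric-identities}(\ref{lem:si-surplus}). The same computation with $d = V_{14}-V_{12}$ gives $E_{14}-E_{12} = \kappa$. Hence the surplus term contributes $\lambda_1(\kappa - \kappa) = 0$.

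Collecting terms yields $(\rho+\eta)\bigl[(V_{14}-V_{12}) - (V_{13}-V_{11})\bigr] = \thetaC$. Dividing by $\rho+\eta > 0$ gives the first claimed equality with $\thetaC/(\rho+\eta)$.

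For firm~2 we use the functional $z \mapsto (z_4 - z_3) - (z_2 - z_1)$. The flow payoff $u_2$ again gives $\thetaC$. The term $\lambda_1 \Sigma_1(V_1) V_2$ vanishes after regrouping as $([\cdot]_4 - [\cdot]_2) - ([\cdot]_3 - [\cdot]_1)$, by~\eqref{eq:sigma1-rows}. The surplus differences $E_{22}-E_{21}$ and $E_{24}-E_{23}$ each equal $\kappa$, by Lemma~\ref{lem:symmetric-identities}(\ref{lem:si-surplus}) with $d = V_{22}-V_{21}$ and $d = V_{24}-V_{23}$, using the switching maps $\ell(2,1)=2$, $\ell(2,2)=1$, $\ell(2,3)=4$, $\ell(2,4)=3$ and the costs $\psi_2$.

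The only delicate point is bookkeeping. The second difference must be regrouped differently for the rival-transition term (pairing states a rival's move connects) and for the own-surplus term (pairing states one's own move connects). Since the functional is a symmetric double difference, both regroupings are valid. Notably, the argument uses only that $V$ satisfies the Bellman system, not best-response consistency beyond what is already built into $T$.
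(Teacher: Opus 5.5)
Your proof is correct. It uses the same double-differencing idea as the paper, but organizes it differently.

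The paper works with the instantaneous ($\rho$-discounted) equations of Appendix~\ref{sec:bellman-instantaneous}:
it first-differences firm~1's equations across its own-move pairs $(x_3,x_1)$ and $(x_4,x_2)$ to obtain \eqref{eq:v11} and \eqref{eq:v12}, using Lemma~\ref{lem:symmetric-identities}(\ref{lem:si-surplus}) for the surplus and Lemma~\ref{lem:symmetric-identities}(\ref{lem:si-probs}) to write the rival's exit probabilities as $1-p_1$ and $1-p_0$. Subtracting leaves $(\rho+\lambda_1)(v_{12}-v_{11}) = \thetaC - \lambda_2(v_{14}-v_{13})$. The identity $v_{14}-v_{13} = v_{12}-v_{11}$ then moves the residual rival term to the left and produces $\rho+\eta$.

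You instead keep the $(\rho+\eta)$ form, where the rival term still carries the no-switch mass $\lambda_2(1-\sigma_{2k})V_{1k}$. The identical-rows property of Lemma~\ref{lem:symmetric-identities}(\ref{lem:si-rows}) therefore annihilates that term outright, and only the flow term survives. No $p_0$, $p_1$, or auxiliary identity is needed.

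Your route also proves more than is stated. The right-hand computation never uses $V = T(V)$, so the double difference of $T_i(V)$ equals $\thetaC/(\rho+\eta)$ for \emph{every} $V \in \mathbb{R}^8$. In other words, $T$ maps all of $\mathbb{R}^8$ into the affine set defined by \eqref{eq:constant-delta}.

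What the paper's route buys is the by-product \eqref{eq:v11}, which Step~1 of the proof of Theorem~\ref{thm:symmetric-uniqueness} reuses to construct the reduced map $G_1$.

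One small slip in your commentary: $(E_{14}-E_{12})-(E_{13}-E_{11})$ is the functional's original pairing. It is the rival-transition term that needs the regrouping.
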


For firm~1, $V_{13} - V_{11}$ represents the value gained by entering
when firm~2 is out of the market, and $V_{14} - V_{12}$ is the value
gained by entering when firm~2 is active.
Lemma~\ref{lem:constant-delta} establishes that in equilibrium the
difference between these two differences---the effect of competition
on the value of entry---is the same for both firms and equals the
competitive effect $\thetaC$ discounted at the effective rate
$\rho + \eta = \rho + \lambda_1 + \lambda_2$.
The equality in the lemma is exact due to linearity of each Bellman
equation in the rival's choice probabilities, together with
Lemma~\ref{lem:symmetric-identities}(\ref{lem:si-probs}); these terms
need not cancel under simultaneous moves.
The following theorem uses this simplification to reduce the
equilibrium mapping to two dimensions and establish a sufficient
condition for contractivity.

\begin{theorem}[Uniqueness with Offsetting Switching Costs]
\label{thm:symmetric-uniqueness}
In the duopoly entry/exit model, suppose
Assumptions~\ref{assp:dx}--\ref{assp:exchangeable}
and~\ref{assp:symmetric-costs} hold
and that $\lambda_1 = \lambda_2 = \lambda$.
If
\begin{equation}
\label{eq:symmetric-contraction}
\cF|\tilde\Delta| + (\cF^2 + \cFp)\,\tilde\Delta^2 < 1,
\quad\text{where}\quad \tilde\Delta \equiv \lambda\thetaC/[(\rho + \lambda)(\rho + 2\lambda)],
\end{equation}
then there is a unique equilibrium.
The unique equilibrium is symmetric:
$V_{11} = V_{21}$, $V_{12} = V_{23}$, $V_{13} = V_{22}$,
and $V_{14} = V_{24}$.
\end{theorem}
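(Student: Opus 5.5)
The plan is to reduce the eight-equation system $V = T(V)$ to a two-dimensional fixed-point problem in firms' entry gains. I will then show that this reduced map is a contraction on all of $\mathbb{R}^2$ under~\eqref{eq:symmetric-contraction}.

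\textbf{Reduction.} Define the unknowns $a \equiv V_{13} - V_{11}$ and $b \equiv V_{22} - V_{21}$, and write $\Delta \equiv \thetaC/(\rho+\eta)$. By Lemma~\ref{lem:constant-delta}, in any equilibrium $V_{14} - V_{12} = a + \Delta$ and $V_{24} - V_{23} = b + \Delta$. Hence every CCP is a function of $(a,b)$ alone:
\begin{itemize}
\item firm~1 enters with probability $F(-\kappa + a)$ or $F(-\kappa + a + \Delta)$;
\item by Lemma~\ref{lem:symmetric-identities}(\ref{lem:si-probs}), the corresponding exit probabilities are the complements of these;
\item the same holds for firm~2 with $b$ in place of $a$.
\end{itemize}

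\textbf{Equation for $a$.} Subtract row~1 from row~3 of~\eqref{eq:bellman-firm1}.
\begin{itemize}
\item For the own-move term, Lemma~\ref{lem:symmetric-identities}(\ref{lem:si-surplus}) gives $E_{13} - E_{11} = a + e(\kappa - a) - e(-\kappa + a) = \kappa$, a constant.
\item For the rival term, let $c \equiv V_{12} - V_{11}$, so that $V_{14} - V_{13} = c + \Delta$. Then $[\Sigma_2 V_1]_3 - [\Sigma_2 V_1]_1 = a + F(-\kappa + b + \Delta)(c + \Delta) - F(-\kappa + b)\,c$.
\end{itemize}

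\textbf{Eliminating $c$.} Subtract row~1 from row~2 of~\eqref{eq:bellman-firm1}. The flow payoffs agree, and the rival term cancels by~\eqref{eq:sigma2-rows}. This gives
\[
(\rho+\lambda_2)\, c = \lambda_1\bigl[e(-\kappa + a + \Delta) - e(-\kappa + a)\bigr],
\]
so $c = c(a)$ is an explicit function of $a$. Substituting, I obtain
\[
a = G(a,b) \equiv \frac{\thetaM + \lambda\kappa + \lambda\bigl[F(-\kappa + b + \Delta)\,(c(a) + \Delta) - F(-\kappa + b)\,c(a)\bigr]}{\rho + \lambda}.
\]
Here the $\lambda_2 a$ term has been moved to the left-hand side, which uses $\lambda_1 = \lambda_2 = \lambda$. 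The mirror argument for firm~2 gives $b = G(b,a)$ with the same function $G$.

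\textbf{Back to the full system.} Conversely, any fixed point $(a,b)$ of $\Phi(a,b) \equiv (G(a,b), G(b,a))$ determines all CCPs. By Theorem~2 of \cite{ctgames}, it then determines a unique $V$. So equilibria are in bijection with fixed points of $\Phi$, and it suffices to show $\Phi$ has a unique fixed point.

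\textbf{Contraction.} I bound the Jacobian of $\Phi$ in the $\ell_\infty$ operator norm, i.e.\ the maximal absolute row sum $|\partial_a G| + |\partial_b G|$.

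For $\partial_a G$:
\begin{itemize}
\item We have $c'(a) = \frac{\lambda}{\rho+\lambda}\bigl[F(-\kappa + a + \Delta) - F(-\kappa + a)\bigr]$, using $e' = F$.
\item Hence $|c'(a)| \le \frac{\lambda}{\rho+\lambda}\,\cF|\Delta|$.
\item Then $\partial_a G = \frac{\lambda}{\rho+\lambda}\bigl[F(\cdot + \Delta) - F(\cdot)\bigr] c'(a)$, which is bounded by $\bigl(\frac{\lambda}{\rho+\lambda}\bigr)^2 \cF^2 \Delta^2$.
\end{itemize}

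For $\partial_b G$, write it as $\frac{\lambda}{\rho+\lambda}\bigl[(f(\cdot + \Delta) - f(\cdot))\,c(a) + f(\cdot + \Delta)\,\Delta\bigr]$. Two bounds are needed:
\begin{itemize}
\item The first piece needs a bound $|c(a)| \le \frac{\lambda}{\rho+\lambda}|\Delta|$. This holds because $e$ is $1$-Lipschitz, since $0 \le F \le 1$. Combined with Lipschitz continuity of $f$ (constant $\cFp$), the first piece is at most $\bigl(\frac{\lambda}{\rho+\lambda}\bigr)^2 \cFp\, \Delta^2$.
\item The second piece is at most $\frac{\lambda}{\rho+\lambda}\,\cF|\Delta|$.
\end{itemize}

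Since $\frac{\lambda}{\rho+\lambda}\Delta = \tilde\Delta$ (because $\eta = 2\lambda$), the row sum is at most $\cF|\tilde\Delta| + (\cF^2 + \cFp)\tilde\Delta^2$. Condition~\eqref{eq:symmetric-contraction} makes this uniformly below~$1$. By the mean value theorem $\Phi$ is then a contraction on $\mathbb{R}^2$, so Banach's theorem yields a unique fixed point.

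\textbf{Symmetry.} The map $\Phi$ commutes with the swap $(a,b) \mapsto (b,a)$. The swap of the unique fixed point is therefore also a fixed point, which forces $a = b$. Firms then have identical CCPs in mirrored states. The symmetric relabeling of $V$ (exchanging states $2$ and $3$ and the firm indices) maps the unique value solution to a solution, so that solution is itself symmetric. This gives the stated equalities $V_{11} = V_{21}$, $V_{12} = V_{23}$, $V_{13} = V_{22}$, $V_{14} = V_{24}$.

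\textbf{Main obstacle.} The delicate step is the algebra of the reduction. One must check that the rival term really collapses to a function of $(a,b)$ via $c(a)$, with the $\lambda a$ term correctly absorbed. One must also obtain the sharp bounds on $|c|$ and $|c'|$ that produce exactly the $\tilde\Delta^2$ coefficient $\cF^2 + \cFp$. If the constant comes out differently, I would recheck whether $\Delta$ should enter through $F(\cdot+\Delta)-F(\cdot)$ or through the surplus differences.
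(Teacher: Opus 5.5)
Your proposal is correct and follows the paper's proof essentially step for step. It uses the same reduction to the entry gains $(a,b)$ via Lemma~\ref{lem:constant-delta} and the identical-rows cancellation, the same bounds on $|c|$, $|c'|$, $|p_1-p_0|$ and $|p_1'-p_0'|$ yielding $\cF|\tilde\Delta| + (\cF^2+\cFp)\tilde\Delta^2$, and the same relabeling argument for symmetry. One small correction: the converse half of your claimed bijection does not follow from Theorem~2 of \cite{ctgames}, which presupposes equilibrium CCPs, but you do not need it. The forward direction plus injectivity already gives at most one equilibrium, and existence comes from Theorem~1 of \cite{ctgames}, exactly as in the paper.
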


\begin{corollary}
\label{cor:discount-free}
In the duopoly entry/exit model, suppose
Assumptions~\ref{assp:dx}--\ref{assp:exchangeable}
and~\ref{assp:symmetric-costs} hold with
$\lambda_1 = \lambda_2 = \lambda$.  If
\begin{equation*}
\cF \frac{|\thetaC|}{2\lambda}
+ (\cF^2 + \cFp) \left(\frac{\thetaC}{2\lambda}\right)^2 < 1,
\end{equation*}
then there is a unique equilibrium for every discount rate $\rho > 0$.
\end{corollary}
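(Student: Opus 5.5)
The plan is to derive Corollary~\ref{cor:discount-free} directly from Theorem~\ref{thm:symmetric-uniqueness}. The left-hand side of~\eqref{eq:symmetric-contraction} depends on $\tilde\Delta$ only through $|\tilde\Delta|$, and as a function of $|\tilde\Delta|$ it is strictly increasing on $[0,\infty)$, since $\cF > 0$ and $\cF^2 + \cFp \geq 0$. It therefore suffices to show that, for every $\rho > 0$,
\begin{equation*}
|\tilde\Delta| = \frac{\lambda\,|\thetaC|}{(\rho+\lambda)(\rho+2\lambda)} \leq \frac{|\thetaC|}{2\lambda}.
\end{equation*}
Granting this, the hypothesis of the corollary gives
\begin{equation*}
\cF|\tilde\Delta| + (\cF^2+\cFp)\tilde\Delta^2 \leq \cF\frac{|\thetaC|}{2\lambda} + (\cF^2+\cFp)\Bigl(\frac{\thetaC}{2\lambda}\Bigr)^2 < 1.
\end{equation*}
So~\eqref{eq:symmetric-contraction} holds at every $\rho>0$, and Theorem~\ref{thm:symmetric-uniqueness} delivers a unique (symmetric) equilibrium for each such $\rho$.

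To prove the displayed inequality, expand the denominator: $(\rho+\lambda)(\rho+2\lambda) = \rho^2 + 3\lambda\rho + 2\lambda^2$, which is strictly greater than $2\lambda^2$ for $\rho>0$. This gives $\lambda/[(\rho+\lambda)(\rho+2\lambda)] < \lambda/(2\lambda^2) = 1/(2\lambda)$. Equivalently, the map $\rho \mapsto |\tilde\Delta(\rho)|$ is strictly decreasing on $(0,\infty)$ and converges to its supremum $|\thetaC|/(2\lambda)$ as $\rho \to 0^+$. The corollary's condition is thus exactly~\eqref{eq:symmetric-contraction} evaluated at the limiting patient case $\rho = 0$, and it implies the condition for every positive discount rate.

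No step here is a genuine obstacle; the only care needed is the monotonicity argument. One must check that the quadratic's coefficient $\cF^2 + \cFp$ is nonnegative, which holds because $\cFp$ is a Lipschitz constant. One must also check that the case $\thetaC = 0$ is trivially covered. Finally, because the inequality $|\tilde\Delta| \le |\thetaC|/(2\lambda)$ is not attained for positive $\rho$, no boundary issue arises, and the strict inequality in the corollary's hypothesis carries over to~\eqref{eq:symmetric-contraction}.
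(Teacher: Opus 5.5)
Your proposal is correct and matches the paper's argument: the paper observes that the left-hand side of \eqref{eq:symmetric-contraction} is non-increasing in $\rho$, bounds it by its limit as $\rho \to 0^+$ (which is the corollary's expression), and then applies Theorem~\ref{thm:symmetric-uniqueness}. Your version makes that monotonicity explicit, by showing $|\tilde\Delta| < |\thetaC|/(2\lambda)$ and noting that the left-hand side is increasing in $|\tilde\Delta|$.
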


In contrast to Theorem~\ref{thm:contraction}, this condition does not
involve the discount rate.
It requires that the strength of competition is moderate relative to
the move arrival rate: with logit errors, it holds whenever
$|\thetaC| < 3.68\lambda$.
In~\eqref{eq:symmetric-contraction}, $\tilde\Delta = \lambda\Delta/(\rho+\lambda)$,
where $\Delta = \thetaC/(\rho+\eta)$ is the competitive effect of
Lemma~\ref{lem:constant-delta}.
The first term can be related to the static case: for a two-player
binary game of incomplete information with interaction effect
$\tilde\Delta$, $\cF|\tilde\Delta| < 1$ implies a unique equilibrium
\citep[see][eq.~15]{aradillas-lopez-2020}.
The second term captures the dynamic effect of each firm's own
future decisions and the curvature of the error CDF $F$.
With logit errors, it tightens the condition from $|\tilde\Delta| < 4$
to $|\tilde\Delta| < 1.84$.

\begin{remark}
The left-hand side of condition~\eqref{eq:symmetric-contraction}
is nonmonotone in $\lambda$.
It is strictly increasing in $|\tilde\Delta|$, so both $|\tilde\Delta|$
and the left-hand side are maximized at
$\lambda^* = \rho/\sqrt{2}$.
The condition is easier to satisfy at extreme values of $\lambda$,
since $\tilde\Delta \to 0$ both as $\lambda \to 0$, where strategic
interaction is removed, and as $\lambda \to \infty$, where commitment
becomes more difficult.
The condition is harder to satisfy at intermediate rates
$\lambda \approx \rho/\sqrt{2}$, where strategic interaction and
commitment are both present.
\end{remark}

\section{Numerical Experiments}
\label{sec:numerical}

In this section, we investigate numerically the prevalence of
uniqueness and multiplicity in the duopoly model of entry and exit.
We evaluate the theoretical conditions, measure how far uniqueness
extends beyond them, and characterize the parameter regions where
multiplicity arises.

\subsection{Computational Approach}
\label{sec:computational}

We solve for equilibria of the dynamic duopoly model of
Section~\ref{sec:duopoly} by applying Newton's method to the system
$H(V) = V - T(V) = 0$ from multiple starting values.
We employ the analytical Jacobian, which is derived in
Appendix~\ref{sec:jacobian}.
Mapping the boundary between uniqueness and multiplicity across the
parameter space requires solving the eight-equation system hundreds
of billions of times, making the quadratic convergence speed of
Newton's method important.
Computation at this scale is possible because the continuous-time
model is computationally light: with sequential moves, the cost of
each Bellman evaluation grows linearly, rather than exponentially,
in the number of players (\citealp{doraszelski-judd-2012}; ABBE).

Although the problem of computing all Nash equilibria of static games
has a rich history \citep[see][for a survey]{mckelvey-mclennan-1996},
this is not the case for dynamic games.
Homotopy methods have been applied to trace out equilibrium correspondences
in dynamic games \citep{borkovsky10homotopy, borkovsky-doraszelski-kryukov-2012},
but all-solutions homotopies are available only for systems with special
structure, such as polynomial systems.
Our Bellman system is not polynomial, so applying homotopy methods would
essentially be a different multi-start search heuristic.

We first generate \NSample{} parameter vectors from a Halton low-discrepancy sequence over
$\thetaM \in [\ThetaMMin, \ThetaMMax]$, $\thetaC \in [\ThetaCMin, \ThetaCMax]$,
$\thetaEC \in [\ThetaECMin, \ThetaECMax]$, $\thetaSV \in [\ThetaSVMin, \ThetaSVMax]$,
$\rho \in [\RhoMin, \RhoMax]$, and $\lambda \in [\LamMin, \LamMax]$.
We focus on the case of symmetric rates $\lambda_1 = \lambda_2 = \lambda$ for simplicity
and assume that the errors follow a type I extreme value distribution,
as stated in Assumption~\ref{assp:ev1}.
With the logit scale normalized to one, payoffs are interpreted in
units of the errors.
The time unit is left unspecified, so rates and flow payoffs matter
only through ratios such as $\lambda/\rho$ and $\thetaM/\rho$;
references to standard discount rates below assume annual units.
For each parameter vector, we perform a multi-start search from
5{,}000 initial value functions.%
\footnote{In total, the search performs over 500 billion Newton solves.
The implementation uses Fortran 2008 with OpenMP parallelization.
Our results were obtained using GNU Fortran 15.1 and LAPACK 3.8.0.
The Newton solver uses a relative convergence tolerance of $10^{-13}$,
with at most 500 iterations per solve.
The search is fully deterministic: parameter vectors and starting
values are generated by Halton sequences indexed by integer seeds,
so the results are exactly reproducible.
The full search required 37.9 hours on a single 48-core cluster node
(Intel Xeon Platinum 8260).}
Six starting values are deterministic: the zero vector, the vector
of perpetuity values $u_{ik}/\rho$, and refinements of these based on
20 and 100 iterations of the Bellman system operator $T$.
Since $\lambda_1 = \lambda_2$, the game is expected to admit a symmetric
equilibrium.
To target such equilibria, we also draw 500 starting values from the
subspace of symmetric value functions.
The remaining 4{,}494 starting values are quasi-random Halton draws from
a rectangle that contains every equilibrium.%
\footnote{Following the proof of Lemma~\ref{lem:value-bounds},
we can obtain level bounds: each transition term of the Bellman system operator
lies between $\min_k V_{ik}$ and $\max_k V_{ik}$, and the logit surplus
satisfies $\max\{z, 0\} \leq \log(1 + e^z) \leq \max\{z, 0\} + \log 2$,
so any fixed point satisfies
$\rho \max_k V_{ik} \leq \max_k u_{ik}
+ \lambda_i(\max\{0, \thetaEC, \thetaSV\} + \log 2)$
and $\rho \min_k V_{ik} \geq \min_k u_{ik}$.
Every equilibrium therefore satisfies
$\min\{0, \thetaM, \thetaM + \thetaC\}/\rho \leq V_{ik}
 \leq [\max\{0, \thetaM, \thetaM + \thetaC\}
 + \lambda_i(\max\{0, \thetaEC, \thetaSV\} + \log 2)]/\rho$
for all $i, k$.}
For each asymmetric solution found, we exchange the firm values and
refine by Newton's method to recover the mirror-image counterpart.
Distinct equilibria are classified by clustering the conditional
choice probabilities in the sup norm with a tolerance of
$10^{-3}$.
We also classify each equilibrium $V^*$ as stable or unstable under
iteration of the Bellman system operator $T$: $V^*$ is stable if the
spectral radius $\rho_{\mathrm{DT}}$ of the Jacobian $DT(V^*)$ is less than
one, so that iterating $T$ from nearby starting values converges to
$V^*$.

For robustness, after completing searches from each of the 5{,}000
starting values, we check three additional conditions to determine
whether to continue with additional starting values or stop.
If an even number of equilibria (including zero) is found, no symmetric
equilibrium is found, or a unique unstable equilibrium is found,
then the search continues up to two more times.
Each of these conditions indicates a missed equilibrium: existence and
generic oddness (by analogy with the discrete-time result of
\citealp{herings-peeters-2004}) imply an odd count, a symmetric game
is expected to
have a symmetric equilibrium, and in the standard configuration we
find below, a lone unstable equilibrium suggests the stable
mirror-image pair may have been missed.
For each continuation, we add 2{,}000 Halton draws
(refined by 20 iterations of $T$ on the first retry and 100 on the
second) and 500 symmetric starting values.
Even a thorough multi-start search is not guaranteed to find all
equilibria \citep[see][]{besanko10learning}, so in the following
``unique'' means that the search described returned a single
equilibrium.

\subsection{Prevalence of Uniqueness}

\begin{table}[t]
\centering
\caption{Distribution of the number of equilibria}
\label{tab:neq-distribution}
\begin{tabular}{crr}
\toprule
Number of Equilibria & Count & Percent \\
\midrule
1 & 96{,}694{,}508 & 96.69\% \\
3 & 3{,}303{,}595 & 3.30\% \\
5 & 1{,}884 & 0.0019\% \\
7 & 12 & $<\!0.0001$\% \\
\midrule
Total & 99{,}999{,}999 & 100.00\% \\
\bottomrule
\end{tabular}
\begin{flushleft}
\small \textit{Notes:} Based on 100{,}000{,}000 parameter vectors from a Halton
sequence.
The search was incomplete in 1 case, finding no equilibrium.
This case is excluded from the total.
\end{flushleft}
\end{table}

Table~\ref{tab:neq-distribution} reports the distribution of the
number of equilibria across the sampled parameter vectors.
Overall, the equilibrium is unique in \PctUnique{} of
cases, but this rate reflects the wide sampling region.
In \PctThreeEq{} of cases, the solver found three equilibria.
This is by far the most common outcome among non-unique cases.
The remaining \PctFiveOrMoreEq{} of cases have five or seven equilibria.
In all but one case, the solver successfully returned one or more
equilibria, and in all successful cases the number of equilibria
was odd.%
\footnote{Equilibrium existence is guaranteed by Theorem~1 of
\cite{ctgames} under Assumptions~\ref{assp:dx}--\ref{assp:pi},
so the single case with no equilibrium found represents an
incomplete search due to numerical issues encountered with
extreme parameter values.}
When we restrict attention to economically relevant parameter values
where monopoly is viable but duopoly is unprofitable
($\thetaM + \thetaC < 0$), firms discount at standard rates
($\rho \leq 0.05$), and moves are not too infrequent relative to
discounting ($\lambda \geq 0.1$, so $\lambda \geq 2\rho$), the
multiplicity rate increases to \MultShareRelevant{}.
Instead of focusing on the overall uniqueness rate, below we will
characterize where in the parameter space multiplicity is likely
to occur.

\subsection{Evaluation of the Theoretical Conditions}

Using the numerical results, we evaluate the sufficient
conditions from Theorem~\ref{thm:contraction} and
Theorem~\ref{thm:symmetric-uniqueness}.
Since the numerical model uses the logit specification of
Assumption~\ref{assp:ev1}, we check the conditions with
$\cF = \cFLogit{}$ and $\cFp = \cFpLogit{}$
(Lemma~\ref{lem:logit-lipschitz}).
To allow for the possibility that uniqueness extends beyond the
regions characterized by the theorems, we evaluate each condition
at several thresholds $\tau$.
In the case of Theorem~\ref{thm:contraction}, we can restate the
condition as $\lambarminus (U + \lambar \Psi) / (2\rho^2) < \tau$,
where $\tau = 1$ is equivalent to the condition of the theorem and
$\tau > 1$ represents a relaxation.
For Theorem~\ref{thm:symmetric-uniqueness}, we apply the contraction
inequality~\eqref{eq:symmetric-contraction} exactly and instead relax
the switching-cost requirement of
Assumption~\ref{assp:symmetric-costs}.
The assumption requires $\thetaEC + \thetaSV = 0$, so we consider
relaxations $|\thetaEC + \thetaSV| < \tau$ with $\tau > 0$ to allow
switching costs that do not offset exactly.
Note that the threshold $\tau$ is not comparable across the two cases:
for Theorem~\ref{thm:contraction} it scales the sufficient inequality,
while for Theorem~\ref{thm:symmetric-uniqueness} it bounds the
deviation from offsetting switching costs in payoff units.

\begin{table}[t]
\centering
\caption{Coverage of the sufficient conditions for uniqueness}
\label{tab:verification}
\begin{tabular}{lrrrr}
\toprule
Threshold $\tau$ & Predicts Unique & Precision & False Positives & Recall \\
\midrule
\multicolumn{5}{l}{\emph{Panel A. Theorem~\ref{thm:contraction}: $\lambarminus (U + \lambar \Psi) / (2\rho^2) < \tau$}} \\
\midrule
1 & 999{,}421 & 100.0\% & 0 & 1.0\% \\
2 & 1{,}813{,}168 & 100.0\% & 0 & 1.9\% \\
5 & 3{,}782{,}160 & 99.8\% & 6{,}067 & 3.9\% \\
10 & 6{,}341{,}101 & 99.0\% & 60{,}365 & 6.5\% \\
20 & 10{,}293{,}998 & 97.9\% & 216{,}434 & 10.4\% \\
\midrule
\multicolumn{5}{l}{\emph{Panel B. Theorem~\ref{thm:symmetric-uniqueness}: \eqref{eq:symmetric-contraction} and $|\thetaEC + \thetaSV| < \tau$}} \\
\midrule
0.2 & 7{,}414{,}923 & 100.0\% & 0 & 7.7\% \\
0.5 & 17{,}969{,}963 & 99.9\% & 36 & 18.6\% \\
1.0 & 34{,}048{,}658 & 99.9\% & 13{,}519 & 35.2\% \\
2.0 & 60{,}531{,}716 & 99.6\% & 213{,}733 & 62.4\% \\
3.0 & 79{,}447{,}837 & 99.1\% & 727{,}282 & 81.4\% \\
$\infty$ & 94{,}580{,}777 & 98.1\% & 1{,}827{,}215 & 95.9\% \\
\bottomrule
\end{tabular}
\begin{flushleft}
\small \textit{Notes:} Precision is the fraction of cases predicted unique
that are in fact unique. Recall is the fraction of all unique cases that
the condition identifies.
Precision is shown as 100.0\% only when there are no false positives.
In Panel~A, $\tau = 1$ is the exact sufficient condition of
Theorem~\ref{thm:contraction}, and $\tau > 1$ relaxes it by the factor
$\tau$.
In Panel~B, we impose the requirement~\eqref{eq:symmetric-contraction}
of Theorem~\ref{thm:symmetric-uniqueness} exactly while $\tau$ relaxes
the switching-cost requirement $\thetaEC + \thetaSV = 0$.
\end{flushleft}
\end{table}

The results are shown in Table~\ref{tab:verification}, which classifies
predictions as ``positive'' or ``negative.''
A positive prediction here means that the specified uniqueness condition
is satisfied.
A negative prediction, on the other hand, simply means that the condition
is false but does not imply that uniqueness fails.
Precision is defined as $\text{TP}/(\text{TP} + \text{FP})$, where TP
and FP are the numbers of true positives and false positives, respectively.
Precision is thus the fraction of positive predictions that in fact have a
unique equilibrium.
Recall is defined as $\text{TP}/(\text{TP} + \text{FN})$, the fraction
of unique cases captured by the condition, where FN denotes the number
of false negatives.
Since the conditions are sufficient for uniqueness, false positives
should not occur, and we expect precision to be 100\%.
Recall instead measures how much of the overall uniqueness region the
condition includes.

There are no false positives at our smallest thresholds, so the
numerical results are consistent with both theorems.
For Theorem~\ref{thm:contraction}, we encounter the first false
positive at $\tau = \ThmOneFrontier{}$, so the lower bound on
$\rho^2$ is conservative by roughly a factor of two in our search.
The theorem imposes no special structure on payoffs and applies to
games with any finite number of players, any finite state space, and
any bounded error-difference density, but the cost is that the
coverage is limited to \ThmOneRecall{} of all unique-equilibrium
cases.
Theorem~\ref{thm:symmetric-uniqueness} trades off generality for a
condition that does not require strong discounting.
For $\tau = 0.2$, we retain perfect precision and include
\ThmTwoRecallOneFifth{} of all unique cases.
The first false positive appears at $\tau = \ThmTwoFrontier{}$.
With $\tau = 3$, the condition identifies \ThmTwoRecallThree{}
of all unique-equilibrium cases, but at the expense of more false
positives.

The false positives that appear as Assumption~\ref{assp:symmetric-costs}
is relaxed, and that remain when it is dropped entirely ($\tau = \infty$),
show that the inequality from Theorem~\ref{thm:symmetric-uniqueness} is
not by itself sufficient for uniqueness.
However, near the exactly offsetting case it appears to be close to sharp.
The inequality~\eqref{eq:symmetric-contraction} holds at
\ThmTwoSliceHolds{} of the \ThmTwoSliceN{} vectors with
$|\thetaEC + \thetaSV| < 0.2$, and at every vector in this slice with
multiple equilibria the left-hand side is at least
\ThmTwoSliceFrontier{}, so
the inequality could not be relaxed far without introducing false
positives.
The low recall at small $\tau$ in Panel~B of Table~\ref{tab:verification}
therefore reflects the small number of sampled vectors with nearly
offsetting switching costs, rather than looseness in the inequality.
This suggests that a version of Theorem~\ref{thm:symmetric-uniqueness}
with slack in the restriction $\thetaEC + \thetaSV = 0$ may hold.

\subsection{Determinants of Multiplicity}

\begin{figure}[t]
\centering
\includegraphics[width=\textwidth]{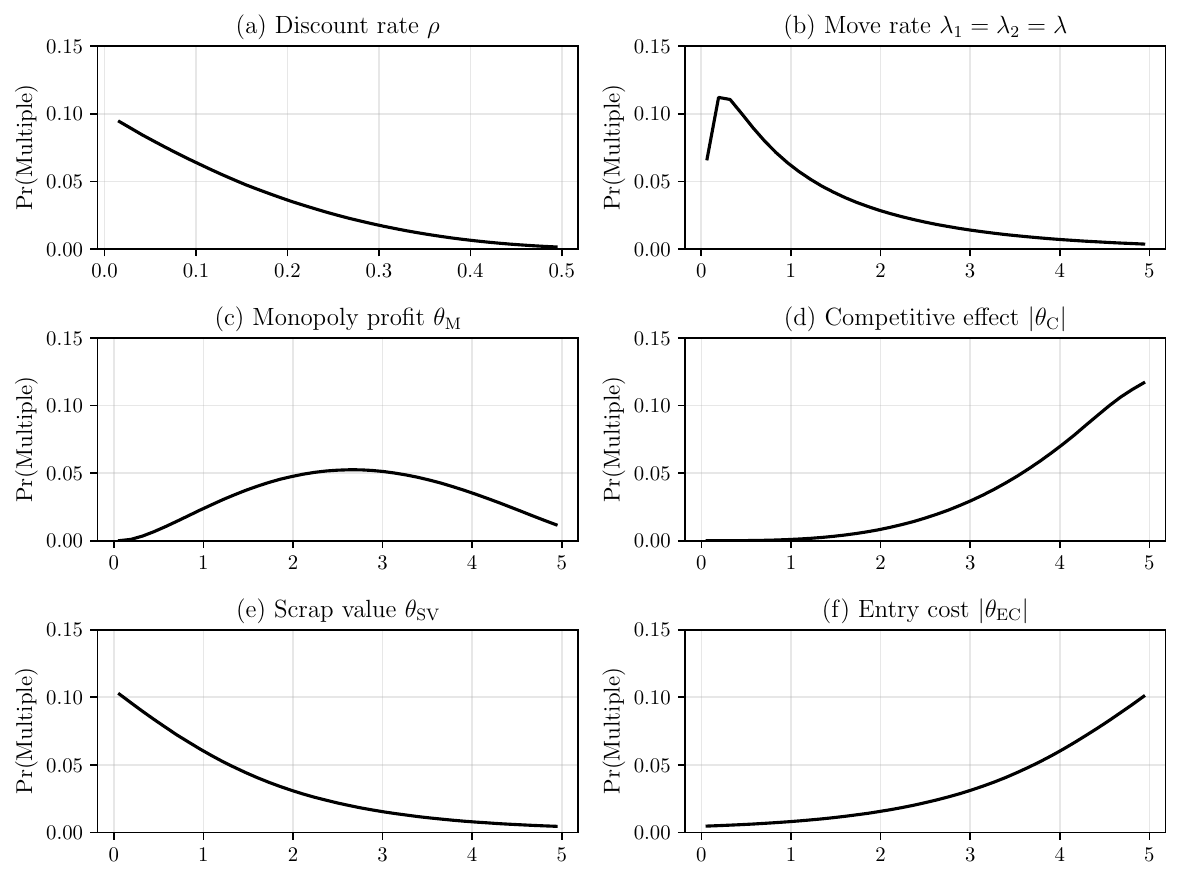}
\caption{Empirical probability of multiple equilibria as a function of
model parameters, averaged over all other parameters.}
\label{fig:profiles}
\end{figure}

The numerical results exhibit several patterns in the parameters
and underlying economic motives that determine multiplicity.
Figure~\ref{fig:profiles} provides a marginal summary of how each
model parameter affects multiplicity.
Overall, multiplicity decreases with the discount rate $\rho$~(a) and
scrap value $\thetaSV$~(e).
Multiplicity increases with the competitive effect $|\thetaC|$~(d) and
entry cost $|\thetaEC|$~(f).
The effect of the move arrival rate $\lambda$~(b) is nonmonotone.
Multiplicity is more likely to occur at intermediate values of
$\lambda$ than at extreme values due to a mechanism we return to below.
The monopoly flow profit $\thetaM$~(c) gradually increases and then
decreases multiplicity.
When $\thetaM$ is small, firms have little incentive to enter, but
when $\thetaM$ is large, duopoly becomes viable
($\thetaM + \thetaC > 0$) and both firms find it profitable to enter.

In Figure~\ref{fig:determinants}, we examine the joint relationships
between various structural parameters.
We focus on three determinants of multiplicity: patience and strategic
interaction, competition intensity, and switching costs.

\paragraph{Patience and strategic interaction.}
Figure~\ref{fig:determinants}(a) shows the probability of multiple
equilibria as a function of the discount rate $\rho$ and the move
arrival rate $\lambda$.
Multiplicity is concentrated in the lower-left corner, where firms are
patient and moves are infrequent.
As expected from the theoretical results, and previously shown in the
marginal plot (Figure~\ref{fig:profiles}(b)), the effect of $\lambda$
is nonmonotone.
Increasing $\lambda$ has two effects: it increases strategic
interaction, so firms interact more frequently, but it also decreases
firms' ability to commit to actions, since each firm expects
the other to re-evaluate and respond more frequently.
In the limit, as $\lambda \to \infty$, switching costs dominate
flow payoffs and firms expect rival states to persist only briefly,
so best responses become nearly independent of rival states.
Under offsetting switching costs, Corollary~\ref{cor:discount-free}
shows that the equilibrium is then unique.
Therefore, multiplicity is highest at intermediate values of
$\lambda$ while uniqueness is more likely for very low or very
high values of $\lambda$.
As a result, the multiplicity contours form arcs whose peaks
shift to the right as $\rho$ increases.

\begin{figure}[t]
\centering
\includegraphics[width=\textwidth]{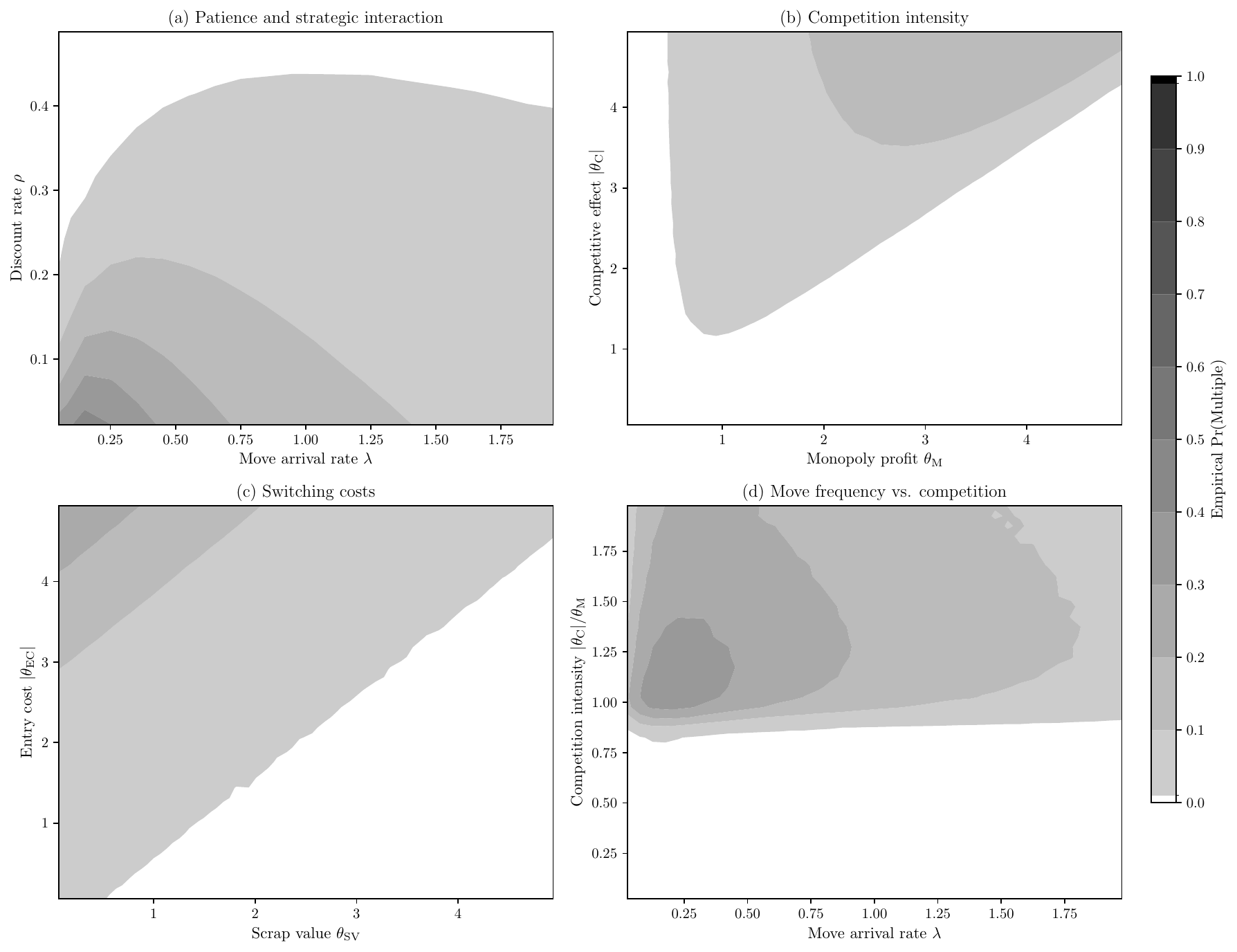}
\caption{Empirical probability of multiple equilibria as a function of
important model parameter pairs.
Shading indicates the fraction of parameter vectors in each bin with
multiple equilibria, from white (all unique) to black (all multiple).
Based on \NSample{} parameter vectors from a Halton sequence with
$\lambda_1 = \lambda_2$.}
\label{fig:determinants}
\end{figure}

\paragraph{Competition intensity.}
Figure~\ref{fig:determinants}(b) shows the multiplicity probability
as a function of $|\thetaC|$ and $\thetaM$.
The ratio of these parameters is a measure of competition intensity,
and indeed there is a sharp threshold just below the diagonal, which
represents the boundary between profitable ($|\thetaC| / \thetaM < 1$)
and unprofitable ($|\thetaC| / \thetaM > 1$) duopoly.
When duopoly is profitable, it is more likely for the model to admit a
unique equilibrium where both firms are active.
Otherwise, the market cannot support both firms, creating a coordination
problem that gives rise to multiple asymmetric equilibria.
These equilibria have a self-fulfilling character, where each firm
stays active only when it expects its rival to exit.
We see the same pattern in Figure~\ref{fig:determinants}(d), which
plots the competition intensity ratio against $\lambda$.
Multiplicity is rare when $|\thetaC|/\thetaM \ll 1$ regardless of
the move arrival rate.

\paragraph{Switching costs.}
Figure~\ref{fig:determinants}(c) shows how multiplicity relates to
switching costs.
Multiplicity is most likely in the upper left (high entry cost, low
scrap value) and least likely in the lower right (low entry cost,
high scrap value), with clear contour lines indicating that the
probability of multiplicity depends heavily on the net switching
cost $\thetaEC + \thetaSV$.
High scrap values induce exit, which reduces the coordination problem
discussed above, while high entry costs exacerbate it.
This also highlights the role of switching costs in commitment.
When firms have to pay high entry costs and recover little value upon
exit (upper-left corner), state transitions are costly in
both directions.
Market structure adjusts slowly, so each firm's current status serves
as a commitment, and as a result, it is easier to support multiple
equilibria.
On the other hand, when firms can enter easily and recover large
values when they exit (lower-right corner), switching is
relatively costless and commitment is weak, making it more difficult
to support multiple equilibria.

\paragraph{Distributions of payoff-related quantities.}
Figure~\ref{fig:histograms} displays the densities of four
payoff-related quantities conditional on unique-equilibrium and
multiple-equilibria cases.
The dark gray densities are based on the \PctMultiple{}
of parameter vectors that exhibit multiple equilibria.
First, the monopoly profit $\thetaM$~(a) is nearly uniformly
distributed among unique-equilibrium cases, but is unimodal
under multiplicity.
The duopoly profit $\thetaM + \thetaC$~(b) shows more separation.
The distribution for multiple-equilibria cases is concentrated
at negative values (unprofitable duopoly) while the unique-equilibrium
distribution resembles the unconditional triangular distribution (since
duopoly profit is the sum of two uniformly distributed variables whose
supports have equal widths).
The distribution of the competition intensity ratio
$|\thetaC|/\thetaM$~(c) for multiple-equilibria cases is concentrated
just above the threshold near one, while the density for
unique-equilibrium cases is highest below one.
Finally, the distribution of the net switching cost
$\thetaEC + \thetaSV$~(d) is shifted toward negative values for
multiple-equilibria cases,
where entry costs exceed scrap values (increasing commitment).
The distribution conditional on uniqueness is again nearly triangular.

\begin{figure}[t]
\centering
\includegraphics[width=\textwidth]{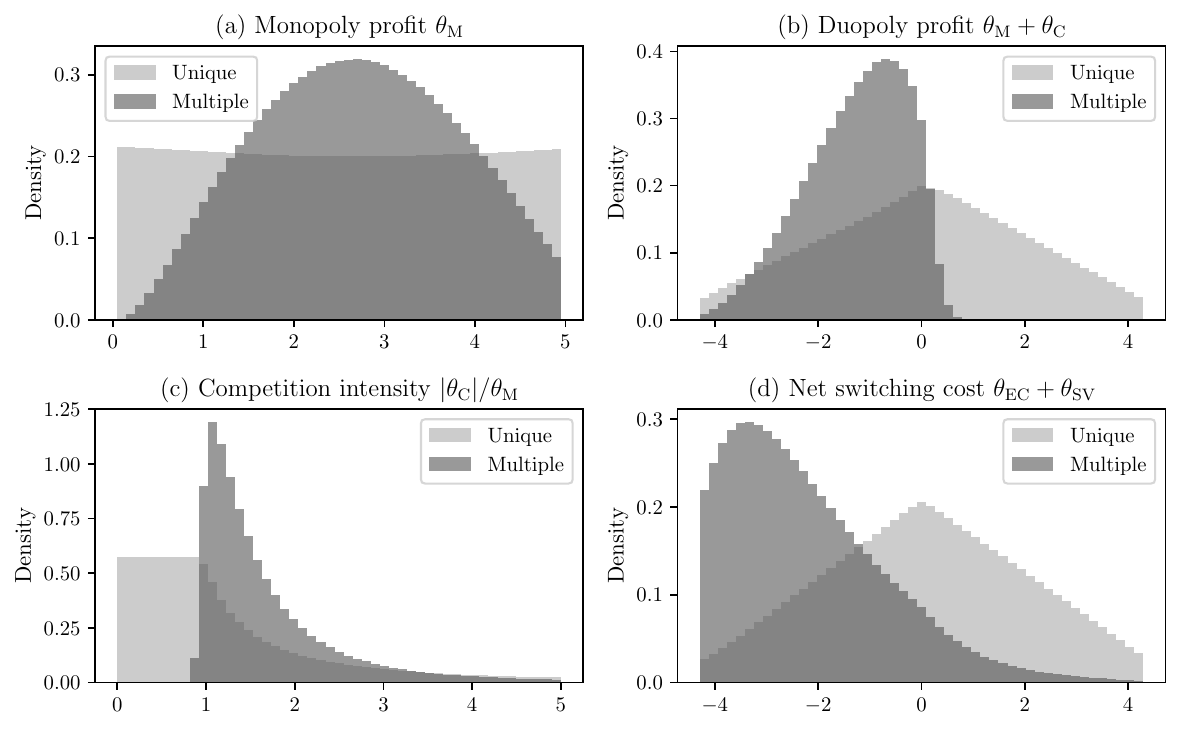}
\caption{Conditional densities of payoff-related quantities by equilibrium
multiplicity. Each panel shows the distribution of a quantity among
parameter vectors with a unique equilibrium (light) and among those
with multiple equilibria (dark).}
\label{fig:histograms}
\end{figure}

\subsection{Equilibrium Selection and Long-Run Outcomes}
\label{sec:differences}

In this section we quantify how the equilibria differ economically
in three ways: in long-run market structure, in producer surplus,
and in which firm occupies the market.
Each distinct equilibrium induces a continuous-time Markov chain
over market states $x_k$.
For each equilibrium at each of the \NMultiSeeds{} parameter vectors
where we found multiple equilibria, we compute the stationary
distribution along with two measures of producer surplus:
the stationary flow of operating profits and the present value
$V_{11} + V_{21}$ in state $x_1 = (0,0)$.
The latter includes switching costs and the expected surplus
from the choice-specific errors.
Table~\ref{tab:outcomes} reports the distribution of spreads:
the maximum minus the minimum over equilibria at each parameter
vector.

\begin{table}[t]
\centering
\caption{Spreads of long-run outcomes across equilibria}
\label{tab:outcomes}
\begin{tabular}{lrrrrrr}
\toprule
& \multicolumn{3}{c}{All Equilibria} & \multicolumn{3}{c}{Stable Pair} \\
\cmidrule(lr){2-4} \cmidrule(lr){5-7}
Long-Run Outcome & Median & 90\% & Max & Median & 90\% & Max \\
\midrule
\multicolumn{7}{l}{\emph{Aggregate outcomes}} \\
Duopoly share & 0.128 & 0.337 & 0.892 & 0.000 & 0.000 & 0.000 \\
Monopoly share & 0.125 & 0.336 & 0.892 & 0.000 & 0.000 & 0.000 \\
Empty-market share & 0.001 & 0.011 & 0.099 & 0.000 & 0.000 & 0.000 \\
Producer surplus, flow $/\,U$ & 0.164 & 0.406 & 0.897 & 0.000 & 0.000 & 0.000 \\
Producer surplus, PDV $/\,(U/\rho)$ & 0.063 & 0.250 & 0.746 & 0.000 & 0.000 & 0.000 \\
\addlinespace
\multicolumn{7}{l}{\emph{Firm identity}} \\
Firm 1 active share & 0.811 & 0.988 & 1.000 & 0.810 & 0.987 & 1.000 \\
Firm 1 value, PDV $/\,(U/\rho)$ & 0.238 & 0.468 & 0.859 & 0.238 & 0.469 & 0.859 \\
\bottomrule
\end{tabular}
\begin{flushleft}
\small \textit{Notes:} Spread (maximum minus minimum) of each
outcome across equilibria at each of the 3{,}305{,}491 parameter vectors with multiple equilibria.
Stable Pair columns condition on the 3{,}288{,}866 vectors with exactly two stable equilibria.
Surplus and value spreads are expressed in units of the
span of flow payoffs $U$ or the present value $U/\rho$,
so that entries are comparable across parameter vectors.
PDV denotes present discounted value.
\end{flushleft}
\end{table}

The long-run probability of duopoly differs across all equilibria,
including the unstable one, by \MedianDuoSpread{} at the median
across parameter vectors with multiple equilibria, and by \NinetyDuoSpread{}
at the 90th percentile.
Since the shares sum to one, the duopoly and monopoly spreads can
differ by at most the empty-market spread.
The latter is small for most parameter vectors, resulting in
the nearly identical marginal distributions in Table~\ref{tab:outcomes}.
What equilibrium selection changes, therefore, is how the stationary
distribution splits between duopoly and monopoly, not whether
the market is served.
Equilibrium selection also affects total producer surplus: the flow
measure differs by \MedianPSFlowSpread{}
at the median (normalized by the span of flow payoffs $U$).

Across all sampled parameter vectors, we found that \PctUniqueStable{}
of unique equilibria are stable.
Among parameter vectors that produce multiple equilibria,
\PctCanonicalStability{} exhibit a standard configuration with
three equilibria, two stable and one unstable.
Table~\ref{tab:stability} displays the full distribution of
equilibrium counts and stability among multiple-equilibria cases.%
\footnote{The \ZeroStablePct{} of multiple-equilibria cases with
three unstable equilibria have the same structure as the standard
configuration, one symmetric equilibrium and a mirror-image pair,
but the paired equilibria are no longer stable.
The smallest spectral radius among the three equilibria is
\ZeroStableMedRadius{} at the median.
These cases are characterized by low discount and move arrival rates
(median $\rho = \ZeroStableMedRho{}$ and
$\lambda = \ZeroStableMedLambda{}$)
and strong competition effects (median
$|\thetaC|/\thetaM = \ZeroStableMedComp{}$).}
Stability is important for both computation, since only stable
equilibria can be reached by iterating $T$ from all nearby starting
values, and estimation, where an analogous property of the iterative
continuous-time nested pseudo-likelihood (CTNPL) estimator determines
its convergence behavior.

\begin{table}[t]
\centering
\caption{Stability of multiple equilibria}
\label{tab:stability}
\begin{tabular}{ccccrr}
\toprule
\multicolumn{2}{c}{All Equilibria} & \multicolumn{2}{c}{Symmetric Equilibria} & \multicolumn{2}{c}{Occurrence} \\
\cmidrule(lr){1-2}\cmidrule(lr){3-4}\cmidrule(lr){5-6}
Total & Stable & Stable & Unstable & Count & Percent \\
\midrule
3 & 0 & 0 & 1 & 14{,}878 & 0.45\% \\
3 & 2 & 0 & 1 & 3{,}288{,}717 & 99.49\% \\
5 & 2 & 0 & 1 & 144 & 0.0044\% \\
5 & 3 & 1 & 0 & 1{,}740 & 0.05\% \\
7 & 2 & 0 & 1 & 5 & 0.0002\% \\
7 & 4 & 0 & 1 & 7 & 0.0002\% \\
\bottomrule
\end{tabular}
\end{table}

Within the standard three-equilibrium configuration, selection between
the two stable equilibria is primarily distributional in our symmetric
model: it determines which firm occupies the market, but not the
distributions of aggregate market structure or total producer surplus.
The two stable equilibria are mirror images with swapped firm roles,
so the aggregate-outcome entries in the stable-pair columns of
Table~\ref{tab:outcomes} are identically zero.
Firm~1's stationary active share, on the other hand, differs across
equilibria by \MedianFOneSpread{} at the median
and \NinetyFOneSpread{} at the 90th percentile.
Being the firm with the higher activity probability is worth
\MedianFirmValueSpread{} at the median in present discounted value
terms (normalized by $U/\rho$).

\section{Implications for Estimation and Counterfactuals}
\label{sec:estimation}

There are well-known consequences of multiplicity for applied
researchers: estimation is complicated by the lack of a well-defined
reduced form, and counterfactual predictions depend on equilibrium
selection.
In this section, we quantify both of these using the equilibria
computed in Section~\ref{sec:numerical} in combination with a targeted
Monte Carlo experiment and a simple counterfactual exercise.

\subsection{Estimation under Multiplicity}
\label{sec:estimation-mc}

We consider maximum likelihood estimation (MLE) of
$\theta = (\thetaM, \thetaC, \thetaEC)$ using data on the
states and actions observed in \MCMarkets{} independent markets over a
time interval $[0, \MCHorizon{}]$.
Following the literature, we take $(\rho, \lambda)$ as known, and
we fix the scrap value $\thetaSV$ at its true value to normalize
payoffs for identification
\citep{magnac-thesmar-2002, aguirregabiria-suzuki-2014}.%
\footnote{For any $b$, replacing $(\thetaM, \thetaEC, \thetaSV)$ by
$(\thetaM + \rho b, \thetaEC - b, \thetaSV + b)$ shifts each firm's
value function by $b$ in the states where it is active and leaves
all equilibrium choice probabilities unchanged.
Fixing $\thetaSV$ removes this indeterminacy.}

Four data-generating processes (DGPs) are specified along
a $\thetaC$ slice that passes through a bifurcation in the
equilibrium mapping.%
\footnote{We assume all markets play the same equilibrium,
allowing data to be pooled.
\citet{otsu-pesendorfer-takahashi-2016} consider testing
whether different markets play different equilibria.}
Two DGPs are on the unique-equilibrium side of the bifurcation,
and the remaining two share a single multiple-equilibria parameter
vector, but one generates data from a stable equilibrium and the
other from the unstable equilibrium (the stable and unstable DGPs,
respectively).

\begin{figure}[t]
\centering
\includegraphics[width=0.85\textwidth]{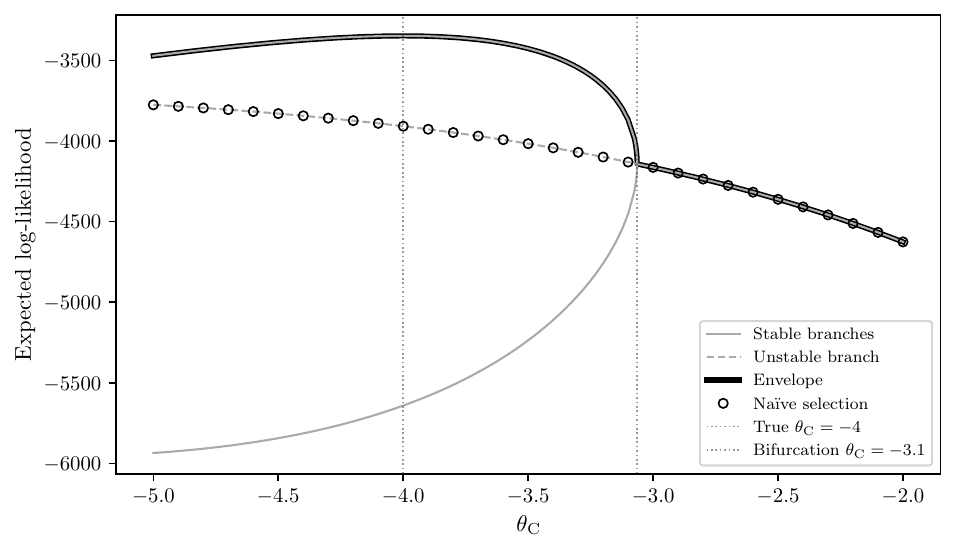}
\caption{Expected log-likelihood of each equilibrium branch
(stable branches solid gray, unstable branch dashed)
along a $\thetaC$ slice through a bifurcation point,
for \MCMarkets{} markets observed over $[0, \MCHorizon{}]$ under the
stable DGP.
The envelope over branches (solid black, oracle MLE objective) is
maximized at the true parameter value.
Circles mark the branch selected by the solver's default starting
value (na{\"i}ve MLE objective).}
\label{fig:loglik}
\end{figure}

The four DGPs can be understood visually in Figure~\ref{fig:loglik},
which plots the expected log-likelihood of each equilibrium branch
along the $\thetaC$ slice, with the data fixed at the
stable DGP (true $\thetaC = \MCThetaCMultiple{}$).
Data are simulated from the equilibrium CCPs corresponding to the top
branch (solid gray line).
The remaining parameters are held fixed at true values
$(\thetaM, \thetaEC, \thetaSV, \rho, \lambda) = (2, -2, 1, 0.05, 1)$.
The unstable DGP has the same true
$\thetaC = \MCThetaCMultiple{}$, but data are simulated instead from
the CCPs corresponding to the unstable branch (dashed gray line).
The remaining two DGPs have true parameters on the unique-equilibrium
side of the bifurcation at $\thetaC = \MCThetaCBifurcation{}$, with
one intentionally near the branching point
($\thetaC = \MCThetaCNear{}$, the near-boundary DGP) and the other
far inside the uniqueness region ($\thetaC = \MCThetaCFar{}$, the
interior DGP).
Neither theorem covers the DGP parameters, so uniqueness at the
near-boundary and interior DGPs was verified numerically.%
\footnote{We selected the DGP to be representative: a standard
discount rate, $\rho = 0.05$, flow payoffs under which monopoly is
profitable but duopoly is not, and a spread in long-run duopoly share
across equilibria near the median among multiple-equilibria cases.}

\begin{table}[tp]
\centering
\caption{Estimation under multiplicity: Monte Carlo}
\label{tab:mc}
\small
\setlength{\tabcolsep}{4pt}
\begin{tabular}{lrrrr}
\toprule
& \multicolumn{3}{c}{Bias (RMSE)} & Selects \\
\cmidrule(lr){2-4}
Estimator & $\thetaM$ & $\thetaC$ & $\thetaEC$ & DGP Eq.\ \\
\midrule
\multicolumn{5}{l}{\emph{Unique (interior): $\thetaC = -2.00$, $\rho_{\mathrm{DT}} = 0.976$, $\rho_{\mathrm{NPL}} = 0.478$}} \\
Na{\"i}ve MLE & $+0.005$ ($0.099$) & $-0.004$ ($0.134$) & $+0.000$ ($0.067$) & 100.00\% \\
Oracle MLE & $+0.005$ ($0.099$) & $-0.004$ ($0.134$) & $+0.000$ ($0.067$) & 100.00\% \\
CTNPL (one iter.) & $-0.002$ ($0.103$) & $+0.004$ ($0.138$) & $-0.001$ ($0.067$) & 100.00\% \\
CTNPL (converged) & $+0.005$ ($0.099$) & $-0.004$ ($0.134$) & $+0.000$ ($0.067$) & 100.00\% \\
\addlinespace
\multicolumn{5}{l}{\emph{Unique (near boundary): $\thetaC = -3.00$, $\rho_{\mathrm{DT}} = 0.990$, $\rho_{\mathrm{NPL}} = 0.968$}} \\
Na{\"i}ve MLE & $+0.002$ ($0.082$) & $-0.006$ ($0.148$) & $-0.002$ ($0.064$) & 100.00\% \\
Oracle MLE & $+0.002$ ($0.082$) & $-0.006$ ($0.148$) & $-0.002$ ($0.064$) & 100.00\% \\
CTNPL (one iter.) & $-0.007$ ($0.089$) & $+0.010$ ($0.159$) & $-0.003$ ($0.064$) & 100.00\% \\
CTNPL (converged) & $-0.002$ ($0.079$) & $+0.011$ ($0.130$) & $-0.001$ ($0.064$) & 100.00\% \\
\addlinespace
\multicolumn{5}{l}{\emph{Multiple (stable DGP): $\thetaC = -4.00$, $\rho_{\mathrm{DT}} = 0.976$, $\rho_{\mathrm{NPL}} = 0.065$}} \\
Na{\"i}ve MLE & $-0.904$ ($0.905$) & $+0.334$ ($0.395$) & $-0.882$ ($0.887$) & 0.00\% \\
Oracle MLE & $-0.000$ ($0.101$) & $+0.008$ ($0.220$) & $-0.009$ ($0.119$) & 100.00\% \\
CTNPL (one iter.) & $-0.017$ ($0.110$) & $+0.042$ ($0.240$) & $-0.018$ ($0.135$) & 99.60\% \\
CTNPL (converged) & $-0.001$ ($0.110$) & $+0.009$ ($0.232$) & $-0.009$ ($0.125$) & 100.00\% \\
\addlinespace
\multicolumn{5}{l}{\emph{Multiple (unstable DGP): $\thetaC = -4.00$, $\rho_{\mathrm{DT}} = 1.126$, $\rho_{\mathrm{NPL}} = 1.463$}} \\
Na{\"i}ve MLE & $-0.000$ ($0.079$) & $-0.008$ ($0.197$) & $-0.002$ ($0.069$) & 100.00\% \\
Oracle MLE & $-0.000$ ($0.079$) & $-0.008$ ($0.197$) & $-0.002$ ($0.069$) & 100.00\% \\
CTNPL (one iter.) & $-0.013$ ($0.087$) & $+0.021$ ($0.210$) & $-0.003$ ($0.069$) & 100.00\% \\
CTNPL (converged) & $-0.161$ ($0.175$) & $+0.789$ ($0.797$) & $+0.016$ ($0.073$) & 0.00\% \\
\bottomrule
\end{tabular}
\begin{flushleft}
\small \textit{Notes:} Bias (RMSE in parentheses) over 500 replications; 200 markets observed continuously over $[0, 25]$;
$(\rho, \lambda)$ known and $\thetaSV$ fixed at its
true value; optimizers start at the
true parameters. The DGPs vary
$\thetaC$ along a slice through a multiple-equilibria parameter
vector (multiplicity boundary at $\thetaC \approx -3.1$).
Section~\ref{sec:estimation-mc} defines the estimators. At the
stable DGP the na{\"i}ve solver's default start selects an
equilibrium 0.31 from the DGP equilibrium in maximum
CCP distance. The final column reports the share of replications
in which the estimated CCPs lie within
0.1 in maximum absolute CCP distance
of the DGP equilibrium.
$\rho_{\mathrm{DT}}$ and $\rho_{\mathrm{NPL}}$ are the spectral radii of
the Jacobian $DT(V^*)$ and the CTNPL operator at the DGP
equilibrium, respectively.
\end{flushleft}
\end{table}

We carry out a Monte Carlo experiment that compares three estimators
across the four DGPs.
First, we consider a na{\"i}ve nested fixed point MLE
\citep{rust87optimal} that assumes uniqueness.
For each trial parameter vector, this estimator finds a single
equilibrium by applying Newton's method to the equilibrium system,
initialized at
$V_{0} = [0,\,0,\,\thetaM/\rho,\,(\thetaM+\thetaC)/\rho,\,0,\,\thetaM/\rho,\,0,\,(\thetaM+\thetaC)/\rho]^\top$,
the vector of perpetuity values $u_{ik}/\rho$, with components
(stacked across players) corresponding to the present discounted
values of monopoly and duopoly flows in the respective states.
Another default a researcher might choose is the zero vector.
Both are symmetric across firms, a point we return to below.

The second estimator is the oracle MLE, which serves as an infeasible
benchmark: for each parameter vector we ``warm-start'' the Newton
solver at the true DGP equilibrium's value function.
The third estimator is the CTNPL estimator of
\citet{blevins-kim-2024}, an iterative estimator
that solves a linear valuation system given the current choice
probabilities at each step, maximizes the implied pseudo-likelihood
over $\theta$, and then updates the CCPs using the best-response
mapping.
We initialize this estimator at nonparametric frequency estimates
of the CCPs, which consistently estimate those of the equilibrium
that generated the data without requiring the researcher
to know it in advance.
We compare CTNPL after a single iteration (the ``two-step''
estimator) and after iterating to convergence.%
\footnote{We terminate the iteration when either the maximum change
in the CCPs falls below $10^{\NPLTolExp}$ or after \NPLMaxStages{}
stages.
All replications converged except in the near-boundary DGP, where
\CTNPLNearConverged{} converged (median \CTNPLNearMedianIter{}
iterations); replications that did not converge are included in
Table~\ref{tab:mc} at their estimates after \NPLMaxStages{} stages.}

In Table~\ref{tab:mc}, we compare the estimates over \MCReps{}
Monte Carlo replications and report the bias and root mean squared
error (RMSE) of each estimator across the four DGPs.
In all cases, we initialize the optimizer at the corresponding true
parameter vector to separate equilibrium selection from optimization
error;
however, the initializing equilibria (value functions or CCPs)
differ across estimators as previously described.
The final column of Table~\ref{tab:mc} reports the percentage of
replications for which the estimator selects the correct DGP
equilibrium.

As expected, the oracle MLE tracks the correct equilibrium branch as
the parameters vary near the truth.
This estimator maximizes the sample counterpart of the upper envelope
in Figure~\ref{fig:loglik}, which is maximized at the true parameter
vector.
Conditional on correct equilibrium selection, the identified
parameters are estimated precisely.
The gap between the oracle MLE and other estimators therefore
represents the equilibrium selection error.

The na{\"i}ve MLE, however, assumes there is always a unique
equilibrium and maximizes the likelihood of whichever equilibrium its
inner solver returns.
As expected, this estimator performs well under both
unique-equilibrium DGPs.
The results for the multiple-equilibria DGPs are somewhat
counterintuitive: the na{\"i}ve MLE performs well under the
unstable DGP but has large biases under the stable DGP.

The reason is that Newton's method solves the equilibrium system
directly, so the equilibrium it returns depends on the starting value
rather than on stability under iteration of $T$.
If $\lambda_1 = \lambda_2$, the equilibrium system commutes with the
firm-relabeling map $\Xi$ from the proof of
Theorem~\ref{thm:symmetric-uniqueness} (which does not depend
on Assumption~\ref{assp:symmetric-costs}), and so do its Jacobian
and each Newton step.
Thus the iterates generated by Newton's method when initialized at a
symmetric value function must be symmetric and can only converge to a
symmetric equilibrium.
Since this slice has only one symmetric equilibrium
branch---the unstable branch---the na{\"i}ve MLE
tracks it, as shown by the circles in
Figure~\ref{fig:loglik}.
The na{\"i}ve objective coincides with the oracle's for the
unstable DGP.
As the final column of Table~\ref{tab:mc} shows, the selection
is correct in 100\% of cases for the unstable DGP and is
consistently wrong for the stable DGP.
The estimator's consistency in the former case therefore hinges
on a coincidence between the equilibrium that generated the data
and the one equilibrium a symmetric initialization can reach.

This dependence on a coincidence is not specific to our DGP.
In the equilibrium sets computed in Section~\ref{sec:numerical},
the parameter vectors with multiple equilibria have exactly one
symmetric equilibrium, which is unstable except in the
five-equilibria cases with three stable equilibria
(Table~\ref{tab:stability}).
Outside these cases, the coincidence breaks down when data are
generated by a stable equilibrium.

CTNPL selects the equilibrium from preliminary frequency
estimates of the CCPs.
With one iteration, it achieves bias and RMSE nearly
as low as the oracle MLE under the unique-equilibrium DGPs.
Under the multiple-equilibria DGPs, the results differ substantially
after iteration to convergence.
The relevant diagnostic is not the spectral radius $\rho_{\mathrm{DT}}$, which governs stability
under iteration of $T$, but rather the spectral radius
$\rho_{\mathrm{NPL}}$ of the CTNPL operator, which maps the current
choice probabilities to their update after one iteration, including
the re-estimation of $\theta$.
This is the continuous-time analogue of the discrete-time NPL
convergence condition discussed by \cite{kasahara-shimotsu-2012}.
We report both measures in each DGP panel heading in
Table~\ref{tab:mc} for comparison.%
\footnote{See \cite{aguirregabiria-mira-2002,aguirregabiria-mira-2007},
\cite{kasahara-shimotsu-2008,kasahara-shimotsu-2012},
\cite{pesendorfer-schmidt-dengler-2010},
\cite{egesdal15dynamic},
\cite{aguirregabiria-marcoux-2021},
\cite{bugni-bunting-2021},
\cite{blevins-kim-2024},
and \cite{dearing-blevins-2025},
among others, for discussions of convergence of iterative estimators.}
At all four DGPs, the two radii fall on the same side of unity.%
\footnote{These are distinct operators and their radii need not
agree. $\rho_{\mathrm{DT}}$ is bounded below by the effective discount factor
$(\lambda_1 + \lambda_2)/(\rho + \lambda_1 + \lambda_2) = 0.976$ here,
since adding a constant $c$ to every component of $V$ shifts $T(V)$
by $c\eta/(\rho + \eta)$, so the constant vector is an eigenvector of
$DT$ with eigenvalue $\eta/(\rho + \eta)$.}
Since the CTNPL iteration is contractive
($\rho_{\mathrm{NPL}} = \NPLRadiusStable{}$) in the stable DGP,
iterating to convergence selects the correct equilibrium.
For the unstable DGP, the CTNPL operator is also unstable
($\rho_{\mathrm{NPL}} = \NPLRadiusUnstable{}$), so further iteration
selects the wrong equilibrium and increases bias in the estimated
parameters.
When the equilibrium is unique, both for the interior DGP and even
near the multiplicity boundary (where $\rho_{\mathrm{NPL}} = \NPLRadiusNear{}$),
the converged CTNPL estimates remain on par with the MLE.
In practice, although the researcher does not know which equilibrium
was realized in the DGP, $\rho_{\mathrm{NPL}}$ can be evaluated at
the consistent one-step CTNPL estimates before iterating further.

\subsection{Counterfactual Selection Ambiguity}
\label{sec:counterfactual}

\begin{table}[tp]
\centering
\caption{Selection ambiguity of counterfactual predictions under multiplicity}
\label{tab:counterfactual}
\begin{tabular}{lrrrrrr}
\toprule
Subsidy & Min & 10\% & Median & 90\% & Max & $n$ \\
\midrule
\multicolumn{7}{l}{\emph{Change in duopoly share, all counterfactual eq.}} \\
$+0.25$ & 0.000 & 0.022 & 0.129 & 0.338 & 0.775 & 8{,}641 \\
$+0.5$ & 0.000 & 0.023 & 0.132 & 0.339 & 0.775 & 7{,}340 \\
$+1.0$ & 0.000 & 0.026 & 0.137 & 0.346 & 0.771 & 5{,}268 \\
Offsetting & 0.001 & 0.037 & 0.169 & 0.365 & 0.713 & 1{,}977 \\
\addlinespace
\multicolumn{7}{l}{\emph{Change in duopoly share, stable counterfactual eq.}} \\
$+0.25$ & 0.000 & 0.000 & 0.000 & 0.000 & 0.342 & 8{,}640 \\
$+0.5$ & 0.000 & 0.000 & 0.000 & 0.000 & 0.342 & 7{,}339 \\
$+1.0$ & 0.000 & 0.000 & 0.000 & 0.000 & 0.344 & 5{,}266 \\
Offsetting & 0.000 & 0.000 & 0.000 & 0.000 & 0.000 & 1{,}977 \\
\addlinespace
\multicolumn{7}{l}{\emph{Change in firm 1 active share, stable counterfactual eq.}} \\
$+0.25$ & 0.003 & 0.376 & 0.804 & 0.988 & 1.000 & 8{,}640 \\
$+0.5$ & 0.010 & 0.367 & 0.802 & 0.988 & 1.000 & 7{,}339 \\
$+1.0$ & 0.007 & 0.365 & 0.804 & 0.990 & 1.000 & 5{,}266 \\
Offsetting & 0.039 & 0.378 & 0.818 & 0.995 & 1.000 & 1{,}977 \\
\bottomrule
\end{tabular}
\begin{flushleft}
\small \textit{Notes:} Entry subsidy $\thetaEC \to \thetaEC + s$ at 10{,}000 parameter vectors with multiple equilibria and at least one stable equilibrium.
For each, we select a stable equilibrium as the baseline.
The Offsetting row sets $s = -(\thetaEC + \thetaSV)$ where
this is positive, so that $\thetaEC + s + \thetaSV = 0$.
Each panel reports the spread (maximum minus minimum)
across the counterfactual equilibria of the stated effect. All panels condition
on the presence of multiple counterfactual equilibria, which
results in fewer cases $n$ as the fixed subsidy increases.
The stable panels further require at least two stable
counterfactual equilibria.
\end{flushleft}
\end{table}

Setting aside estimation, equilibrium selection remains
an issue for counterfactual policy analysis.
Even if the structural parameters can be consistently estimated, the
equilibrium played under counterfactual parameters is indeterminate
under multiplicity \citep{aguirregabiria-2012, jun-pinkse-2020}.
To measure the effects of this indeterminacy, we implement an
entry subsidy in the dynamic duopoly model: we set
$\thetaEC \rightarrow \thetaEC + s$ at the first \NCfSeeds{}
parameter vectors in the Halton sequence with multiple equilibria, at
least one of which is stable.
We take the baseline equilibrium to be the stable equilibrium
favoring firm~1 (a labeling normalization) and compute the
full set of counterfactual equilibria following the same
multi-start procedure used in Section~\ref{sec:numerical}.

Even with the baseline equilibrium known, the predicted effect of the
subsidy depends on which counterfactual equilibrium arises in the
market.
The resulting ambiguity can be seen in Table~\ref{tab:counterfactual},
which reports quantiles of the distribution (across parameter vectors)
of spreads of the predicted effects across counterfactual equilibria.
Each spread is the largest difference between the predictions of any
two selection rules.

We highlight three main results.
First, the subsidy itself changes the coordination problem in the
market and reduces multiplicity: the share of parameter vectors
with multiple counterfactual equilibria falls from
\CfMultiShareQuarter{} at $s = 0.25$ to \CfMultiShareFull{} at $s = 1.0$.
Indeed, the subsidy raises the net switching cost
$\thetaEC + \thetaSV$---a key determinant of multiplicity, as shown in
Figure~\ref{fig:determinants}(c).
Second, among stable counterfactual equilibria, aggregate predictions
such as the duopoly share do not depend on selection: because the
stable equilibria typically occur in mirror-image pairs, the
distribution is zero up to the 90th percentile.
The nonzero maximum arises due to the five-equilibria cases with
three stable equilibria.
Third, counterfactual effects involving firm identity---which firm
serves the market---are very sensitive to equilibrium selection
even when attention is restricted to stable equilibria, as
essentially the entire market is reallocated between monopoly
equilibria.

These findings suggest that researchers should carefully consider
possible multiplicity of counterfactual equilibria, especially for
policy questions involving firm identity.
Label-invariant aggregates coincide across stable counterfactual
equilibria in our symmetric model, although this need not hold with
asymmetric firms.

For counterfactual parameters satisfying the conditions of
Theorem~\ref{thm:contraction} or
Theorem~\ref{thm:symmetric-uniqueness}, the analysis can proceed
without ambiguity.
To illustrate, consider the offsetting subsidy $s = -(\thetaEC + \thetaSV)$,
which makes entry costs fully recoverable upon exit, as in
Assumption~\ref{assp:symmetric-costs}.
The uniqueness condition of Theorem~\ref{thm:symmetric-uniqueness}
then depends only on $\thetaC$, $\lambda$, and $\rho$, which are
not affected by the subsidy.
At the \CfOffsetN{} parameter vectors where $s > 0$, the condition
holds in \CfOffsetThmShare{} of cases, and the search finds a unique
equilibrium in \CfOffsetUniqueShare{}.
Table~\ref{tab:counterfactual} reports the spreads for the rest.

\section{Conclusion}
\label{sec:conclusion}

We derive closed-form sufficient conditions for equilibrium uniqueness
in continuous-time dynamic discrete games.
The first condition applies to games with any finite number of
players, any finite state space, and binary actions.
It guarantees uniqueness when the discount rate is sufficiently large
relative to move arrival rates and variation in payoffs.
The second condition ensures equilibrium uniqueness in a duopoly model
without sunk costs.
When competition is moderate relative to the frequency of moves,
uniqueness holds even for arbitrarily patient firms.

We then conduct a large-scale computational study of the determinants
of uniqueness and multiplicity in a duopoly model of entry and exit.
Multiplicity occurs primarily in economically relevant regions
involving patient firms, markets where duopoly is unprofitable,
and intermediate move arrival rates.
Elsewhere, we find uniqueness across most of the sampled parameter
space, well beyond the regions covered by our two conditions.
Extending the conditions to cover this larger region is a natural
direction for future work.

Where multiplicity does occur, we investigate its economic and
econometric consequences.
Even at the same parameter value, multiple equilibria may differ
meaningfully in terms of long-run market structure or producer
surplus.
The na{\"i}ve full-solution estimator, which assumes uniqueness, can
track an unstable equilibrium branch even if the data-generating
equilibrium is stable.
The CTNPL estimator, which estimates the equilibrium from the data,
remains reliable when multiplicity is present, but we recommend using
the iterative version only when the CTNPL operator is contractive,
which can be checked at the one-step estimates.
For counterfactuals, predictions of aggregate quantities coincide
across stable counterfactual equilibria under firm symmetry, while
predictions of
quantities involving firm identity largely depend on equilibrium
selection.
Because our theoretical conditions can be checked directly from the
model primitives, they aid researchers in confirming uniqueness rather
than imposing it.

\appendix

\section{Proofs}
\label{sec:proofs}

This section contains the proofs of all formal results stated in the
main text. Following these, we present two supporting lemmas.

\begin{proof}[Proof of Lemma~\ref{lem:value-bounds}]
Write $R = R(V)$ for brevity.
For any $V \in \mathbb{R}^{NK}$, multiplying
\eqref{eq:bellman-general} by $\rho + \eta$ gives, for player $i$ in
state $k$,
$(\rho + \eta) [T(V)]_{ik} = u_{ik} + \sum_{m \neq i} \lambda_m [\Sigma_m(V_m)V_i]_k + \lambda_i E_{ik}(V_i)$.
For any two states $k$ and $l$, subtracting yields:
\begin{multline*}
(\rho + \eta)\left\lbrace [T(V)]_{ik} - [T(V)]_{il} \right\rbrace = (u_{ik} - u_{il}) \\ + \sum_{m \neq i} \lambda_m \left\lbrace [\Sigma_m(V_m) V_i]_k - [\Sigma_m(V_m) V_i]_{l} \right\rbrace + \lambda_i \left[ E_{ik}(V_i) - E_{il}(V_i) \right].
\end{multline*}
Stochastic matrices are nonexpansive under $\spn$.\footnote{
  If $P$ is a stochastic matrix and $z$ is a conformable vector,
  then each entry of $Pz$ is a convex combination of the components of $z$.
  It follows that $\max(Pz) \leq \max z$ and $\min(Pz) \geq \min z$.
  Subtracting yields $\spn(Pz) \leq \spn(z)$.
  Since each entry of $Pz$ lies in $[\min z, \max z]$, we also
  have $\|Pz\|_\infty \leq \|z\|_\infty$.}
Therefore, each transition term satisfies
$\spn(\Sigma_m(V_m) V_i) \leq \spn(V_i) \leq R$.
For the surplus term, note that
$E_{ik}(V_i) = L(\psi_{ik} + V_{i,\ell(i,k)},\, V_{ik})$,
where $\ell(i,k)$ denotes the
destination state when player $i$ switches from state $k$.
By Lemma~\ref{lem:exchangeable}(\ref{lem:ex-lipschitz}), $L$ is
Lipschitz continuous with constant 1 in the $\ell^\infty$ norm.
Comparing states $k$ and $l$, the second arguments of $L$ differ by
at most $\spn(V_i) \leq R$, while the first arguments differ by at most
$\spn(\psi_i) + \spn(V_i) \leq \Psi + R$.
Therefore,
$|E_{ik}(V_i) - E_{il}(V_i)| \leq \max\{R, \Psi + R\} = \Psi + R$.
Combining results gives
\begin{equation*}
(\rho + \eta)\left|[T(V)]_{ik} - [T(V)]_{il}\right| \leq U + \sum_{m \neq i} \lambda_m R + \lambda_i (\Psi + R) = U + \eta R + \lambda_i \Psi.
\end{equation*}
Taking the maximum over players $i$ and state pairs $(k,l)$ gives
\begin{equation*}
(\rho + \eta)\, R(T(V)) \leq U + \eta\, R + \lambar \Psi.
\end{equation*}
Dividing by $\rho + \eta$ gives part~(\ref{lem:vb-operator}).

For part~(\ref{lem:vb-invariant}), if $R(V) \leq M$ then
$(\rho + \eta)\, R(T(V)) \leq U + \eta M + \lambar \Psi
= (\rho + \eta) M - \rho M + U + \lambar \Psi = (\rho + \eta) M$,
where the last equality uses $\rho M = U + \lambar \Psi$
(from the definition of $M$).
Hence $R(T(V)) \leq M$.

For part~(\ref{lem:vb-equilibrium}), at an equilibrium $V^* = T(V^*)$,
so $R(V^*) = R(T(V^*))$.  Substituting into~\eqref{eq:operator-bound}
and rearranging: $\rho\, R(V^*)/(\rho + \eta) \leq (U + \lambar \Psi)/(\rho + \eta)$,
giving $R(V^*) \leq M$.
\end{proof}

\begin{proof}[Proof of Theorem~\ref{thm:contraction}]
We show that $T$ is a contraction on the invariant set $\VM$ defined
in Lemma~\ref{lem:value-bounds}.
Let $V, V' \in \VM$.
From~\eqref{eq:bellman-general}, the operator
difference for player $i$ is
\begin{equation*}
T_i(V) - T_i(V') = \frac{1}{\rho + \eta}\left\lbrace\sum_{m \neq i} \lambda_m \left[ \Sigma_m(V_m) V_i - \Sigma_m(V'_m) V'_i\right] + \lambda_i \left[ E_i(V_i) - E_i(V'_i) \right]\right\rbrace.
\end{equation*}
We bound the two types of terms on the right separately.

\emph{Rivals' transition terms.}
For each rival $m \neq i$, we add and subtract $\Sigma_m(V_m) V'_i$:
\begin{equation*}
\Sigma_m(V_m) V_i - \Sigma_m(V'_m) V'_i = \Sigma_m(V_m)(V_i - V'_i) + (\Sigma_m(V_m) - \Sigma_m(V'_m))V'_i.
\end{equation*}
Since $\Sigma_m(V_m)$ is stochastic (row sums equal 1), the first part satisfies
\begin{equation*}
\|\Sigma_m(V_m)(V_i - V'_i)\|_\infty \leq \|V_i - V'_i\|_\infty.
\end{equation*}
For the second part, let
$\Delta\Sigma_m \equiv \Sigma_m(V_m) - \Sigma_m(V'_m)$.
From~\eqref{eq:Sigma-general}, row $k$ of $\Delta\Sigma_m$ is
$[\sigma_{mk}(V_m) - \sigma_{mk}(V'_m)]\,(\iota_{\ell(m,k)} - \iota_k)^\top$,
where $\iota_k$ denotes the $k$th standard basis vector in
$\mathbb{R}^K$, so
\begin{equation*}
(\Delta\Sigma_m V'_i)_k = [\sigma_{mk}(V_m) - \sigma_{mk}(V'_m)]\,(V'_{i,\ell(m,k)} - V'_{ik}).
\end{equation*}
Each switching probability has the form
$\sigma_{mk} = F(\psi_{mk} + V_{m,\ell(m,k)} - V_{mk})$,
where the argument depends on $V_{m}$ through a difference of two
components.  The change in this argument is bounded by
$|(V_{m,\ell(m,k)} - V'_{m,\ell(m,k)}) - (V_{mk} - V'_{mk})|
\leq 2\|V_{m} - V'_{m}\|_\infty$,
and composing with the Lipschitz constant $\cF$ of $F$
(Lemma~\ref{lem:exchangeable}(\ref{lem:ex-cdf-lip})) gives
\begin{equation*}
|\sigma_{mk}(V_m) - \sigma_{mk}(V'_m)| \leq
2 \cF \|V_{m} - V'_{m}\|_\infty.
\end{equation*}
Since $V' \in \VM$, we also have
$|V'_{i,\ell(m,k)} - V'_{ik}| \leq \spn(V'_i) \leq R(V') \leq M$, giving
\begin{equation*}
\|\Delta\Sigma_m V'_i\|_\infty
  \leq 2 \cF \|V_{m} - V'_{m}\|_\infty \cdot M.
\end{equation*}

\emph{Own surplus term.}
Write $E_{ik}(V_i) = L(\psi_{ik} + V_{i,\ell(i,k)},\, V_{ik})$ as in the
proof of Lemma~\ref{lem:value-bounds}.
Since $V_i$ and $V'_i$ are evaluated at the same state $k$, the switching
cost $\psi_{ik}$ cancels and the arguments of $L$ differ by
$(V_{i,\ell(i,k)} - V'_{i,\ell(i,k)},\, V_{ik} - V'_{ik})$.
By Lemma~\ref{lem:exchangeable}(\ref{lem:ex-lipschitz}),
\begin{equation*}
|E_{ik}(V_i) - E_{ik}(V'_i)| \leq \max\{|V_{i,\ell(i,k)} - V'_{i,\ell(i,k)}|, |V_{ik} - V'_{ik}|\} \leq \|V_i - V'_i\|_\infty.
\end{equation*}

\emph{Combining.}
Collecting the terms above:
\begin{equation*}
\|T_i(V) - T_i(V')\|_\infty
\leq \frac{\lambda_{-i} + \lambda_i}{\rho + \eta}\|V_i - V'_i\|_\infty + \frac{2 \cF M}{\rho + \eta}\sum_{m \neq i} \lambda_m \|V_{m} - V'_{m}\|_\infty,
\end{equation*}
recalling that $\lambda_{-i} = \sum_{m \neq i} \lambda_m$.
Since $\|V_i - V'_i\|_\infty \leq \|V - V'\|_\infty$ and
$\|V_m - V'_m\|_\infty \leq \|V - V'\|_\infty$ for all $m$, while
$\lambda_{-i} + \lambda_i = \eta$ and
$\lambda_{-i} \leq \lambarminus$, taking the maximum over players $i$
gives, for the full $NK$-dimensional operator,
\begin{equation*}
\|T(V) - T(V')\|_\infty
\leq \underbrace{\frac{\eta + 2 \cF \lambarminus M}{\rho + \eta}}_{\gamma}\|V - V'\|_\infty.
\end{equation*}

\emph{Deriving the condition.}
The contraction condition $\gamma < 1$ requires
$\eta + 2 \cF \lambarminus M < \rho + \eta$,
which simplifies to $2 \cF \lambarminus M < \rho$.
Substituting $M = (U + \lambar \Psi)/\rho$ from
Lemma~\ref{lem:value-bounds}:
\begin{equation*}
\frac{2 \cF \lambarminus (U + \lambar \Psi)}{\rho} < \rho,
\end{equation*}
which rearranges to~\eqref{eq:contraction-bound}.

Under this condition, $T$ is a contraction on $\VM$ with respect to
$\|\cdot\|_\infty$.
The set $\VM$ is nonempty (it contains $V = 0$, since $M \geq 0$) and
closed in $\mathbb{R}^{NK}$ (since $R$ is continuous), hence complete.
By Lemma~\ref{lem:value-bounds}(\ref{lem:vb-invariant}),
$T$ maps $\VM$ to itself.
The Banach fixed-point theorem therefore gives a unique fixed
point in $\VM$.  By
Lemma~\ref{lem:value-bounds}(\ref{lem:vb-equilibrium}), every
equilibrium lies in $\VM$, so uniqueness is global.
\end{proof}

\begin{proof}[Proof of Lemma~\ref{lem:symmetric-identities}]
Part (\ref{lem:si-probs}):
$F(-\kappa + d) + F(\kappa - d)
= F(-\kappa + d) + F(-(-\kappa + d)) = 1$,
by Lemma~\ref{lem:exchangeable}(\ref{lem:ex-symmetry}).

\noindent
Part (\ref{lem:si-surplus}):
$e(\thetaSV - d) - e(\thetaEC + d)
= e(\kappa - d) - e(-(\kappa - d)) = \kappa - d$
by Lemma~\ref{lem:exchangeable}(\ref{lem:ex-identity}).

\noindent
Part (\ref{lem:si-rows}):
Consider $\Sigma_2(V_2)$ from~\eqref{eq:Sigma_2}, which is block diagonal
with $2 \times 2$ blocks for state pairs $(1,2)$ and $(3,4)$.
Within each block, the entry probability is
$p = F(\thetaEC + d)$ for the relevant value difference~$d$, and
the exit probability is $F(\thetaSV - d)$.
By part~(\ref{lem:si-probs}),
the exit probability equals $1 - p$, so each block takes the form
$\bigl[\begin{smallmatrix} 1-p & p \\ 1-p & p \end{smallmatrix}\bigr]$
(rank~1, identical rows).  The product $\Sigma_2 V_1$ therefore
yields the same value for states 1 and 2, and for states 3 and 4.
Upon inspecting~\eqref{eq:Sigma_1}, the argument for $\Sigma_1(V_1)$
is analogous: rows 1 and 3 are identical, as are rows 2 and 4.
\end{proof}

\begin{proof}[Proof of Lemma~\ref{lem:constant-delta}]
Define the following value differences for firm~1:
\begin{alignat*}{3}
v_{11} &= V_{13} - V_{11} &\quad& \text{(entry value, rival out),} \\
v_{12} &= V_{14} - V_{12} &\quad& \text{(entry value, rival in),} \\
v_{13} &= V_{12} - V_{11} &\quad& \text{(rival's entry effect, firm~1 out),} \\
v_{14} &= V_{14} - V_{13} &\quad& \text{(rival's entry effect, firm~1 in),}
\end{alignat*}
and analogously for firm~2: $v_{21} = V_{22} - V_{21}$,
$v_{22} = V_{24} - V_{23}$,
$v_{23} = V_{23} - V_{21}$,
$v_{24} = V_{24} - V_{22}$.
Note the identity $v_{i4} = (v_{i2} - v_{i1}) + v_{i3}$.
In this notation,~\eqref{eq:constant-delta} states that
$v_{12} - v_{11}$ and $v_{22} - v_{21}$ both equal
$\thetaC/(\rho + \eta)$.
We show that $v_{i2} - v_{i1}$ is the same constant for each
firm~$i$ in any equilibrium, without assuming equal arrival rates
or a symmetric equilibrium.
Focusing on $x_3 = (1,0)$,
where firm~1's switching action is exit ($x_3 \to x_1$) and firm~2's is entry ($x_3 \to x_4$),
the instantaneous Bellman equation for firm~1
(given in~\eqref{eq:rho-13} in Appendix~\ref{sec:bellman-instantaneous})
can be written:
\begin{align}
\rho V_{13}
            &= \thetaM + \lambda_2 F(\underbrace{\thetaEC}_{-\kappa} + \underbrace{V_{24} - V_{23}}_{v_{22}})(\underbrace{V_{14} - V_{13}}_{v_{14}})
            + \lambda_1 e(\underbrace{\thetaSV}_{\kappa} + \underbrace{V_{11} - V_{13}}_{-v_{11}}) \label{eq:rhoV13}\\
            &= \thetaM + \lambda_2 p_1 v_{14} + \lambda_1 e(\kappa - v_{11}), \notag
\end{align}
where we define $p_1 = F(\thetaEC + v_{22})$ as firm~2's entry probability
in state $x_3$.
Similarly, for $x_1 = (0,0)$, we have
\begin{align*}
\rho V_{11} &= \lambda_2 p_0 v_{13} + \lambda_1 e(-\kappa + v_{11}),
\end{align*}
where $p_0 = F(\thetaEC + v_{21})$ is firm~2's entry probability in
state $x_1$.
Subtracting and applying the surplus identity
of Lemma~\ref{lem:symmetric-identities}(\ref{lem:si-surplus}),
$e(\kappa - v_{11}) - e(-\kappa + v_{11}) = \kappa - v_{11}$,
gives
\begin{equation}
\label{eq:v11}
(\rho + \lambda_1)\, v_{11}
= \thetaM + \lambda_1\kappa
  + \lambda_2 p_1\, v_{14} - \lambda_2 p_0\, v_{13}.
\end{equation}
Similarly, using parts~(\ref{lem:si-probs})
and~(\ref{lem:si-surplus}) of
Lemma~\ref{lem:symmetric-identities} again, subtracting the
Bellman equations for states
$x_4 = (1,1)$ and $x_2 = (0,1)$ gives:
\begin{equation}
\label{eq:v12}
(\rho + \lambda_1)\, v_{12}
= \thetaM + \thetaC + \lambda_1\kappa
  - \lambda_2(1-p_1)\, v_{14} + \lambda_2(1-p_0)\, v_{13}.
\end{equation}
Subtracting~\eqref{eq:v11} from~\eqref{eq:v12},
the $p_0$ and $p_1$ terms collapse:
$-\lambda_2(1-p_1)v_{14} - \lambda_2 p_1 v_{14} = -\lambda_2 v_{14}$ and
$\lambda_2(1-p_0)v_{13} + \lambda_2 p_0 v_{13} = \lambda_2 v_{13}$.
The result is:
\begin{equation*}
(\rho + \lambda_1)(v_{12} - v_{11}) = \thetaC - \lambda_2 v_{14} + \lambda_2 v_{13}.
\end{equation*}
Substituting the identity $v_{14} = (v_{12} - v_{11}) + v_{13}$, the $v_{13}$
terms cancel:
\begin{equation*}
(\rho + \eta)(v_{12} - v_{11}) = \thetaC.
\end{equation*}
The same argument applies to firm~2, swapping $x_2$ with $x_3$ and
$\lambda_1$ with $\lambda_2$, and yields the same denominator
$\rho + \eta$.  Defining
$\Delta = \thetaC/(\rho + \eta)$, we have
$v_{i2} = v_{i1} + \Delta$ for $i = 1, 2$, which
is~\eqref{eq:constant-delta}.
\end{proof}

\begin{proof}[Proof of Theorem~\ref{thm:symmetric-uniqueness}]
The proof proceeds in two steps, using the value differences
$v_{ij}$ and the constant $\Delta$ defined in the proof of
Lemma~\ref{lem:constant-delta}.
We first use Lemma~\ref{lem:constant-delta} to reduce the
eight-dimensional system to a two-dimensional fixed-point problem.
We then show that the resulting map is a contraction.

\medskip
\noindent
\textit{Step 1: Reduction to two dimensions.}
Lemma~\ref{lem:constant-delta} reduces the four unknowns per
firm to two: $v_{i1}$ and $v_{i3}$
(since $v_{i2} = v_{i1} + \Delta$ and $v_{i4} = \Delta + v_{i3}$).
We now show that $v_{i3}$ depends only on $v_{i1}$, further reducing
to a single unknown per firm.
The ``firm~1 out'' states are $x_1 = (0,0)$ and $x_2 = (0,1)$.
From~\eqref{eq:bellman-firm1}, their Bellman equations are
\begin{equation*}
(\rho + \eta) V_{11} = \lambda_2[\Sigma_2(V_2)\, V_1]_1 + \lambda_1 E_{11}
\quad\text{and}\quad
(\rho + \eta) V_{12} = \lambda_2[\Sigma_2(V_2)\, V_1]_2 + \lambda_1 E_{12},
\end{equation*}
where the flow payoffs are $u_{11} = u_{12} = 0$ since firm~1 is
inactive in both states.
By~\eqref{eq:sigma2-rows},
$[\Sigma_2(V_2)\, V_1]_1 = [\Sigma_2(V_2)\, V_1]_2$,
so subtracting gives
\begin{equation}
  \label{eq:rho-eta-v13}
  (\rho + \eta) v_{13} = \lambda_1(E_{12} - E_{11}).
\end{equation}
For the social surplus terms, the representation
$E_{ik}(V_i) = V_{ik} + e(\psi_{ik} + V_{i,\ell(i,k)} - V_{ik})$
gives
\begin{equation*}
E_{11} = V_{11} + e(-\kappa + v_{11})
\quad\text{and}\quad
E_{12} = V_{12} + e(-\kappa + v_{12}),
\end{equation*}
so $E_{12} - E_{11}
= v_{13} + e(-\kappa + v_{12}) - e(-\kappa + v_{11})
= v_{13} + e(-\kappa + v_{11} + \Delta) - e(-\kappa + v_{11})$,
where the second step uses $v_{12} = v_{11} + \Delta$ from
Lemma~\ref{lem:constant-delta}.
Substituting this into~\eqref{eq:rho-eta-v13}, recalling that
$\eta = \lambda_1 + \lambda_2$, and collecting $v_{13}$ terms yields:
\begin{equation*}
v_{13} = \phi_1(v_{11}) \equiv
\frac{\lambda_1[e(-\kappa + v_{11} + \Delta) - e(-\kappa + v_{11})]}
     {\rho + \lambda_2}.
\end{equation*}
Substituting $v_{14} = \Delta + v_{13}$ and
$v_{13} = \phi_1(v_{11})$ into~\eqref{eq:v11}:
\begin{align*}
(\rho + \lambda_1)\, v_{11}
&= \thetaM + \lambda_1\kappa
  + \lambda_2 p_1(\Delta + \phi_1(v_{11}))
  - \lambda_2 p_0\, \phi_1(v_{11}) \\
&= \thetaM + \lambda_1\kappa
  + \lambda_2 p_1 \Delta
  + \lambda_2(p_1 - p_0)\, \phi_1(v_{11}).
\end{align*}
For brevity, we now write $\tilde{v}_i = v_{i1}$.
The entry probabilities can be expressed in terms of $\tilde{v}_2$
using $v_{22} = \tilde{v}_2 + \Delta$:
$p_0(\tilde{v}_2) = F(-\kappa + \tilde{v}_2)$ and
$p_1(\tilde{v}_2) = F(-\kappa + \tilde{v}_2 + \Delta)$.
Substituting these in yields a fixed-point equation
$\tilde{v}_1 = G_1(\tilde{v}_1, \tilde{v}_2)$ with
\begin{equation}
\label{eq:G1}
G_1(\tilde{v}_1, \tilde{v}_2)
= \frac{1}{\rho + \lambda_1}[\thetaM + \lambda_1\kappa
  + \lambda_2 p_1(\tilde{v}_2) \Delta
  + \lambda_2(p_1(\tilde{v}_2) - p_0(\tilde{v}_2))\, \phi_1(\tilde{v}_1)].
\end{equation}
The analogous reduction for firm~2---swapping $x_2$ with $x_3$ and
$\lambda_1$ with $\lambda_2$, and
using~\eqref{eq:sigma1-rows} in place
of~\eqref{eq:sigma2-rows}---yields $\tilde{v}_2 = G_2(\tilde{v}_1, \tilde{v}_2)$.
Under the equal-rate assumption $\lambda_1 = \lambda_2 = \lambda$,
$\phi_1 = \phi_2 \equiv \phi$ and
$G_2(\tilde{v}_1, \tilde{v}_2) = G_1(\tilde{v}_2, \tilde{v}_1)$.

\medskip
\noindent
\textit{Step 2: Contraction.}
Since $f$ is continuous, $G$ is $C^1$ on the convex domain $\mathbb{R}^2$.
Therefore, if $\|J_G\|_\infty$, the maximum absolute row sum, is uniformly
bounded above by a constant
$\gamma < 1$, then the mean-value inequality
\citep[3.2.3]{ortega-rheinboldt-1970} implies that $G$ is a
contraction with modulus $\gamma$ in the $\ell^\infty$ norm.
The Jacobian of $G = (G_1, G_2)$ has the form
$J_G = \bigl[\begin{smallmatrix} a & b \\ b' & a'
\end{smallmatrix}\bigr]$
where $a = \partial G_1/\partial \tilde{v}_1$,
$b = \partial G_1/\partial \tilde{v}_2$,
and the symmetry $G_1(\tilde{v}_1, \tilde{v}_2) = G_2(\tilde{v}_2, \tilde{v}_1)$ gives
$a' = a(\tilde{v}_2, \tilde{v}_1)$ and $b' = b(\tilde{v}_2, \tilde{v}_1)$.
Therefore both row sums satisfy the same bound, and it suffices
to bound $|a| + |b|$.

From~\eqref{eq:G1}, $G_1$ depends on $\tilde{v}_1$ only through
$\phi_1(\tilde{v}_1)$ and on $\tilde{v}_2$ only through $p_0(\tilde{v}_2)$ and $p_1(\tilde{v}_2)$.
We will use the following three bounds:
\begin{enumerate}[(i)]
  \item $|p_1(\tilde{v}_2) - p_0(\tilde{v}_2)|
    = |F(-\kappa + \tilde{v}_2 + \Delta) - F(-\kappa + \tilde{v}_2)|
    \leq \cF |\Delta|$
    by Lemma~\ref{lem:exchangeable}(\ref{lem:ex-cdf-lip}),
  \item $|\phi_1(\tilde{v}_1)| \leq \lambda_1|\Delta|/(\rho + \lambda_2)$,
    since $e'(z) = F(z) \in (0,1)$
    by Lemma~\ref{lem:exchangeable}(\ref{lem:ex-deriv}),
  \item $|\phi_1'(\tilde{v}_1)| \leq \cF \lambda_1|\Delta|/(\rho+\lambda_2)$,
\end{enumerate}
where (iii) follows by differentiating $\phi_1$ and
applying Lemma~\ref{lem:exchangeable}(\ref{lem:ex-deriv})
to obtain
$\phi_1'(\tilde{v}_1)
= \frac{\lambda_1[F(-\kappa + \tilde{v}_1 + \Delta)
        - F(-\kappa + \tilde{v}_1)]}{\rho + \lambda_2}$,
and then applying Lemma~\ref{lem:exchangeable}(\ref{lem:ex-cdf-lip}) again.

Next, to bound $a$ we differentiate~\eqref{eq:G1} with respect to
$\tilde{v}_1$ and apply (i) and (iii) to find:
\begin{equation}
\label{eq:a-bound}
a = \frac{\lambda_2(p_1 - p_0)}{\rho + \lambda_1}\,\phi_1'(\tilde{v}_1)
\quad\text{and}\quad
|a| \leq \frac{\cF^2 \lambda_1\lambda_2\,\Delta^2}
     {(\rho+\lambda_1)(\rho+\lambda_2)}.
\end{equation}

Similarly, to bound $b$ we differentiate~\eqref{eq:G1} with respect
to $\tilde{v}_2$, writing $p_0' = dp_0/d\tilde{v}_2$ and $p_1' = dp_1/d\tilde{v}_2$.
By Assumptions~\ref{assp:bounded-density} and~\ref{assp:exchangeable},
$p_0' = f(-\kappa + \tilde{v}_2) \leq \cF$, $p_1' = f(-\kappa + \tilde{v}_2 + \Delta) \leq \cF$,
and $f$ is Lipschitz with constant $\cFp$, so
$|p_1' - p_0'| = |f(-\kappa + \tilde{v}_2 + \Delta) - f(-\kappa + \tilde{v}_2)| \leq \cFp |\Delta|$.
Applying these together with~(ii):
\begin{equation}
\label{eq:b-bound}
b = \frac{\lambda_2}{\rho + \lambda_1} [p_1'\,\Delta + (p_1' - p_0')\,\phi_1]
\quad\text{and}\quad
|b| \leq \frac{\cF \lambda_2|\Delta|}{\rho+\lambda_1}
  + \frac{\cFp \lambda_1\lambda_2\,\Delta^2}
         {(\rho+\lambda_1)(\rho+\lambda_2)}.
\end{equation}

Combining~\eqref{eq:a-bound} and~\eqref{eq:b-bound}
with $\lambda_1 = \lambda_2 = \lambda$:
\begin{align*}
\|J_G\|_\infty \leq |a| + |b|
&\leq \frac{\cF^2 \lambda^2\Delta^2}{(\rho+\lambda)^2}
  + \frac{\cF \lambda|\Delta|}{\rho+\lambda}
  + \frac{\cFp \lambda^2\Delta^2}{(\rho+\lambda)^2} \\
&= \frac{\cF \lambda|\Delta|}{\rho+\lambda}
  + \frac{(\cF^2 + \cFp) \lambda^2\Delta^2}{(\rho+\lambda)^2}.
\end{align*}
Since $\Delta = \thetaC/(\rho + 2\lambda)$, we have
$\lambda\Delta/(\rho+\lambda) = \tilde\Delta$, and therefore
\begin{equation*}
\|J_G\|_\infty \leq \cF|\tilde\Delta| + (\cF^2 + \cFp)\,\tilde\Delta^2.
\end{equation*}
Under condition~\eqref{eq:symmetric-contraction} of the theorem,
$\|J_G\|_\infty < 1$ and so $G$ is a contraction on
$(\mathbb{R}^2, \|\cdot\|_\infty)$.
By the Banach fixed-point theorem \citep[5.1.3]{ortega-rheinboldt-1970},
$G$ has a unique fixed point $(\tilde{v}_1^*, \tilde{v}_2^*)$.

We now show that this uniqueness extends to the full eight-dimensional
system.
Because every equilibrium induces a fixed point of $G$, all
equilibria share the value differences $(\tilde{v}_1^*, \tilde{v}_2^*)$ and hence,
by Lemma~\ref{lem:constant-delta} and Step~1, all of the
differences $v_{ij}$.
The levels are determined as well.
In the instantaneous Bellman equation~\eqref{eq:rhoV13}
(see also Appendix~\ref{sec:bellman-instantaneous}),
the right-hand side depends only on value differences, so with $\rho > 0$
each value $V_{ik}$ is uniquely determined.
Hence there is at most one equilibrium.
Since equilibrium existence is guaranteed by Theorem~1
of \cite{ctgames}, it follows that there is exactly one
equilibrium.

It remains to show that the equilibrium is symmetric.
Let $\Xi \colon \mathbb{R}^8 \to \mathbb{R}^8$ denote the map that swaps the firm
labels together with the states $x_2$ and $x_3$:
\begin{equation*}
\Xi V \equiv (V_{21}, V_{23}, V_{22}, V_{24}, V_{11}, V_{13}, V_{12}, V_{14}).
\end{equation*}
With $\lambda_1 = \lambda_2$, the Bellman system commutes with this
relabeling, $T(\Xi V) = \Xi\,T(V)$: for each state, the Bellman
equation for firm~1 becomes the equation for firm~2 after swapping
$x_2$ with $x_3$ and $\lambda_1$ with $\lambda_2$
(cf.~\eqref{eq:rho-13} and~\eqref{eq:rho-22}
in Appendix~\ref{sec:bellman-instantaneous}).
Therefore $\Xi V^*$ is an equilibrium whenever $V^*$ is, and uniqueness implies
$\Xi V^* = V^*$: $V_{11} = V_{21}$, $V_{12} = V_{23}$, $V_{13} = V_{22}$, and
$V_{14} = V_{24}$.
\end{proof}

\begin{proof}[Proof of Corollary~\ref{cor:discount-free}]
Let $g(\rho)$ denote the left-hand side of~\eqref{eq:symmetric-contraction},
holding $\thetaC$ and $\lambda$ fixed.
Note that $g$ is non-increasing in $\rho$, so
\begin{equation*}
\sup_{\rho>0} g(\rho) \leq \lim_{\rho\to 0^+} g(\rho)
= \frac{\cF |\thetaC|}{2\lambda} + \frac{(\cF^2 + \cFp) \thetaC^2}{4\lambda^2} < 1,
\end{equation*}
where the last inequality is the condition of the corollary.
Therefore~\eqref{eq:symmetric-contraction} holds for every $\rho > 0$
and the conclusion follows from Theorem~\ref{thm:symmetric-uniqueness}.
\end{proof}

\subsection*{Supporting Lemmas}

Parts (\ref{lem:ex-deriv}) and (\ref{lem:ex-gradient}) of the
following lemma are related to the two-alternative case of the
Williams--Daly--Zachary theorem: the gradient of the surplus
function with respect to the deterministic utility components is
the vector of choice probabilities
\citep[Section~5.8, Theorem~5.1]{mcfadden81econometric};
see also \citet[Theorem~3.1]{rust-1994-structural}.

\begin{lemma}[Error Distribution Properties]
\label{lem:exchangeable}
Under Assumptions~\ref{assp:pi}, \ref{assp:bounded-density},
and~\ref{assp:exchangeable}:
\begin{enumerate}[(a)]
\item \label{lem:ex-deriv} $e'(z) = F(z) \in (0, 1)$ for all
  $z \in \mathbb{R}$.
\item \label{lem:ex-gradient} For all $z_1, z_2, c \in \mathbb{R}$,
  $L(z_1 + c, z_2 + c) = L(z_1, z_2) + c$, so that
  $L(z_1, z_2) = z_2 + e(z_1 - z_2)$, with
  $\frac{\partial L}{\partial z_1} = F(z_1 - z_2)$ and
  $\frac{\partial L}{\partial z_2} = 1 - F(z_1 - z_2)$.
\item \label{lem:ex-lipschitz} $L$ is Lipschitz continuous with constant
  1 in the $\ell^\infty$ norm:
  $|L(z_1, z_2) - L(z_1', z_2')| \leq \max\{|z_1 - z_1'|, |z_2 - z_2'|\}$
  for all $z_1, z_2, z_1', z_2' \in \mathbb{R}$.
\item \label{lem:ex-cdf-lip} $\abs{F(z) - F(z')} \leq \cF \abs{z - z'}$
  for all $z, z' \in \mathbb{R}$.
\item \label{lem:ex-symmetry} $F(z) + F(-z) = 1$ for all
  $z \in \mathbb{R}$.
\item \label{lem:ex-identity} $e(z) - e(-z) = z$ for all
  $z \in \mathbb{R}$.
\end{enumerate}
Specifically,
(\ref{lem:ex-deriv}), (\ref{lem:ex-gradient}), and (\ref{lem:ex-lipschitz}) require only Assumption~\ref{assp:pi};
(\ref{lem:ex-cdf-lip}) requires Assumptions~\ref{assp:pi} and~\ref{assp:bounded-density};
(\ref{lem:ex-symmetry}) and (\ref{lem:ex-identity}) require Assumptions~\ref{assp:pi} and~\ref{assp:exchangeable}.
\end{lemma}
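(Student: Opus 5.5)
I would prove the six parts in the order (a)--(b)--(c) first, using only Assumption~\ref{assp:pi}, then (d), then (e)--(f). The starting point is to write $e(z) = L(z,0) = \E\max\{z + \varepsilon_{i1k}, \varepsilon_{i0k}\}$. Setting $w \equiv \varepsilon_{i0k} - \varepsilon_{i1k}$, this becomes $e(z) = \E[\varepsilon_{i1k}] + \E\max\{z, w\}$. Finite first moments make this finite, and $z \mapsto \max\{z,w\}$ is $1$-Lipschitz with derivative $\1\{w < z\}$ almost everywhere.

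For part~(a), I would differentiate under the expectation by dominated convergence, since the difference quotients are bounded by $1$. This gives $e'(z) = \Pr[w < z] = F(z)$. Because $w$ has an absolutely continuous distribution with full support $\mathbb{R}$, we get $F(z) \in (0,1)$ strictly.

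For part~(b), translation invariance $L(z_1+c, z_2+c) = L(z_1,z_2) + c$ is immediate, since $\max$ commutes with adding a constant. Taking $c = -z_2$ gives $L(z_1,z_2) = z_2 + e(z_1 - z_2)$, and the partial derivatives then follow from (a) by the chain rule.

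For part~(c), I would use the gradient from (b). Its components are nonnegative and sum to one, so by the mean value theorem along the segment,
\[
|L(z) - L(z')| \le \|\nabla L\|_1 \, \|z - z'\|_\infty = \|z - z'\|_\infty .
\]
Alternatively one can argue directly: $|\max\{a,b\} - \max\{a',b'\}| \le \max\{|a-a'|, |b-b'|\}$ pointwise, and then take expectations.

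Part~(d) is immediate. $F$ has density $f$ bounded by $\cF$ (Assumption~\ref{assp:bounded-density}), so
\[
|F(z) - F(z')| = \Bigl|\int_{z'}^{z} f\Bigr| \le \cF |z - z'| .
\]

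For part~(e), exchangeability (Assumption~\ref{assp:exchangeable}) gives $w \overset{d}{=} -w$. Hence
\[
F(-z) = \Pr[w \le -z] = \Pr[-w \ge z] = \Pr[w \ge z] = 1 - F(z),
\]
where the last step uses that atoms are absent by absolute continuity.

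For part~(f), let $h(z) = e(z) - e(-z) - z$. By (a) and (e), $h'(z) = F(z) + F(-z) - 1 = 0$, so $h$ is constant, and $h(0) = e(0) - e(0) - 0 = 0$. As a direct cross-check, $\max\{z,w\} - \max\{-z,w\}$ and the symmetry of $w$ give the same identity.

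The only delicate step is (a): justifying interchange of derivative and expectation, and the strict inclusion in $(0,1)$. Both are handled by the $1$-Lipschitz bound and the full-support assumption. Everything else is algebraic.
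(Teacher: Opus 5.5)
Your proof is correct. Parts (a), (b), (d), and (e) follow the paper's argument essentially verbatim: dominated convergence on the $1$-Lipschitz integrand, translation invariance with $c = -z_2$, integration of the bounded density, and symmetry of $w$ from exchangeability. Your ``alternative'' for (c) is also exactly what the paper does.

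The real differences are your primary arguments for (c) and (f). For (c), the paper uses the pointwise bound $|\max\{a,b\}-\max\{a',b'\}| \le \max\{|a-a'|,|b-b'|\}$ and then takes expectations. This needs no differentiability, so (c) stands on its own without (a) or (b). Your mean-value route depends on (b), but it shows that the Lipschitz constant is the $\ell^1$ norm of the gradient $(F, 1-F)$, which is a probability vector.

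For (f), the paper computes $e(-z)$ directly: it swaps $\varepsilon_{i0k}$ and $\varepsilon_{i1k}$ using exchangeability of the pair and pulls $z$ out of the maximum. You instead integrate (e), using $h'(z) = F(z) + F(-z) - 1 = 0$ and $h(0)=0$. Your version shows that (f) is just the antiderivative form of (e). Because $e(z) = \E[\varepsilon_{i1k}] + \E\max\{z,w\}$ depends on the errors only through $w$, up to a constant that cancels, it also makes clear that only symmetry of $w$ is used. The paper's version, in turn, is a one-line identity that does not rely on part (a).
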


\begin{proof}
The properties are independent of the player and state, so for
brevity write
$\varepsilon_0 = \varepsilon_{i0k}$ and
$\varepsilon_1 = \varepsilon_{i1k}$, so that
$e(z) = \E_{\varepsilon} \max\{z + \varepsilon_1, \varepsilon_0\}$.

Part (\ref{lem:ex-deriv}):
For each realization of $(\varepsilon_0, \varepsilon_1)$, the
integrand $\max\{z + \varepsilon_1, \varepsilon_0\}$ in the
definition of $e$ is 1-Lipschitz in $z$.
The derivative is $\1\{z + \varepsilon_1 > \varepsilon_0\}$ wherever
$z + \varepsilon_1 \neq \varepsilon_0$, an event of probability one.
The difference quotients are therefore bounded by one and converge
almost surely, so dominated convergence
\citep[Theorem~1.6.7]{durrett-2019} gives
$e'(z) = \Pr[\varepsilon_0 - \varepsilon_1 < z]
= \Pr[\varepsilon_0 - \varepsilon_1 \leq z] = F(z)$,
where the second equality holds because the difference is
continuously distributed.
The value $F(z)$ lies in $(0, 1)$ because the difference has full support.

Part (\ref{lem:ex-gradient}):
Since $\max\{z_1 + c + \varepsilon_1, z_2 + c + \varepsilon_0\}
= \max\{z_1 + \varepsilon_1, z_2 + \varepsilon_0\} + c$,
taking expectations gives $L(z_1 + c, z_2 + c) = L(z_1, z_2) + c$.
Setting $c = -z_2$ gives
$L(z_1, z_2) = z_2 + L(z_1 - z_2, 0) = z_2 + e(z_1 - z_2)$.
Differentiating and applying part~(\ref{lem:ex-deriv}) yields
$\partial L/\partial z_1 = e'(z_1 - z_2) = F(z_1 - z_2)$ and
$\partial L/\partial z_2 = 1 - e'(z_1 - z_2) = 1 - F(z_1 - z_2)$.

Part (\ref{lem:ex-lipschitz}):
For each realization of $(\varepsilon_0, \varepsilon_1)$,
$\abs{\max\{z_1 + \varepsilon_1, z_2 + \varepsilon_0\}
- \max\{z_1' + \varepsilon_1, z_2' + \varepsilon_0\}}
\leq \max\{\abs{z_1 - z_1'}, \abs{z_2 - z_2'}\}$.
Taking expectations preserves the inequality.

Part (\ref{lem:ex-cdf-lip}):
$\abs{F(z) - F(z')} = \abs{\int_{z'}^{z} f(s) \, ds}
\leq \cF \abs{z - z'}$.

Part (\ref{lem:ex-symmetry}):
By exchangeability, $\varepsilon_0 - \varepsilon_1$ and
$\varepsilon_1 - \varepsilon_0$ have the same distribution, so the
difference is symmetric about zero.
Since the difference is continuously distributed,
$F(-z) = \Pr[\varepsilon_0 - \varepsilon_1 \leq -z]
= \Pr[\varepsilon_0 - \varepsilon_1 \geq z] = 1 - F(z)$.

Part (\ref{lem:ex-identity}):
Using exchangeability and
$\max\{a, b\} - z = \max\{a - z, b - z\}$, we have
$e(-z) = \E_{\varepsilon} \max\{\varepsilon_1 - z, \varepsilon_0\}
= \E_{\varepsilon} \max\{\varepsilon_0 - z, \varepsilon_1\}
= \E_{\varepsilon} \max\{\varepsilon_0, \varepsilon_1 + z\} - z = e(z) - z$.
\end{proof}

\begin{lemma}[Logit Constants]
\label{lem:logit-lipschitz}
If Assumption~\ref{assp:ev1} holds,
then $F = \Lambda$, $L(z_1,z_2) = \log[\exp(z_1) + \exp(z_2)]$, and
$e(z) = L(z, 0) = \log[1 + \exp(z)]$.
Furthermore, Assumptions~\ref{assp:bounded-density} and~\ref{assp:exchangeable}
are satisfied with
$\cF = \sup_z \Lambda'(z) = \cFLogit{}$ and
$\cFp = \sup_z \abs{\Lambda''(z)} = \cFpLogit{}$.
\end{lemma}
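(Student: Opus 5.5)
The plan is to verify each claim of Lemma~\ref{lem:logit-lipschitz} by direct computation from the type I extreme value distribution, then reduce the two constants to one-variable extremum problems for the logistic function.

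\emph{Step 1: the CDF.} Under Assumption~\ref{assp:ev1}, $\varepsilon_{i0k}$ and $\varepsilon_{i1k}$ are independent standard Gumbel variables. I will use the classical fact that their difference is standard logistic. To confirm it, condition on $\varepsilon_{i1k}$ and integrate: $\Pr[\varepsilon_0 - \varepsilon_1 \le z] = \int \exp(-e^{-(z+t)})\, g(t)\,dt$ with $g(t) = e^{-t}\exp(-e^{-t})$. The substitution $s = e^{-t}$ turns this into $\int_0^\infty e^{-s(1+e^{-z})}\,ds = 1/(1+e^{-z}) = \Lambda(z)$. The mean-zero normalization shifts both errors by the same Euler constant, so it does not affect the difference.

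\emph{Step 2: the surplus functions.} I will invoke the standard Gumbel max-stability result. If $\varepsilon_j$ are i.i.d.\ Gumbel with location $-\gamma_E$ (mean zero), then $\max_j(z_j+\varepsilon_j)$ is Gumbel with location $\log\sum_j e^{z_j} - \gamma_E$. Its mean is therefore $\log\sum_j e^{z_j}$, which gives $L(z_1,z_2)=\log(e^{z_1}+e^{z_2})$. Setting $z_2=0$ gives $e(z)=\log(1+e^z)$. As a consistency check, $e'(z)=\Lambda(z)$, in agreement with Lemma~\ref{lem:exchangeable}(\ref{lem:ex-deriv}).

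\emph{Step 3: the constants.} Exchangeability is immediate because the two errors are i.i.d. The density is $f=\Lambda'=\Lambda(1-\Lambda)$. Since $p(1-p)$ is maximized at $p=1/2$, we get $\cF=1/4$.

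Next, $\Lambda''=\Lambda(1-\Lambda)(1-2\Lambda)$, which is smooth and bounded. Hence $f$ is Lipschitz with constant $\sup|\Lambda''|$ by the mean value theorem. To compute this supremum, write $h(p)=p(1-p)(1-2p)=p-3p^2+2p^3$ on $(0,1)$. Then $h'(p)=1-6p+6p^2$ vanishes at $p=\tfrac12\pm\tfrac{\sqrt3}{6}$. At these points $p(1-p)=\tfrac14-\tfrac1{12}=\tfrac16$ and $|1-2p|=\tfrac{\sqrt3}{3}$. The product is $\sqrt3/18$. At the endpoints $h\to0$, so $\cFp=\sqrt3/18$.

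The main obstacle is mild: getting the Gumbel difference and max computations stated cleanly, including the mean normalization. Everything else is elementary calculus.
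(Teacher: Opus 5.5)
Your proposal is correct and follows essentially the same route as the paper's proof. Both arguments get exchangeability from the i.i.d.\ errors, use the logistic difference and log-sum-exp surplus for mean-zero type I extreme value errors, and write $t = \Lambda(z)$ to reduce both constants to maximizing $t(1-t)$ and $\abs{t(1-t)(1-2t)}$ on $(0,1)$; the only differences are that you derive the two Gumbel facts the paper simply cites as standard, and you explicitly check that $h \to 0$ at the endpoints, both harmless additions.
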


\begin{proof}
The errors are independent across choices with a common marginal
distribution, so the pair
$(\varepsilon_{i0k}, \varepsilon_{i1k})$ is exchangeable.
The difference of two independent type I extreme value random
variables follows the standard logistic distribution, so $F = \Lambda$,
and for independent mean-zero type I extreme value errors, the surplus
function is $L(z_1,z_2) = \log[\exp(z_1)+\exp(z_2)]$.
For the constants, write $t = \Lambda(z) \in (0, 1)$, so that
$\Lambda'(z) = t(1 - t)$, which is maximized at $t = 1/2$, giving
$\cF = \cFLogit{}$.
Differentiating again gives
$\Lambda''(z) = \Lambda'(z)(1 - 2\Lambda(z)) = t(1-t)(1-2t)$.
The derivative with respect to $t$ is $6t^2 - 6t + 1$, which has
roots $t = (3 \pm \sqrt{3})/6$
at which $\abs{\Lambda''(z)} = \cFpLogit{}$.
This is the Lipschitz constant since $\Lambda''$ is continuous and bounded.
\end{proof}

\section{Instantaneous Bellman Equations}
\label{sec:bellman-instantaneous}

The Bellman equations~\eqref{eq:bellman-firm1}
and~\eqref{eq:bellman-firm2} can be restated in instantaneous
($\rho$-discounted) form by multiplying by
$\rho + \eta$ and subtracting
$\eta V_{ik} = (\lambda_1 + \lambda_2) V_{ik}$ from each equation.
Substituting the choice probabilities
$\sigma_{ik} = F(\psi_{ik} + V_{i,\ell(i,k)} - V_{ik})$
from~\eqref{eq:ccp} and the surplus
$E_{ik}(V_i) = V_{ik} + e(\psi_{ik} + V_{i,\ell(i,k)} - V_{ik})$
yields the following system:
\begin{align}
\rho V_{11} &= \lambda_2 F(\thetaEC + V_{22} - V_{21})
  (V_{12} - V_{11})
  + \lambda_1 e(\thetaEC + V_{13} - V_{11})
  \label{eq:rho-11} \\
\rho V_{12} &= \lambda_2 F(\thetaSV + V_{21} - V_{22})
  (V_{11} - V_{12})
  + \lambda_1 e(\thetaEC + V_{14} - V_{12})
  \label{eq:rho-12} \\
\rho V_{13} &= \thetaM
  + \lambda_2 F(\thetaEC + V_{24} - V_{23})
  (V_{14} - V_{13})
  + \lambda_1 e(\thetaSV + V_{11} - V_{13})
  \label{eq:rho-13} \\
\rho V_{14} &= \thetaM + \thetaC
  + \lambda_2 F(\thetaSV + V_{23} - V_{24})
  (V_{13} - V_{14})
  + \lambda_1 e(\thetaSV + V_{12} - V_{14})
  \label{eq:rho-14} \\
\rho V_{21} &= \lambda_1 F(\thetaEC + V_{13} - V_{11})
  (V_{23} - V_{21})
  + \lambda_2 e(\thetaEC + V_{22} - V_{21})
  \label{eq:rho-21} \\
\rho V_{22} &= \thetaM
  + \lambda_1 F(\thetaEC + V_{14} - V_{12})
  (V_{24} - V_{22})
  + \lambda_2 e(\thetaSV + V_{21} - V_{22})
  \label{eq:rho-22} \\
\rho V_{23} &= \lambda_1 F(\thetaSV + V_{11} - V_{13})
  (V_{21} - V_{23})
  + \lambda_2 e(\thetaEC + V_{24} - V_{23})
  \label{eq:rho-23} \\
\rho V_{24} &= \thetaM + \thetaC
  + \lambda_1 F(\thetaSV + V_{12} - V_{14})
  (V_{22} - V_{24})
  + \lambda_2 e(\thetaSV + V_{23} - V_{24})
  \label{eq:rho-24}
\end{align}
Each equation has up to three terms: a flow payoff
(zero when the firm is inactive), a rival-transition
term (weighted by $\lambda_{-i}$), and a surplus term
(weighted by $\lambda_i$).  The structural symmetry
between the two firms is apparent: \eqref{eq:rho-13} for firm~1
(monopolist in $x_3 = (1,0)$) has the same structure as
\eqref{eq:rho-22} for firm~2 (monopolist in $x_2 = (0,1)$),
with $\lambda_1$ and $\lambda_2$ interchanged.

\section{Analytical Jacobian}
\label{sec:jacobian}

For computational efficiency, we derive the analytical Jacobian of the
equilibrium system.
Throughout this appendix we impose the logit specification of
Assumption~\ref{assp:ev1}, as maintained in the numerical work,
with $F = \Lambda$ and $e(z) = \log[1 + \exp(z)]$.
The equilibrium conditions $H(V) = V - T(V) = 0$ have Jacobian
$J = I - DT$, where $DT$ is the Jacobian of $T$.
The Jacobian has block structure
\begin{equation*}
DT = \begin{bmatrix} DT_{11} & DT_{12} \\ DT_{21} &
  DT_{22} \end{bmatrix}
\end{equation*}
where $DT_{im} = \frac{\partial T_i}{\partial V_m}$ (each block is $4 \times 4$).

We consider each block in turn.
The block $DT_{11} = \frac{\partial T_1}{\partial V_1}$ contains the
derivatives of firm~1's values with respect to firm~1's own values.
Recalling the definition of $T_1$ in~\eqref{eq:bellman-firm1} and differentiating
gives:
\begin{equation*}
DT_{11} = \frac{1}{\rho + \eta}\left[ \lambda_2 \Sigma_2(V_2) + \lambda_1 \frac{\partial E_1}{\partial V_1} \right].
\end{equation*}
To express this matrix concisely, we write out the switching
probabilities $\sigma_{ik}$ of~\eqref{eq:ccp} for firm $i$ in state
$k$ in full:
\begin{align*}
\sigma_{11} &= \Lambda(\thetaEC + V_{13} - V_{11}), \quad \sigma_{12} = \Lambda(\thetaEC + V_{14} - V_{12}), \\
\sigma_{13} &= \Lambda(\thetaSV + V_{11} - V_{13}), \quad \sigma_{14} = \Lambda(\thetaSV + V_{12} - V_{14}), \\
\sigma_{21} &= \Lambda(\thetaEC + V_{22} - V_{21}), \quad \sigma_{22} = \Lambda(\thetaSV + V_{21} - V_{22}), \\
\sigma_{23} &= \Lambda(\thetaEC + V_{24} - V_{23}), \quad \sigma_{24} = \Lambda(\thetaSV + V_{23} - V_{24}).
\end{align*}
To keep the notation compact, we write $\tilde{\sigma}_{ik} \equiv 1 - \sigma_{ik}$
for the continuation probability of firm $i$ in state $x_k$, so that
$\sigma_{ik} + \tilde{\sigma}_{ik} = 1$.

The first term, $\lambda_2 \Sigma_2(V_2)$, comes from differentiating
the matrix-vector product $\Sigma_2(V_2) V_1$ with respect to $V_1$.
Since $\Sigma_2$ depends only on $V_2$, we have
$\frac{\partial}{\partial V_1}[\Sigma_2(V_2) V_1] = \Sigma_2(V_2)$.
The second term involves derivatives of the social surplus.
For $E_{1k} = \log[\exp(\psi_{1k} + V_{1,\ell(1,k)}) + \exp(V_{1k})]$,
where $\ell(1,k)$ is the destination state when firm~1 switches from state $k$,
we apply Lemma~\ref{lem:exchangeable}(\ref{lem:ex-gradient})
with $F = \Lambda$, $z_1 = \psi_{1k} + V_{1,\ell(1,k)}$, and $z_2 = V_{1k}$:
\begin{align*}
\frac{\partial E_{1k}}{\partial V_{1k}} &= 1 - \Lambda(\psi_{1k} + V_{1,\ell(1,k)} - V_{1k}) = \tilde{\sigma}_{1k}, \\
\frac{\partial E_{1k}}{\partial V_{1,\ell(1,k)}} &= \Lambda(\psi_{1k} + V_{1,\ell(1,k)} - V_{1k}) = \sigma_{1k}, \\
\frac{\partial E_{1k}}{\partial V_{1k'}} &= 0 \quad \text{for } k' \neq k, \ell(1,k).
\end{align*}
Thus each row of $\frac{\partial E_1}{\partial V_1}$ has exactly two
nonzero entries: the continuation probability $\tilde{\sigma}_{1k}$ on
the diagonal and the switching probability $\sigma_{1k}$ at
column~$\ell(1,k)$.
The complete $4 \times 4$ matrix structure is:
\begin{equation*}
\frac{\partial E_1}{\partial V_1} = \begin{bmatrix}
\tilde{\sigma}_{11} & 0 & \sigma_{11} & 0 \\
0 & \tilde{\sigma}_{12} & 0 & \sigma_{12} \\
\sigma_{13} & 0 & \tilde{\sigma}_{13} & 0 \\
0 & \sigma_{14} & 0 & \tilde{\sigma}_{14}
\end{bmatrix}.
\end{equation*}
Note that by Lemma~\ref{lem:exchangeable}(\ref{lem:ex-gradient}), each row sums to 1.
Indeed, comparing with~\eqref{eq:Sigma_1}, the surplus derivative
matrix is the transition matrix itself,
$\partial E_1/\partial V_1 = \Sigma_1(V_1)$, with the $F(\cdot)$
entries of~\eqref{eq:Sigma_1} written here as the switching
probabilities $\sigma_{1k}$.

Combining the two terms, the complete expression for $DT_{11}$ is:
\begin{equation*}
DT_{11} = \frac{1}{\rho + \eta}\left[\begin{smallmatrix}
\lambda_2 \tilde{\sigma}_{21} + \lambda_1 \tilde{\sigma}_{11} & \lambda_2 \sigma_{21} & \lambda_1 \sigma_{11} & 0 \\
\lambda_2 \sigma_{22} & \lambda_2 \tilde{\sigma}_{22} + \lambda_1 \tilde{\sigma}_{12} & 0 & \lambda_1 \sigma_{12} \\
\lambda_1 \sigma_{13} & 0 & \lambda_2 \tilde{\sigma}_{23} + \lambda_1 \tilde{\sigma}_{13} & \lambda_2 \sigma_{23} \\
0 & \lambda_1 \sigma_{14} & \lambda_2 \sigma_{24} & \lambda_2 \tilde{\sigma}_{24} + \lambda_1 \tilde{\sigma}_{14}
\end{smallmatrix}\right].
\end{equation*}

Next, we turn to the block
$DT_{12} = \frac{\partial T_1}{\partial V_2}$, the derivative of firm~1's
values with respect to firm~2's values.
From the definition of $T_1$, we can see that this involves
differentiating $\Sigma_2(V_2) V_1$ with respect to $V_2$.
The $(k,l)$ element of $DT_{12}$ is given by:
\begin{equation*}
(DT_{12})_{kl} = \frac{\lambda_2}{\rho + \eta} \frac{\partial}{\partial V_{2l}}\left[(\Sigma_2 V_1)_k\right]
       = \frac{\lambda_2}{\rho + \eta} \sum_{k'=1}^{4} \frac{\partial (\Sigma_2)_{kk'}}{\partial V_{2l}} V_{1k'}.
\end{equation*}
For each element of $\Sigma_2$ that depends on $V_2$, we use $\Lambda'(z) = \Lambda(z)(1-\Lambda(z))$.
The derivatives of the choice probabilities are:
\begin{align*}
\frac{\partial \sigma_{2k}}{\partial V_{2l}} &= \begin{cases}
\sigma_{2k} \tilde{\sigma}_{2k} & \text{if } V_{2l} \text{ appears with a positive sign in } \sigma_{2k} \\
-\sigma_{2k} \tilde{\sigma}_{2k} & \text{if } V_{2l} \text{ appears with a negative sign in } \sigma_{2k} \\
0 & \text{otherwise}.
\end{cases}
\end{align*}

Since rows 1 and 2 of $\Sigma_2$ depend only on $(V_{21}, V_{22})$ and
rows 3 and 4 depend only on $(V_{23}, V_{24})$, the complete
$4 \times 4$ matrix is:
\begin{equation*}
DT_{12} = \frac{\lambda_2}{\rho + \eta}\left[\begin{smallmatrix}
\sigma_{21} \tilde{\sigma}_{21}(V_{11} - V_{12}) & \sigma_{21} \tilde{\sigma}_{21}(V_{12} - V_{11}) & 0 & 0 \\
\sigma_{22} \tilde{\sigma}_{22}(V_{11} - V_{12}) & \sigma_{22} \tilde{\sigma}_{22}(V_{12} - V_{11}) & 0 & 0 \\
0 & 0 & \sigma_{23} \tilde{\sigma}_{23}(V_{13} - V_{14}) & \sigma_{23} \tilde{\sigma}_{23}(V_{14} - V_{13}) \\
0 & 0 & \sigma_{24} \tilde{\sigma}_{24}(V_{13} - V_{14}) & \sigma_{24} \tilde{\sigma}_{24}(V_{14} - V_{13})
\end{smallmatrix}\right].
\end{equation*}
Each entry is weighted by firm~2's switching probability sensitivity
$\sigma_{2k} \tilde{\sigma}_{2k}$ times the difference in firm~1's values
that appears in row $k$ of $\Sigma_2$.

By symmetry with $DT_{12}$, the block $DT_{21} = \frac{\partial T_2}{\partial V_1}$ has $(k,l)$ element:
\begin{equation*}
(DT_{21})_{kl} = \frac{\lambda_1}{\rho + \eta} \sum_{k'=1}^{4} \frac{\partial (\Sigma_1)_{kk'}}{\partial V_{1l}} V_{2k'}.
\end{equation*}
Since rows 1 and 3 of $\Sigma_1$ depend only on $(V_{11}, V_{13})$ and
rows 2 and 4 depend only on $(V_{12}, V_{14})$, the complete
$4 \times 4$ matrix is:
\begin{equation*}
DT_{21} = \frac{\lambda_1}{\rho + \eta}\left[\begin{smallmatrix}
\sigma_{11} \tilde{\sigma}_{11}(V_{21} - V_{23}) & 0 & \sigma_{11} \tilde{\sigma}_{11}(V_{23} - V_{21}) & 0 \\
0 & \sigma_{12} \tilde{\sigma}_{12}(V_{22} - V_{24}) & 0 & \sigma_{12} \tilde{\sigma}_{12}(V_{24} - V_{22}) \\
\sigma_{13} \tilde{\sigma}_{13}(V_{21} - V_{23}) & 0 & \sigma_{13} \tilde{\sigma}_{13}(V_{23} - V_{21}) & 0 \\
0 & \sigma_{14} \tilde{\sigma}_{14}(V_{22} - V_{24}) & 0 & \sigma_{14} \tilde{\sigma}_{14}(V_{24} - V_{22})
\end{smallmatrix}\right].
\end{equation*}
Each entry is weighted by firm~1's switching probability sensitivity
$\sigma_{1k} \tilde{\sigma}_{1k}$ times the difference in firm~2's
values that appears in row $k$ of $\Sigma_1$.

For the block $DT_{22} = \frac{\partial T_2}{\partial V_2}$, from $T_2 = \frac{1}{\rho + \eta}[u_2 + \lambda_1 \Sigma_1(V_1) V_2 + \lambda_2 E_2(V_2)]$, by symmetry with $DT_{11}$:
\begin{equation*}
DT_{22} = \frac{1}{\rho + \eta}\left[ \lambda_1 \Sigma_1(V_1) + \lambda_2 \frac{\partial E_2}{\partial V_2} \right]
\end{equation*}
where the social surplus derivative matrix for firm~2 is:
\begin{equation*}
\frac{\partial E_2}{\partial V_2} = \begin{bmatrix}
\tilde{\sigma}_{21} & \sigma_{21} & 0 & 0 \\
\sigma_{22} & \tilde{\sigma}_{22} & 0 & 0 \\
0 & 0 & \tilde{\sigma}_{23} & \sigma_{23} \\
0 & 0 & \sigma_{24} & \tilde{\sigma}_{24}
\end{bmatrix}.
\end{equation*}
Combining the two terms, the complete expression for $DT_{22}$ is:
\begin{equation*}
DT_{22} = \frac{1}{\rho + \eta}\left[\begin{smallmatrix}
\lambda_1 \tilde{\sigma}_{11} + \lambda_2 \tilde{\sigma}_{21} & \lambda_2 \sigma_{21} & \lambda_1 \sigma_{11} & 0 \\
\lambda_2 \sigma_{22} & \lambda_1 \tilde{\sigma}_{12} + \lambda_2 \tilde{\sigma}_{22} & 0 & \lambda_1 \sigma_{12} \\
\lambda_1 \sigma_{13} & 0 & \lambda_1 \tilde{\sigma}_{13} + \lambda_2 \tilde{\sigma}_{23} & \lambda_2 \sigma_{23} \\
0 & \lambda_1 \sigma_{14} & \lambda_2 \sigma_{24} & \lambda_1 \tilde{\sigma}_{14} + \lambda_2 \tilde{\sigma}_{24}
\end{smallmatrix}\right].
\end{equation*}
Since $\partial E_2/\partial V_2 = \Sigma_2(V_2)$ as well,
comparing with~\eqref{eq:Sigma_2}, the two diagonal blocks coincide:
$DT_{22} = \frac{1}{\rho + \eta}[\lambda_1 \Sigma_1(V_1)
+ \lambda_2 \Sigma_2(V_2)] = DT_{11}$.

\bibliographystyle{chicago}
\bibliography{ctunique}

\newcommand{\noop}[1]{}
\begin{thebibliography}{}

\bibitem[\protect\citeauthoryear{Abbring and Campbell}{Abbring and
  Campbell}{2010}]{abbring-campbell-2010}
Abbring, J.~H. and J.~R. Campbell (2010).
\newblock Last-in first-out oligopoly dynamics.
\newblock {\em Econometrica\/}~{\em 78}, 1491--1527.

\bibitem[\protect\citeauthoryear{Abbring, Campbell, Tilly, and Yang}{Abbring
  et~al.}{2018}]{abbring-campbell-tilly-yang-2018}
Abbring, J.~H., J.~R. Campbell, J.~Tilly, and N.~Yang (2018).
\newblock Very simple {Markov}-perfect industry dynamics: Theory.
\newblock {\em Econometrica\/}~{\em 86}, 721--735.

\bibitem[\protect\citeauthoryear{Aguirregabiria}{Aguirregabiria}{2012}]{aguirregabiria-2012}
Aguirregabiria, V. (2012).
\newblock A method for implementing counterfactual experiments in models with
  multiple equilibria.
\newblock {\em Economics Letters\/}~{\em 114}, 190--194.

\bibitem[\protect\citeauthoryear{Aguirregabiria, Collard-Wexler, and
  Ryan}{Aguirregabiria et~al.}{2021}]{aguirregabiria-collardwexler-ryan-2021}
Aguirregabiria, V., A.~Collard-Wexler, and S.~P. Ryan (2021).
\newblock Dynamic games in empirical industrial organization.
\newblock In K.~Ho, A.~Horta\c{c}su, and A.~Lizzeri (Eds.), {\em Handbook of
  Industrial Organization}, Volume~4, Chapter~4, pp.\  225--343. Elsevier.

\bibitem[\protect\citeauthoryear{Aguirregabiria and Marcoux}{Aguirregabiria and
  Marcoux}{2021}]{aguirregabiria-marcoux-2021}
Aguirregabiria, V. and M.~Marcoux (2021).
\newblock Imposing equilibrium restrictions in the estimation of dynamic
  discrete games.
\newblock {\em Quantitative Economics\/}~{\em 12\/}(4), 1223--1271.

\bibitem[\protect\citeauthoryear{Aguirregabiria and Mira}{Aguirregabiria and
  Mira}{2002}]{aguirregabiria-mira-2002}
Aguirregabiria, V. and P.~Mira (2002).
\newblock Swapping the nested fixed point algorithm: A class of estimators for
  discrete {Markov} decision models.
\newblock {\em Econometrica\/}~{\em 70}, 1519--1543.

\bibitem[\protect\citeauthoryear{Aguirregabiria and Mira}{Aguirregabiria and
  Mira}{2007}]{aguirregabiria-mira-2007}
Aguirregabiria, V. and P.~Mira (2007).
\newblock Sequential estimation of dynamic discrete games.
\newblock {\em Econometrica\/}~{\em 75}, 1--53.

\bibitem[\protect\citeauthoryear{Aguirregabiria and Suzuki}{Aguirregabiria and
  Suzuki}{2014}]{aguirregabiria-suzuki-2014}
Aguirregabiria, V. and J.~Suzuki (2014).
\newblock Identification and counterfactuals in dynamic models of market entry
  and exit.
\newblock {\em Quantitative Marketing and Economics\/}~{\em 12}, 267--304.

\bibitem[\protect\citeauthoryear{Aradillas-L{\'o}pez}{Aradillas-L{\'o}pez}{2020}]{aradillas-lopez-2020}
Aradillas-L{\'o}pez, A. (2020).
\newblock The econometrics of static games.
\newblock {\em Annual Review of Economics\/}~{\em 12\/}(1), 135--165.

\bibitem[\protect\citeauthoryear{Arcidiacono, Bayer, Blevins, and
  Ellickson}{Arcidiacono et~al.}{2016}]{abbe-2016}
Arcidiacono, P., P.~Bayer, J.~R. Blevins, and P.~B. Ellickson (2016).
\newblock Estimation of dynamic discrete choice models in continuous time with
  an application to retail competition.
\newblock {\em The Review of Economic Studies\/}~{\em 83}, 889--931.

\bibitem[\protect\citeauthoryear{Bajari, Benkard, and Levin}{Bajari
  et~al.}{2007}]{bajari-benkard-levin-2007}
Bajari, P., C.~L. Benkard, and J.~Levin (2007).
\newblock Estimating dynamic models of imperfect competition.
\newblock {\em Econometrica\/}~{\em 75}, 1331--1370.

\bibitem[\protect\citeauthoryear{Berry}{Berry}{1992}]{berry-1992}
Berry, S.~T. (1992).
\newblock Estimation of a model of entry in the airline industry.
\newblock {\em Econometrica\/}~{\em 60}, 889--917.

\bibitem[\protect\citeauthoryear{Besanko, Doraszelski, Kryukov, and
  Satterthwaite}{Besanko et~al.}{2010}]{besanko10learning}
Besanko, D., U.~Doraszelski, Y.~Kryukov, and M.~Satterthwaite (2010).
\newblock Learning-by-doing, organizational forgetting, and industry dynamics.
\newblock {\em Econometrica\/}~{\em 78}, 453--508.

\bibitem[\protect\citeauthoryear{Blevins}{Blevins}{2026}]{ctgames}
Blevins, J.~R. (2026).
\newblock Identification and estimation of continuous-time dynamic discrete
  choice games.
\newblock {\em Quantitative Economics\/}~{\em 17}, 254--296.

\bibitem[\protect\citeauthoryear{Blevins and Kim}{Blevins and
  Kim}{2024}]{blevins-kim-2024}
Blevins, J.~R. and M.~Kim (2024).
\newblock Nested pseudo likelihood estimation of continuous-time dynamic
  discrete games.
\newblock {\em Journal of Econometrics\/}~{\em 238}, 105576.

\bibitem[\protect\citeauthoryear{Borkovsky, Doraszelski, and Kryukov}{Borkovsky
  et~al.}{2010}]{borkovsky10homotopy}
Borkovsky, R.~N., U.~Doraszelski, and Y.~Kryukov (2010).
\newblock A user's guide to solving dynamic stochastic games using the homotopy
  method.
\newblock {\em Operations Research\/}~{\em 58}, 1116--1132.

\bibitem[\protect\citeauthoryear{Borkovsky, Doraszelski, and Kryukov}{Borkovsky
  et~al.}{2012}]{borkovsky-doraszelski-kryukov-2012}
Borkovsky, R.~N., U.~Doraszelski, and Y.~Kryukov (2012).
\newblock A dynamic quality ladder model with entry and exit: Exploring the
  equilibrium correspondence using the homotopy method.
\newblock {\em Quantitative Marketing and Economics\/}~{\em 10}, 197--229.

\bibitem[\protect\citeauthoryear{Bresnahan and Reiss}{Bresnahan and
  Reiss}{1991}]{bresnahan91empirical}
Bresnahan, T.~F. and P.~C. Reiss (1991).
\newblock Empirical models of discrete games.
\newblock {\em Journal of Econometrics\/}~{\em 48}, 57--81.

\bibitem[\protect\citeauthoryear{Bugni and Bunting}{Bugni and
  Bunting}{2021}]{bugni-bunting-2021}
Bugni, F.~A. and J.~Bunting (2021).
\newblock On the iterated estimation of dynamic discrete choice games.
\newblock {\em The Review of Economic Studies\/}~{\em 88}, 1031--1073.

\bibitem[\protect\citeauthoryear{Cabral and Riordan}{Cabral and
  Riordan}{1994}]{cabral-riordan-1994}
Cabral, L. M.~B. and M.~H. Riordan (1994).
\newblock The learning curve, market dominance, and predatory pricing.
\newblock {\em Econometrica\/}~{\em 62}, 1115--1140.

\bibitem[\protect\citeauthoryear{Ciliberto and Tamer}{Ciliberto and
  Tamer}{2009}]{ciliberto-tamer-2009}
Ciliberto, F. and E.~Tamer (2009).
\newblock Market structure and multiple equilibria in the airline industry.
\newblock {\em Econometrica\/}~{\em 77}, 1791--1828.

\bibitem[\protect\citeauthoryear{Curtat}{Curtat}{1996}]{curtat-1996}
Curtat, L.~O. (1996).
\newblock {Markov} equilibria of stochastic games with complementarities.
\newblock {\em Games and Economic Behavior\/}~{\em 17}, 177--199.

\bibitem[\protect\citeauthoryear{de~Paula}{de~Paula}{2013}]{depaula-2013}
de~Paula, A. (2013).
\newblock Econometric analysis of games with multiple equilibria.
\newblock {\em Annual Review of Economics\/}~{\em 5}, 107--131.

\bibitem[\protect\citeauthoryear{Dearing and Blevins}{Dearing and
  Blevins}{2025}]{dearing-blevins-2025}
Dearing, A. and J.~R. Blevins (2025).
\newblock Efficient and convergent sequential pseudo-likelihood estimation of
  dynamic discrete games.
\newblock {\em The Review of Economic Studies\/}~{\em 92}, 981--1021.

\bibitem[\protect\citeauthoryear{Doraszelski and Escobar}{Doraszelski and
  Escobar}{2010}]{doraszelski-escobar-2010}
Doraszelski, U. and J.~F. Escobar (2010).
\newblock A theory of regular {Markov} perfect equilibria in dynamic stochastic
  games: {Genericity}, stability, and purification.
\newblock {\em Theoretical Economics\/}~{\em 5}, 369--402.

\bibitem[\protect\citeauthoryear{Doraszelski and Judd}{Doraszelski and
  Judd}{2012}]{doraszelski-judd-2012}
Doraszelski, U. and K.~L. Judd (2012).
\newblock Avoiding the curse of dimensionality in dynamic stochastic games.
\newblock {\em Quantitative Economics\/}~{\em 3}, 53--93.

\bibitem[\protect\citeauthoryear{Doraszelski and Satterthwaite}{Doraszelski and
  Satterthwaite}{2010}]{doraszelski-satterthwaite-2010}
Doraszelski, U. and M.~Satterthwaite (2010).
\newblock Computable {Markov}-perfect industry dynamics.
\newblock {\em The {RAND} Journal of Economics\/}~{\em 41}, 215--243.

\bibitem[\protect\citeauthoryear{Durrett}{Durrett}{2019}]{durrett-2019}
Durrett, R. (2019).
\newblock {\em Probability: Theory and Examples\/} (Fifth ed.).
\newblock Cambridge University Press.

\bibitem[\protect\citeauthoryear{Egesdal, Lai, and Su}{Egesdal
  et~al.}{2015}]{egesdal15dynamic}
Egesdal, M., Z.~Lai, and C.-L. Su (2015).
\newblock Estimating dynamic discrete-choice games of incomplete information.
\newblock {\em Quantitative Economics\/}~{\em 6}, 567--597.

\bibitem[\protect\citeauthoryear{Ericson and Pakes}{Ericson and
  Pakes}{1995}]{ericson-pakes-1995}
Ericson, R. and A.~Pakes (1995).
\newblock {Markov}-perfect industry dynamics: A framework for empirical work.
\newblock {\em The Review of Economic Studies\/}~{\em 62}, 53--82.

\bibitem[\protect\citeauthoryear{Esp\'{i}n-S\'{a}nchez, Parra, and
  Wang}{Esp\'{i}n-S\'{a}nchez et~al.}{2023}]{espin-sanchez-parra-wang-2023}
Esp\'{i}n-S\'{a}nchez, J.-A., A.~Parra, and Y.~Wang (2023).
\newblock Equilibrium uniqueness in entry games with private information.
\newblock {\em The {RAND} Journal of Economics\/}~{\em 54}, 512--540.

\bibitem[\protect\citeauthoryear{Haller and Lagunoff}{Haller and
  Lagunoff}{2000}]{haller-lagunoff-2000}
Haller, H. and R.~Lagunoff (2000).
\newblock Genericity and {Markovian} behavior in stochastic games.
\newblock {\em Econometrica\/}~{\em 68}, 1231--1248.

\bibitem[\protect\citeauthoryear{Herings and Peeters}{Herings and
  Peeters}{2004}]{herings-peeters-2004}
Herings, P. J.-J. and R.~J. A.~P. Peeters (2004).
\newblock Stationary equilibria in stochastic games: Structure, selection, and
  computation.
\newblock {\em Journal of Economic Theory\/}~{\em 118}, 32--60.

\bibitem[\protect\citeauthoryear{Horst}{Horst}{2005}]{horst-2005}
Horst, U. (2005).
\newblock Stationary equilibria in discounted stochastic games with weakly
  interacting players.
\newblock {\em Games and Economic Behavior\/}~{\em 51\/}(1), 83--108.

\bibitem[\protect\citeauthoryear{Hotz and Miller}{Hotz and
  Miller}{1993}]{hotz93conditional}
Hotz, V.~J. and R.~A. Miller (1993).
\newblock Conditional choice probabilities and the estimation of dynamic
  models.
\newblock {\em The Review of Economic Studies\/}~{\em 60}, 497--529.

\bibitem[\protect\citeauthoryear{Iskhakov, Rust, and Schjerning}{Iskhakov
  et~al.}{2016}]{iskhakov16recursive}
Iskhakov, F., J.~Rust, and B.~Schjerning (2016).
\newblock Recursive lexicographical search: Finding all {Markov} perfect
  equilibria of finite state directional dynamic games.
\newblock {\em The Review of Economic Studies\/}~{\em 83}, 658--703.

\bibitem[\protect\citeauthoryear{Jia}{Jia}{2008}]{jia-2008}
Jia, P. (2008).
\newblock What happens when {Wal-Mart} comes to town: An empirical analysis of
  the discount retailing industry.
\newblock {\em Econometrica\/}~{\em 76}, 1263--1316.

\bibitem[\protect\citeauthoryear{Jun and Pinkse}{Jun and
  Pinkse}{2020}]{jun-pinkse-2020}
Jun, S.~J. and J.~Pinkse (2020).
\newblock Counterfactual prediction in complete information games: Point
  prediction under partial identification.
\newblock {\em Journal of Econometrics\/}~{\em 216}, 394--429.

\bibitem[\protect\citeauthoryear{Kasahara and Shimotsu}{Kasahara and
  Shimotsu}{2008}]{kasahara-shimotsu-2008}
Kasahara, H. and K.~Shimotsu (2008).
\newblock Pseudo-likelihood estimation and bootstrap inference for structural
  discrete {Markov} decision models.
\newblock {\em Journal of Econometrics\/}~{\em 146\/}(1), 92--106.

\bibitem[\protect\citeauthoryear{Kasahara and Shimotsu}{Kasahara and
  Shimotsu}{2012}]{kasahara-shimotsu-2012}
Kasahara, H. and K.~Shimotsu (2012).
\newblock Sequential estimation of structural models with a fixed point
  constraint.
\newblock {\em Econometrica\/}~{\em 80}, 2303--2319.

\bibitem[\protect\citeauthoryear{Magnac and Thesmar}{Magnac and
  Thesmar}{2002}]{magnac-thesmar-2002}
Magnac, T. and D.~Thesmar (2002).
\newblock Identifying dynamic discrete decision processes.
\newblock {\em Econometrica\/}~{\em 70}, 801--816.

\bibitem[\protect\citeauthoryear{McFadden}{McFadden}{1978}]{mcfadden-1978-gev}
McFadden, D.~L. (1978).
\newblock Modelling the choice of residential location.
\newblock In A.~Karlqvist, L.~Lundqvist, F.~Snickars, and J.~W. Weibull (Eds.),
  {\em Spatial Interaction Theory and Planning Models}, Volume~3 of {\em
  Studies in Regional Science and Urban Economics}, pp.\  75--96.
  Am\-ster\-dam: North-Holland.

\bibitem[\protect\citeauthoryear{McFadden}{McFadden}{1981}]{mcfadden81econometric}
McFadden, D.~L. (1981).
\newblock Econometric models of probabilistic choice.
\newblock In C.~F. Manski and D.~L. McFadden (Eds.), {\em Structural Analysis
  of Discrete Data with Econometric Applications}, pp.\  198--272. Cam\-bridge,
  MA: MIT Press.

\bibitem[\protect\citeauthoryear{McKelvey and McLennan}{McKelvey and
  McLennan}{1996}]{mckelvey-mclennan-1996}
McKelvey, R.~D. and A.~McLennan (1996).
\newblock Computation of equilibria in finite games.
\newblock In H.~M. Amman, D.~A. Kendrick, and J.~Rust (Eds.), {\em Handbook of
  Computational Economics}, Volume~1, pp.\  87--142. Amsterdam: Elsevier.

\bibitem[\protect\citeauthoryear{Milgrom and Weber}{Milgrom and
  Weber}{1985}]{milgrom-weber-1985}
Milgrom, P.~R. and R.~J. Weber (1985).
\newblock Distributional strategies for games with incomplete information.
\newblock {\em Mathematics of Operations Research\/}~{\em 10}, 619--632.

\bibitem[\protect\citeauthoryear{Ortega and Rheinboldt}{Ortega and
  Rheinboldt}{1970}]{ortega-rheinboldt-1970}
Ortega, J. and W.~Rheinboldt (1970).
\newblock {\em Iterative Solution of Nonlinear Equations in Several Variables}.
\newblock Academic Press.

\bibitem[\protect\citeauthoryear{Otsu and Pesendorfer}{Otsu and
  Pesendorfer}{2023}]{otsu-pesendorfer-2023}
Otsu, T. and M.~Pesendorfer (2023).
\newblock Equilibrium multiplicity in dynamic games: Testing and estimation.
\newblock {\em The Econometrics Journal\/}~{\em 26}, C26--C42.

\bibitem[\protect\citeauthoryear{Otsu, Pesendorfer, and Takahashi}{Otsu
  et~al.}{2016}]{otsu-pesendorfer-takahashi-2016}
Otsu, T., M.~Pesendorfer, and Y.~Takahashi (2016).
\newblock Pooling data across markets in dynamic {Markov} games.
\newblock {\em Quantitative Economics\/}~{\em 7}, 523--559.

\bibitem[\protect\citeauthoryear{Pesendorfer and Schmidt-Dengler}{Pesendorfer
  and Schmidt-Dengler}{2008}]{pesendorfer08asymptotic}
Pesendorfer, M. and P.~Schmidt-Dengler (2008).
\newblock Asymptotic least squares estimators for dynamic games.
\newblock {\em The Review of Economic Studies\/}~{\em 75}, 901--928.

\bibitem[\protect\citeauthoryear{Pesendorfer and Schmidt-Dengler}{Pesendorfer
  and Schmidt-Dengler}{2010}]{pesendorfer-schmidt-dengler-2010}
Pesendorfer, M. and P.~Schmidt-Dengler (2010).
\newblock Sequential estimation of dynamic discrete games: A comment.
\newblock {\em Econometrica\/}~{\em 78}, 833--842.

\bibitem[\protect\citeauthoryear{Puterman}{Puterman}{2005}]{puterman05markov}
Puterman, M.~L. (2005).
\newblock {\em {Markov} Decision Processes: Discrete Stochastic Dynamic
  Programming}.
\newblock Wiley-Interscience.

\bibitem[\protect\citeauthoryear{Rust}{Rust}{1987}]{rust87optimal}
Rust, J. (1987).
\newblock Optimal replacement of {GMC} bus engines: An empirical model of
  {Harold Zurcher}.
\newblock {\em Econometrica\/}~{\em 55}, 999--1013.

\bibitem[\protect\citeauthoryear{Rust}{Rust}{1994}]{rust-1994-structural}
Rust, J. (1994).
\newblock Structural estimation of {Markov} decision processes.
\newblock In R.~F. Engle and D.~L. McFadden (Eds.), {\em Handbook of
  Econometrics}, Volume~4, pp.\  3081--3143. Am\-ster\-dam: North-Holland.

\bibitem[\protect\citeauthoryear{Shapley}{Shapley}{1953}]{shapley-1953}
Shapley, L.~S. (1953).
\newblock Stochastic games.
\newblock {\em Proceedings of the National Academy of Sciences\/}~{\em 39},
  1095--1100.

\bibitem[\protect\citeauthoryear{Su}{Su}{2014}]{su-2014}
Su, C.-L. (2014).
\newblock Estimating discrete-choice games of incomplete information: Simple
  static examples.
\newblock {\em Quantitative Marketing and Economics\/}~{\em 12\/}(2), 167--207.

\bibitem[\protect\citeauthoryear{Su and Judd}{Su and Judd}{2012}]{su-judd-2012}
Su, C.-L. and K.~L. Judd (2012).
\newblock Constrained optimization approaches to estimation of structural
  models.
\newblock {\em Econometrica\/}~{\em 80}, 2213--2230.

\bibitem[\protect\citeauthoryear{Sweeting}{Sweeting}{2009}]{sweeting-2009}
Sweeting, A. (2009).
\newblock The strategic timing incentives of commercial radio stations: An
  empirical analysis using multiple equilibria.
\newblock {\em The {RAND} Journal of Economics\/}~{\em 40}, 710--742.

\bibitem[\protect\citeauthoryear{Tamer}{Tamer}{2003}]{tamer-2003}
Tamer, E. (2003).
\newblock Incomplete simultaneous discrete response model with multiple
  equilibria.
\newblock {\em The Review of Economic Studies\/}~{\em 70}, 147--165.

\end{thebibliography}

\end{document}